\documentclass[table,11pt,a4paper]{article}
\usepackage{amsthm,amsmath,amssymb,fullpage,authblk,xr}
\usepackage{multirow,mathtools,booktabs,enumerate,hyperref}
\usepackage{epsfig,amsfonts}
\usepackage{bbold,dsfont} 
\usepackage{xcolor, colortbl}
\definecolor{Gray}{gray}{0.82}
\definecolor{Greenish}{rgb}{0.8,1,.7}
\usepackage{hhline}

\usepackage{dblfloatfix}

\usepackage{setspace}
\setstretch{1.2}

\newtheorem{theorem}{Theorem}[section] 

\newtheorem{lemma}[theorem]{Lemma}

\newtheorem{claim}[theorem]{Claim}
\newtheorem{corollary}[theorem]{Corollary}

\newtheorem{proposition}[theorem]{Proposition}

\newtheorem{definition}[theorem]{Definition}

\renewcommand{\deg}{d} 

\renewcommand{\Pr}[1]{\mathds{P}\left[\,#1\,\right]}
\newcommand{\Pru}[2]{\mathds{P}_{#1}\!\left[#2\right]}
\newcommand{\Ex}[1]{\mathds{E} \left[\,#1\,\right]}
\newcommand{\Exu}[2]{\mathds{E}_{#1} \left[\,#2\,\right]}

\newcommand{\comment}[1]{}
\newcommand{\BO}[1]{\mathcal{O}\!\left(#1\right)} 
 
\newcommand{\BT}[1]{\Theta\!\left(#1\right)}
\newcommand{\lo}[1]{o\!\left(#1\right)}  
\newcommand{\BOhm}[1]{\Omega\!\left(#1\right)}

\newcommand{\eps}{\varepsilon}

\allowdisplaybreaks

\renewcommand{\hat}{\widehat} 
\newcommand{\dmax}{d_{\mathsf{max}}}
\newcommand{\dist}{\operatorname{dist}}

\newcommand{\dmin}{d_{\mathsf{min}}}
\newenvironment{poc}{\begin{proof}[Proof of Claim.]}{\end{proof}}
\newenvironment{pocd}[1]{\begin{proof}[Proof of {C}laim #1.]}{\end{proof}}

\newcommand{\tsep}{t_{\mathsf{sep}}}
\newcommand{\tmix}{t_{\mathsf{mix}}}
\newcommand{\thit}{t_{\mathsf{hit}}}
\newcommand{\trel}{t_{\mathsf{rel}}}
\newcommand{\tcov}{t_{\mathsf{cov}}}

\newcommand{\geo}[1]{\operatorname{Geo}\!\left(#1\right)}

\newcommand{\bin}[2]{\operatorname{Bin}\!\left( #1,#2 \right)}

\newcommand{\E}{\mathds{E}}
\newcommand{\bigO}{\mathcal O}
\newcommand{\ind}[1]{\mbox{{\large 1}} _{#1}}

\newcommand{\mix}{\mathsf{mix}}

\newcommand{\cov}{\mathsf{cov}}

\usepackage{setspace,xcolor}

\setlength{\marginparwidth}{1.6cm}

\title{\textbf{Multiple Random Walks on Graphs: \\ Mixing Few to Cover Many}\thanks{An extended abstract of this paper has appeared at ICALP 2021 \cite{ICALPus}}} 

\author[1]{Nicol\'{a}s Rivera}
\author[2]{Thomas Sauerwald}
\author[3]{John Sylvester}
\affil[1]{\small Universidad de Valpara\'iso, Vapara\'iso, Chile\\
	\texttt{nicolas.rivera@uv.cl}}

\affil[2]{\small University of Cambridge, Cambridge, United Kingdom\\
	\texttt{thomas.sauerwald@cl.cam.ac.uk}}
	 \affil[3]{\small University of Liverpool, Liverpool, United Kingdom\\
	\texttt{john.sylvester@liverpool.ac.uk}}

\date{}
\begin{document}
	\maketitle
	\begin{abstract}
			Random walks on graphs are an essential primitive for many randomised algorithms and stochastic processes. It is natural to ask how much can be gained by running $k$ multiple random walks independently and in parallel. 
	Although the cover time of multiple walks has been investigated for many natural networks, the problem of finding a general characterisation of multiple cover times for {\em worst-case} start vertices (posed by Alon, Avin, Kouck\'y, Kozma, Lotker, and Tuttle~in 2008) remains an open problem.
	
	First, we improve and tighten various bounds on the {\em stationary} cover time when $k$ random walks start from vertices sampled from the stationary distribution. For example, we prove an unconditional lower bound of $\Omega((n/k) \log n)$ on the stationary cover time, holding for any $n$-vertex graph $G$ and any $1 \leq k =o(n\log n )$. Secondly, we establish the {\em stationary} cover times of multiple walks on several fundamental networks up to constant factors. Thirdly, we present a framework characterising {\em worst-case} cover times in terms of {\em stationary} cover times and a novel, relaxed notion of mixing time for multiple walks called the {\em partial mixing time}. Roughly speaking, the partial mixing time only requires a specific portion of all random walks to be mixed. Using these new concepts, we can establish (or recover) the {\em worst-case} cover times for many networks including expanders, preferential attachment graphs, grids, binary trees and hypercubes. 
	\end{abstract}
	\noindent\textbf{Keywords:} Multiple Random Walks, Mixing Time, Cover Time, Markov Chains.\\
	\noindent\textbf{AMS MSC 2010:} 05C81, 60J10, 60J20, 68R10.
	\section{Introduction}
	
	A random walk on a graph is a stochastic process that at each time step chooses a neighbour of the current vertex as its next state. The fact that a random walk visits every vertex of a connected, undirected graph in polynomial time was first used to solve the undirected $s-t$ connectivity problem in logarithmic space~\cite{AKLLR}. Since then random walks have become a fundamental primitive in the design of randomised algorithms which feature in approximation algorithms and sampling~\cite{lovasz1993random,SNPT13}, load balancing \cite{load0,load1}, searching \cite{search1,search0}, resource location \cite{gossip}, property testing \cite{Testing,Minors,Minors2}, graph parameter estimation \cite{esti,EdgesTwitter} and biological applications \cite{Clementi0GN21,Korman18,Korman20}. 
	
	The fact that random walks are local and memoryless (Markov property) ensures they require very little space and are relatively unaffected by changes in the environment, e.g., dynamically evolving graphs or graphs with edge failures. These properties make random walks a natural candidate for parallelisation, where running parallel walks has the potential of lower time overheads. One early instance of this idea are
	space-time trade-offs for the undirected $s-t$ connectivity problem \cite{BroderSpaceTime,Feige97}. Other applications involving multiple random walks are sublinear algorithms \cite{CS10}, local clustering \cite{ACL06,ST13} or epidemic processes on networks \cite{LLM12,PPPU11}.
	
	Given the potential applications of multiple random walks in algorithms, it is important to understand fundamental properties of multiple random walks. The \emph{speed-up}, first introduced in \cite{Multi}, is the ratio of the worst-case cover time by a single random walk to the cover time of $k$ parallel walks. Following \cite{Multi} and subsequent works~\cite{ER09,ElsSau,IKPS17,KKPS,PatelCB16,S10} our understanding of when and why a speed-up is present has improved. In particular, various results in \cite{Multi,ER09,ElsSau} establish that as long as the lengths of the walks are not smaller than the mixing, the speed-up is linear in $k$. However, there are still many challenging open problems, for example, understanding the effect of different start vertices or characterising the magnitude of speed-up in terms of graph properties, a problem already stated in \cite{Multi}: ``\textit{...which leads us to wonder whether there is some other property of a graph that characterises the speed-up achieved by multiple random walks more crisply than hitting and mixing times}.'' Addressing the previous questions, we introduce new quantities and couplings for multiple random walks, that allow us to improve the state-of-the-art by refining, strengthening or extending results from previous works.

	While there is an extensive body of research on the foundations of (single) random walks (and Markov chains), it seems surprisingly hard to transfer these results and develop a systematic theory of \emph{multiple} random walks. One of the reasons is that processes involving multiple random walks often lead to questions about \emph{short} random walks, e.g., shorter than the  mixing time.
	Such short walks may also arise in applications including generating random walk samples in massively parallel systems \cite{LMOS20,SNPT13}, or in applications where random walk steps are expensive or subject to delays (e.g., when crawling social networks like Twitter \cite{EdgesTwitter}). The challenge
	of analysing \emph{short} random walks (shorter than mixing or hitting time) has been mentioned not only in the area of multiple cover times~(e.g., \cite[Sec.~6]{ER09}), but also in the context of concentration inequalities for random walks~\cite[p.~863]{Lez89} and property testing~\cite{CS10}.

	\subsection{Our Contribution}

	Our first set of results provide several tight bounds on $t_{\mathsf{cov}}^{(k)}(\pi)$ in general (connected) graphs, where $t_{\mathsf{cov}}^{(k)}(\pi)$ is the expected time for each vertex to be visited by at least one of $k$ independent walks each started from a vertex independently sampled  from the stationary distribution $\pi$.
	
	The main findings of Section \ref{ReturnSec} include:
	\begin{itemize}
		\item Proving general bounds of $\mathcal{O} \Bigl( \bigl(\frac{|E|}{k\dmin}\bigr)^2\log^2 n \Bigr)$, $\mathcal{O} \Bigl(  \frac{\max_{v\in V}\Exu{\pi}{\tau_v}}{k} \log n \Bigr)$ and $\BO{\frac{|E|\log n}{k\dmin \sqrt{1-\lambda_2} }}$ on $\tcov^{(k)}(\pi)$, where $\dmin$ is the minimum degree, $\Exu{\pi}{\tau_v}$ is the hitting time of $v\in V$ from a stationary start vertex and $\lambda_2$ is the second largest eigenvalue of the transition matrix of the walk. All three bounds are tight for certain graphs. The first bound improves over \cite{BroderSpaceTime}, the second result is a Matthew's type bound for multiple random walks, and the third yields tight bounds for non-regular expanders such as preferential attachment graphs.
		
		\item We prove that for any graph $G$ and $1\leq k=  o\!\left( n\log n\right)$,  $\tcov^{(k)}(\pi)=\Omega((n/k) \log n)$. Weaker versions of this bound were obtained in \cite{ElsSau}, holding only for certain values of $k$ or under additional assumptions on the mixing time. Our result matches the fundamental $\Omega(n \log n)$ lower bound for single random walks ($k=1$)~\cite{aldouslower,FeigeLow}, and generalises it in the sense that total amount of work by all $k$ stationary walks together for covering is always $\Omega(n \log n).$   We establish the $\Omega((n/k) \log n)$ bound by reducing the multiple walk process to a single, reversible Markov chain, and applying a general lower bound on stationary cover times~\cite{aldouslower}. 
		\item A technical tool that provides a bound on the lower tail of the cover time by $k$ walks from stationary for graphs with a large and  (relatively) symmetric set of hard to hit vertices (Lemma \ref{lemma:generalLowerboundpi}). When applied to the $2$d torus and binary trees this yields a tight lower bound.
	\end{itemize}

	In Section \ref{Sec:partialmixing} we introduce a novel quantity for multiple walks we call \emph{partial mixing}. Intuitively, instead of mixing all (or at least half) of the $k$ walks, we only need to mix a specified number $\tilde{k}$ of them. We put this idea on a more formal footing and prove $\min$-$\max$ theorems which relate worst-case cover times $t_{\mathsf{cov}}^{(k)}$ to partial mixing times $t_{\mathsf{mix}}^{(\tilde{k},k)}$ and stationary cover times:
	\begin{itemize}
		\item For any graph $G$ and any $1 \leq k \leq n$, we prove that: 
		\[
		t_{\mathsf{cov}}^{(k)} \leq 12 \cdot \min_{1 \leq \tilde{k} < k} \max \left(  t_{\mathsf{mix}}^{(\tilde{k},k)} , \tcov^{(\tilde{k})}(\pi)  \right).
		\]
	\end{itemize}
	For now, we omit details such as the definition of the partial mixing time $t_{\mathsf{mix}}^{(\tilde{k},k)} $ as well as some $\min$-$\max$ characterisations that serve as lower bounds  (these can be found in Section~\ref{Sec:partialmixing}). Intuitively these characterisations suggest that for any number of walks $k$, there is an ``optimal'' choice of $\tilde{k}$ so that one first waits until $\tilde{k}$ out of the $k$ walks are mixed, and then considers only these $\tilde{k}$ stationary walks when covering the remainder of the graph. 
	
	This argument involving mixing only some walks extends and generalises prior results that involve mixing all (or at least a constant portion) of the $k$ walks~\cite{Multi,ER09,ElsSau}. Previous approaches only imply a linear speed-up as long as the lengths of the walks are not shorter than the mixing time of a \emph{single} random walk. In contrast, our characterisation may still yield tight bounds on the cover time for random walks that are much shorter than the mixing time. 
	
	In Section~\ref{fundamental} we demonstrate how our insights can be used on several well-known graph classes. As a first step, we determine their stationary cover times; this is based on our bounds from Section~\ref{ReturnSec}. Secondly, we derive lower and upper bounds on the partial mixing times. Finally, with the stationary cover times and partial mixing times at hand, we can apply the characterisations from Section~\ref{Sec:partialmixing} to infer lower and upper bounds on the worst-case stationary times. For some of those graphs the worst-case cover times were already known before, while for, e.g., binary trees and preferential attachment graphs, our bounds are new.
	
	\begin{itemize}
		\item For the graph families of binary trees, cycles, $d$-dim. tori ($d=2$ and $d\geq3 $), hypercubes, cliques, and (possibly non-regular) expanders we determine the cover time up to constants, for both worst-case and stationary start vertices (see Table \ref{tbl:results} for the quantitative results).
	\end{itemize}
	We believe that this new methodology constitutes some progress towards the open question of Alon, Avin, Kouck\'y, Kozma, Lotker, \& Tuttle~\cite{Multi} concerning a characterisation of worst-case cover times.

	\subsection{Novelty of Our Techniques}  
	
	While a lot of the proof techniques in previous work \cite{Multi,ER09,ElsSau,S10} are based on direct arguments such as mixing time (or relaxation time), our work introduces a number of new methods which, to the best of our knowledge, have not been used in the analysis of cover time of multiple walks before. In particular, one important novel concept is the introduction of the so-called \emph{partial mixing time}. The idea is that instead of waiting for all (or a constant portion of) $k$  walks to mix, we can just mix some $\tilde{k}\leq k$ walks to reap the benefits of coupling these $\tilde{k}$ walks to stationary walks. This then presents a delicate balancing act where one must find an optimal $\tilde{k}$ minimising the overall bound on the cover time, for example in expanders the optimal $\tilde{k}$ is linear in $k$ whereas in binary trees it is approximately $\sqrt{k}$, and for the cycle it is roughly $\log k$. This turning point reveals something about the structure of the graph and our results relating partial mixing to hitting time of sets helps one find this. Another tool we frequently use is a reduction to random walks with geometric resets, similar to a PageRank process, which allows us to relate multiple  walks from stationary to a single reversible Markov chain.

	\bgroup
	\def\arraystretch{1.7}
	\begin{table}
		\small
		\centering
		\rule{0pt}{4ex}    
		\begin{tabular}{|l|c|c|c|c|c|}
			\hline 
			Graph & $\text{Cover} $ & $\text{Hitting} $ & $\text{Mixing}$ &\multicolumn{2}{c|}{$\text{$k$-Cover Time, where }2\leq k\leq n$} \\\cline{5-6} 
			family& $\displaystyle t_{\mathsf{cov}}$ & $t_{\mathsf{hit}}$ & $t_{\mathsf{mix}}$ & worst-case $\tcov^{(k)}$& From $\pi^k$, $\tcov^{(k)}(\pi)$  \\ \hline \hline \rule{0pt}{4ex}   
			
						Cycles & $n^2$ & $n^2$ & $n^2$ & \cellcolor{Gray}$\displaystyle \frac{n^2}{\log k}$& \cellcolor{Greenish}$\displaystyle \left(\frac{n}{k}\right)^2\log^2 k$ \\	  
			\hhline{|-|-|-|-|-|-|} \rule{0pt}{2.5ex} 
			
			& \multirow{4}{*}{$\displaystyle n \log^2 n $} &\multirow{4}{*}{$\displaystyle n\log n$} & \multirow{4}{*}{$\displaystyle n$} &\cellcolor{Greenish}$\displaystyle  (n/k) \log^ 2 n$ & \cellcolor{Greenish}\\ 
			Binary	& & & & \textbf{if} \cellcolor{Greenish}$\displaystyle k \leq \log^2 n$. &\cellcolor{Greenish} \\  \cline{5-5}
			trees & & & & $ \cellcolor{Greenish}\displaystyle  (n/\sqrt{k}) \log n  $  &\cellcolor{Greenish}  \\ 
			& & & &\cellcolor{Greenish}\textbf{if} $\displaystyle k\geq \log^2 n $. & \multirow{-4}{*}{$\displaystyle \frac{n\log n}{k} \log\left(\frac{n\log n}{k}\right)$\cellcolor{Greenish}} \\ 		\hhline{|-|-|-|-|-|-|}\rule{0pt}{4ex}

			& \multirow{4}{*}{$n\log^2 n$}& \multirow{4}{*}{$n\log n$} &\multirow{4}{*}{$n$} & \cellcolor{Gray} $ (n/k) \log^ 2 n$ & \cellcolor{Greenish}  \\
			2-Dim.\ &&&& \textbf{if} $k \leq \log^2 n$.&\cellcolor{Greenish} \\  
			Tori &&&&       $ \cellcolor{Gray} \displaystyle  \frac{ n}{\log (k/\log^2 n)} $ & \cellcolor{Greenish}\\ 
			&&&&\textbf{if} $k\geq  \log^2 n  $. &\multirow{-4}{*}{$ \displaystyle \frac{n\log n}{k} \log\left(\frac{n\log n}{k}\right)$ \cellcolor{Greenish}}
			
			\\		\hhline{|-|-|-|-|-|-|}\rule{0pt}{2.5ex}
			
			$\!\!d$-Dim.\  & \multirow{4}{*}{$n \log n$} & \multirow{4}{*}{$n$} & \multirow{4}{*}{$n^{2/d}$} &\cellcolor{Gray} $ (n/k) \log n $& \cellcolor{Greenish}\\
			Tori & & &  & \cellcolor{Gray}\textbf{if} $ k \leq n^{1-2/d} \log n$. &\cellcolor{Greenish} \\\cline{5-5}
			$d\geq 3$ & & & &\cellcolor{Gray} $\displaystyle\frac{n^{2/d}  }{\log (k/(n^{1-2/d} \log n))} $&\cellcolor{Greenish} \\ 
			& & & & \cellcolor{Gray}\textbf{if} $k\geq n^{1-2/d} \log n$.   &\multirow{-4}{*}{$\displaystyle \frac{n}{k} \log n$\cellcolor{Greenish} } \\
			\hline\rule{0pt}{2.5ex}
			\multirow{4}{*}{$\!\!$Hypercubes} & \multirow{4}{*}{$n \log n $ }&\multirow{4}{*}{$ n$} & \multirow{4}{*}{$\log n \log\log n$ }& \cellcolor{Gray}  $ (n/k) \log n $  &\cellcolor{Greenish}\\ 
			& & & & \cellcolor{Gray} \textbf{if} $ k \leq n/\log \log n$. &\cellcolor{Greenish}\\ 
			& & & & \cellcolor{Gray} $ \log n \log \log n $ &\cellcolor{Greenish} \\ 
			& & & &\cellcolor{Gray} \textbf{if} $k\geq n / \log \log n $. &\multirow{-4}{*}{$ \displaystyle\frac{n}{k} \log n$ \cellcolor{Greenish}} \\ 
			\hhline{|-|-|-|-|-|-|}

			Expanders & $n \log n$ & $n$ & $\BO{\log n}$ &\cellcolor{Gray}$\displaystyle \phantom{\bigg|}\frac{n}{k} \log n$ & \cellcolor{Greenish}$\displaystyle \phantom{\bigg|}\frac{n}{k} \log n$   \\\hhline{------}
			PA, $m\geq 2$ & $n \log n$ & $n$ & $\BO{\log n}$ & \cellcolor{Greenish}$\displaystyle \phantom{\bigg|}\frac{n}{k} \log n$ & \cellcolor{Greenish}$\displaystyle \phantom{\bigg|}\frac{n}{k} \log n$   \\\hline \rule{0pt}{4ex}
			Barbells & $n^2$ & $n^2 $ & $n^2$ & $n^2/k$&   $\displaystyle \phantom{\bigg|} \frac{2^{-k}n^2}{k}+ \frac{n\log n}{k} $ \\	  
			\hline
		\end{tabular}\caption{All results above are  $\Theta( \cdot)$, that is bounded above and below by a multiplicative constant, apart from the mixing time of expanders which is only bounded from above. PA above is the preferential attachment process where each vertex has $m$ initial links, the results hold w.h.p., see \cite{CoopPA,MPS06}. Cells shaded in \colorbox{Greenish}{Green} are new results proved in this paper with the exception that for $k= \BO{\log n }$ upper bounds on the stationary cover time for binary trees, expanders and preferential attachment graphs can be deduced from general bounds for the worst-case cover time in \cite{Multi}. Cells shaded \colorbox{Gray}{Gray} in the second to last column are known results we re-prove in this paper using our partial mixing time results, for the $2$-dim grid we only re-prove upper bounds. References for the second to last column are given in Section \ref{fundamental}, except for the barbell, see \cite[Page 2]{ER09}. The barbell consists of two cliques on $n/2$ vertices connected by single edge; we include this in the table as an interesting example where the speed-up by stationary walks is exponential in $k$. All other results for single walks can be found in \cite{aldousfill}, for example.}
		\label{tbl:results}
	\end{table}
	\egroup

	\section{Notation \& Preliminaries}\label{sec:notation} 
	Throughout $G=(V,E)$ will be a finite undirected, connected graph with $n=|V|$ vertices and $m=|E|$ edges. For $v\in V(G)$ let $\deg(v)=|\{u\in V : uv\in E(G)\}|$ denote the degree of $v$ and $\dmin= \min_{v \in V}\deg(v)$ and $\dmax= \max_{v \in V}\deg(v)$ denote the minimum and maximum degrees in $G$ respectively. For any $ k \geq 1$, let $X_t=\bigl(X_t^{(1)},\dots, X_t^{(k)} \bigr)$ be the \textit{multiple random walk} process where each $X_{t}^{(i)}$ is an independent random walk on $G$. Let \[\Exu{u_1, \dots, u_k}{\cdot } = \Ex{\cdot  \mid X_0= (u_1, \dots, u_k)}\] denote the conditional expectation where, for each $1\leq i\leq k$, $X_{0}^{(i)} =u_i\in V$ is the start vertex of the $i$\textsuperscript{th} walk. Unless mentioned otherwise, walks will be \emph{lazy}, i.e., at each step the walk stays at its current location with probability $1/2$, and otherwise moves to a neighbour chosen uniformly at random. We let the random variable $\tau_{\mathsf{cov}}^{(k)}(G)=\inf\{t : \bigcup_{i=0}^t\{X_i^{(1)}, \dots , X_i^{(k)} \} = V  \}$ be the first time every vertex of the graph has been visited by some walk
	$X_t^{(i)}$. For $u_1, \dots , u_k\in V$ let  \[t_{\mathsf{cov}}^{(k)}((u_1,\dots, u_k),G) = \Exu{u_1, \dots, u_k}{\tau_{\mathsf{cov}}^{(k)}(G)},\qquad t_{\mathsf{cov}}^{(k)}(G)=\max_{u_1, \dots , u_k \in V}\; t_{\mathsf{cov}}^{(k)}((u_1,\dots, u_k),G)   \]denote the cover time of $k$ walks from $(u_1, \dots, u_k)$ and the cover time of $k$ walks from worst-case start positions respectively. For simplicity, we drop $G$ from the notation if the underlying graph is clear from the context. We shall use $\pi$ to denote the stationary distribution of a single random walk on a graph $G$. For $v\in V$ this is given by $\pi(v) = \frac{\deg(v)}{ 2m}$ which is the degree over twice the number of edges. Let $\pi_{\min} = \min_{v\in V} \pi(v) $ and  $\pi_{\max}=\max_{v\in V} \pi(v) $. We use $\pi^k$, which is a distribution on $V^k$ given by the product measure of $\pi$ with itself, to denote the stationary distribution of a multiple random walk. For a probability distribution $\mu$ on $V$ let $\Exu{\mu^k}{\cdot}$ denote the expectation with respect to $k$ walks where each start vertex is sampled independently from $\mu$ and \[ t_{\mathsf{cov}}^{(k)}(\mu,G) = \Exu{\mu^k}{\tau_{\mathsf{cov}}^{(k)}(G)}. \]
	In particular $t_{\mathsf{cov}}^{(k)}(\pi,G) $ denotes the expected cover time from $k$ independent stationary start vertices. For a set $S\subseteq V$ (if $S=\{v\}$ is a singleton set we use $\tau_v^{(k)}$, dropping brackets) we define 
	\[
	\tau_{S}^{(k)}=\inf\{t : \text{ there exists $1\leq i \leq k$ such that }X_t^{(i)} \in S\}
	\]
	as the first time at least one vertex in the set $S$ is visited by any of the $k$ independent random walks. 
	Let  \[  t_{\mathsf{hit}}^{(k)}(G)=\max_{u_1, \dots , u_k \in V} \max_{v \in V} \; \Exu{u_1,\dots, u_k}{\tau^{(k)}_v}  \]be the worst-case vertex to vertex hitting time. When talking about a single random walk we drop the $(1)$ index, i.e. $t_{\mathsf{cov}}^{(1)}(G) =  t_{\mathsf{cov}}(G) $; we also drop $G$ from the notation when the graph is clear. If we wish the  graph $G$ to be clear we shall also use the notation $\Pru{u,G}{\cdot}$ and  $\Exu{u,G}{\cdot} $. For $t\geq 0$ we let $N_v(t)$ denote the number of visits to $v\in V$ by a single random walk up-to time $t$. 
	
	For a single random walk $X_t$ with stationary distribution $\pi$ and $x\in V$, let $d_{\mathrm{TV}}(t)$ and $s_x(t)$ be the total variation and separation distances for $X_t$ given by
	\begin{equation*}
	d_{\mathrm{TV}}(t) = \max_{x\in V}|| P^{t}_{x,\cdot} - \pi||_{\mathrm{TV}}, \qquad \text{and}\qquad s_x(t)= \max_{y\in V}\biggl[1 - \frac{P^{t}_{x,y}}{\pi(y)} \biggr],  
	\end{equation*}where $P_{x,\cdot}^t$ is the $t$-step probability distribution of a random walk starting from $x$ and, for probability measures $\mu,\nu$, $||\mu - \nu||_{\mathrm{TV}} = \frac{1}{2}\sum_{x\in V}|\mu(x)-\nu(x)|$ is the total variation distance. 
	Let $ s(t)=\max_{x\in V}s_x(t)$, then for $0< \eps \leq 1$ the mixing and separation times \cite[(4.32)]{levin2009markov} are 
	\[\tmix(\eps) = \inf\{t : d_{\mathrm{TV}}(t) \leq \eps\} \qquad \text{and} \qquad \tsep(\eps) = \inf\{t : s(t) \leq \eps \}, \]and $\tmix:=\tmix(1/4)$ and $\tsep:= \tsep(1/e)$. A strong stationary time (SST) $\sigma$, see \cite[Ch.\ 6]{levin2009markov} or \cite{AldDia}, is a randomised stopping time for a Markov chain $Y_t$ on $V$ with stationary distribution $\pi$ if   
	\begin{equation*}\Pru{u}{Y_\sigma=v \mid \sigma = k }= \pi(v) \qquad\text{ for any $u,v\in V$ and $k\geq 0$.}  \end{equation*}

	Let $t_{\mathsf{rel}} = \frac{1}{1-\lambda_2}$ be the relaxation time of $G$, where $\lambda_2$ is the second largest eigenvalue of the transition matrix of the (lazy) random walk on $G$. A  sequence of graphs $(G_n)$ is a \emph{sequence of expanders} (or simply an \textit{expander}) if $\liminf_{n\to \infty} 1-\lambda_2(G_n) > 0$, which implies that $t_{\mathsf{rel}} = \Theta(1)$ 
	
	For random variables $Y,Z$ we say that $Y$ dominates $Z$ ($Y\succeq Z$) if $ \Pr{ Y \geq x } \geq   \Pr{ Z \geq x }$ for all real $x$. Finally, we shall use the following inequality \cite[Proposition B.3]{Motwani}: \begin{equation}\label{eq:cheatsheet} (1 +x/n)^n\geq e^x (1-x^2/n)\quad\text{ for }n\geq 1, |x|\leq n. \end{equation}

	\section{Multiple Stationary Cover Times}\label{ReturnSec}
	We shall state our general upper and lower bound results for multiple walks from stationary in Sections \ref{sec:upstate} \& \ref{sec:lowstate} before proving these results in Sections \ref{sec:upproof} \& \ref{sec:lowproof} respectively. 
	
	\subsection{Upper Bounds}\label{sec:upstate}
	
	Broder, Karlin, Raghavan, and Upfal \cite{BroderSpaceTime} showed that for any graph $G$ (with $m=|E|$) and $k\geq 1$, \[t_{\mathsf{cov}}^{(k)}(\pi) = \BO{\left(\frac{m}{k}\right)^2\log^3 n } .\]
	We first prove a general bound which improves this bound by a multiplicative factor of $\dmin ^2\log n $ which may be $\Theta(n^2\log n)$ for some graphs. \begin{theorem}\label{nonregbdd}For any graph $G$ and any $k\geq 1$, 
		\[t_{\mathsf{cov}}^{(k)}(\pi) = \mathcal{O} \biggl( \left(\frac{m}{k\dmin}\right)^2\log^2 n \biggr). \] 
	\end{theorem}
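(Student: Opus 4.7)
The overall strategy is to establish a polynomial upper tail bound on the single-walk stationary hitting time $\mathbb{P}_\pi[\tau_v > T]$, amplify it using the independence of the $k$ walks, union-bound over $v \in V$, and finally convert the resulting high-probability cover-time estimate into a bound on expectation via a standard doubling argument.

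Concretely, I fix $v \in V$ and use the product formula
\[
\mathbb{P}_{\pi^k}\!\left[\tau_v^{(k)} > T\right] \;=\; \mathbb{P}_\pi[\tau_v > T]^k,
\]
valid since the $k$ walks are independent with common initial distribution $\pi$, to reduce to a single walk. For the single-walk tail, I would apply the second moment / Paley--Zygmund method to the visit count $N_v(T) := \sum_{t=0}^{T} \mathbf{1}\{X_t = v\}$. Stationarity gives $\mathbb{E}_\pi[N_v(T)] = (T+1)\pi(v) \geq T d_{\min}/(2m)$, and the Markov property together with reversibility yields the standard estimate
\[
\mathbb{E}_\pi[N_v(T)^2] \;\leq\; 2 \,\mathbb{E}_\pi[N_v(T)] \cdot \mathbb{E}_v[N_v(T)],
\]
so that Paley--Zygmund delivers $\mathbb{P}_\pi[\tau_v \leq T] \geq \mathbb{E}_\pi[N_v(T)]/\bigl(2\,\mathbb{E}_v[N_v(T)]\bigr)$. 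A careful renewal argument, exploiting the return-time identity $\mathbb{E}_v[R_v^+] = 1/\pi(v) = 2m/d(v) \leq 2m/d_{\min}$, is then used to control $\mathbb{E}_v[N_v(T)]$ and produce a polynomial single-walk tail of the form $\mathbb{P}_\pi[\tau_v > T] = \mathcal{O}(m/(T d_{\min}))$ in the relevant regime of $T$.

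Raising this polynomial tail to the $k$-th power and union-bounding over the $n$ vertices, the failure probability is at most $n \cdot \bigl(Cm/(Td_{\min})\bigr)^k$, which is made $\leq 1/2$ by choosing $T$ of the claimed order $\mathcal{O}\!\bigl((m\log n/(kd_{\min}))^2\bigr)$. The squared form in the target emerges from balancing the $(2n)^{1/k}$ factor extracted from the union bound against the $k$-fold amplification of the polynomial tail: the extra $\log n$ on each of the two factors of $m/(kd_{\min})$ is exactly what is needed to absorb the $n^{1/k}$ term for all $k$. A standard doubling / restart trick converts this high-probability statement into the bound on $\mathbb{E}_{\pi^k}\!\left[\tau_{\cov}^{(k)}\right]$ stated in the theorem.

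The main obstacle is the quantitative control of $\mathbb{E}_v[N_v(T)]$ uniformly across graphs without invoking spectral or mixing-time data, since the trivial bound $\mathbb{E}_v[N_v(T)] \leq T+1$ is far too weak to yield any nontrivial tail. The renewal argument handles this by exploiting the exact return-time identity $\mathbb{E}_v[R_v^+] = 2m/d(v)$. An alternative route, signalled by the paper's ``Novelty of Techniques'' section, is to couple the $k$ independent stationary walks to a single reversible Markov chain with geometric (PageRank-style) resets and then invoke classical hitting-time estimates for reversible chains on this auxiliary chain, which collapses the analysis into a single-chain problem with a clean stationary distribution.
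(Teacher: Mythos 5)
Your overall architecture (reduce to a single-walk tail via independence, lower-bound $\Pru{\pi}{\tau_v\leq T}$ by $\Exu{\pi}{N_v(T)}/\Exu{v}{N_v(T)}$, union bound, doubling) is the same as the paper's, but the step you lean on to close the argument is a genuine gap. You claim a single-walk tail of the form $\Pru{\pi}{\tau_v>T}=\BO{m/(T\dmin)}$, obtained by controlling $\Exu{v}{N_v(T)}$ through a ``renewal argument'' from Kac's identity $\Exu{v}{R_v^+}=2m/d(v)$. Kac's formula controls the \emph{expected} return time, not the number of returns within a finite horizon $T$ shorter than the relaxation time, and the claimed tail is false in general: on the cycle ($m=n$, $\dmin=2$) one has $\Exu{v}{N_v(T)}=\Theta(\sqrt{T})$ for $T\leq n^2$, so $\Pru{\pi}{\tau_v\leq T}=\Theta(\sqrt{T}/n)$, whereas your bound would assert $\Pru{\pi}{\tau_v\leq n^{1.5}}\geq 1-\BO{n^{-1/2}}$ — the correct probability is $\Theta(n^{-1/4})$. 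The missing ingredient is exactly the external input the paper uses: the Oliveira--Peres diagonal decay $P^t_{v,v}-\pi(v)\leq \frac{10 d(v)}{\dmin\sqrt{t+1}}$, which gives $\sum_{i=0}^{t}P^i_{v,v}\leq \frac{5d(v)}{\dmin}\sqrt{t}+t\pi(v)$ and hence the weaker (and correct) tail $\Pru{\pi}{\tau_v>t}\leq \frac{10m/\dmin}{\sqrt{t}+10m/\dmin}$, decaying like $1/\sqrt{t}$ rather than $1/t$.

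With the correct $1/\sqrt{t}$ tail, your proposed amplification $n\cdot(\text{tail})^k\leq 1/2$ also breaks down for small $k$: it forces $T\gtrsim (m/\dmin)^2 n^{2/k}$, which for $k=\BO{1}$ vastly exceeds the target $\BO{(m\log n/(k\dmin))^2}$. The paper instead writes the failure probability as $\sum_v \exp(-k\Pru{\pi}{\tau_v\leq t})$ and demands $k\Pru{\pi}{\tau_v\leq t^*}\geq 4\log n$ with $t^*=(80m\log n/(k\dmin))^2$; this works when $k\geq 10\log n$, and the regime $k\leq 10\log n$ is handled separately by observing that $t^*$ then already exceeds the deterministic single-walk cover-time bound $\tcov\leq 16m^2/\dmin^2$ of Kahn--Linial--Nisan--Saks. (A third easy case, $k\geq 10(m/\dmin)\log n$, is dispatched by noting every vertex is occupied at time $0$ with overwhelming probability.) Your proposal would need to import the Oliveira--Peres estimate (or an equivalent heat-kernel bound) and restructure the amplification along these lines to be repaired.
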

	This bound is tight for the cycle if $k=n^{\Theta(1)}$, see Theorem \ref{cycleworst}. Theorem \ref{nonregbdd} is proved by relating the probability a vertex $v$ is not hit up to a certain time $t$ to the expected number of returns to $v$ by a walk of length $t$ from $v$ and applying a bound by Oliveira and Peres \cite{oliveira2018random}.

	The next bound is analogous to Matthew's bound \cite[Theorem 2.26]{aldousfill} for the cover time of single random walks from worst-case, however it is proved by a different method. 
	\begin{theorem}\label{thm:statmatthews}For any graph $G$ and any $k\geq 1 $, we have 
		\[t_{\mathsf{cov}}^{(k)}(\pi) =\BO{\frac{\max_{v\in V}\Exu{\pi}{\tau_{v}}\log n}{k}} . \] 
	\end{theorem}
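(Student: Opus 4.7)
The strategy is a Matthews-type tail integration argument: bound the probability that an individual vertex $v$ remains unhit by any of the $k$ walks up to time $t$, union-bound over $v$, and integrate the resulting tail.

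\textbf{Step 1 (Exponential tail for a single stationary walk).} The plan begins by establishing that for the lazy (hence reversible) random walk on $G$ and any $v\in V$,
\[
\Pru{\pi}{\tau_v > t} \;\leq\; \exp\!\left(-\frac{t}{c\,\Exu{\pi}{\tau_v}}\right) \qquad \text{for every } t \geq 0,
\]
with a universal constant $c$. This is the classical exponential-concentration phenomenon for hitting times from stationary in reversible chains (Aldous--Brown / Aldous--Fill). A naive Markov bound is too weak here; the gain comes from the fact that conditioning on \emph{not having hit} $v$ biases the walk's location towards parts of $V$ where, by reversibility, hitting $v$ is still comparably likely in the next block of length $\Theta(\Exu{\pi}{\tau_v})$.

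\textbf{Step 2 (Independence across the $k$ walks).} Because all $k$ walks are independent and each is started from $\pi$, the hitting time $\tau_v^{(k)}$ is the minimum of $k$ i.i.d.\ copies of $\tau_v$ under $\Pru{\pi}{\cdot}$. Thus, writing $t^* := \max_{v\in V}\Exu{\pi}{\tau_v}$,
\[
\Pr{\tau_v^{(k)} > t}\;=\;\Pru{\pi}{\tau_v > t}^{k}\;\leq\;\exp\!\left(-\frac{k t}{c\,t^*}\right).
\]
The important point is that Step 1 gives a bound with leading constant $1$, so raising to the $k$-th power does not introduce an uncontrolled $C^{k}$ factor.

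\textbf{Step 3 (Union bound and tail integration).} A union bound over the $n$ vertices gives
\[
\Pr{\tau_{\mathsf{cov}}^{(k)} > t}\;\leq\; n\cdot\exp\!\left(-\frac{k t}{c\,t^*}\right).
\]
Setting $t_0 := 2c\,t^*\log n /k$ makes the right-hand side at most $1/n$. Then
\[
\Exu{\pi^k}{\tau_{\mathsf{cov}}^{(k)}}\;=\;\int_{0}^{\infty}\Pr{\tau_{\mathsf{cov}}^{(k)} > s}\,\mathrm{d}s\;\leq\; t_0 \;+\; \int_{t_0}^{\infty} n\,\exp\!\left(-\frac{k s}{c\,t^*}\right) \mathrm{d}s \;=\; t_0 \;+\; \frac{c\,t^*}{k},
\]
which is $\BO{t^*\log n /k}$, as claimed.

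\textbf{Main obstacle.} The only nontrivial ingredient is Step 1: turning the mean bound $\Exu{\pi}{\tau_v}\leq t^*$ into an exponential tail with leading constant $1$ in front. Without this one must either settle for weaker Markov-style tails (which make the union bound fail to recover the $\log n$ dependence) or pay an extra multiplicative constant that is catastrophic after taking a $k$-th power. Fortunately reversibility of the lazy walk makes Aldous--Brown-type estimates available off the shelf, and the remainder of the argument is a routine union bound plus tail integration.
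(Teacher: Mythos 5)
Your overall plan (a one-walk tail bound, raised to the $k$-th power, then a union bound and tail integration) is viable and more unified than the paper's proof, but Step~1 is false as stated, and it is the step you yourself identify as carrying the whole argument. The inequality $\Pru{\pi}{\tau_v > t} \leq \exp(-t/(c\,\Exu{\pi}{\tau_v}))$ with the \emph{individual} vertex's stationary hitting time in the exponent fails in general: take a clique $K_N$ with a path of length $L=N^{3/4}$ attached and let $v$ be a clique vertex. Then $\Exu{\pi}{\tau_v}=\Theta(N)$, yet with probability $\Theta(N^{-2})$ the walk starts at the far end of the path and needs $\Omega(L^2)=\Omega(N^{3/2})$ steps just to return to the clique, so $\Pru{\pi}{\tau_v > N^{3/2}/100}=\Omega(N^{-2})$, while your bound would assert $e^{-\Omega(\sqrt{N})}$. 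The reason the ``off the shelf'' Aldous--Brown estimates do not give what you wrote is that their decay rate is governed by the quasi-stationary mean $\Exu{\alpha}{\tau_v}$ (equivalently, the principal eigenvalue of the chain killed at $v$), which can vastly exceed $\Exu{\pi}{\tau_v}$ when a set of small stationary mass is slow to escape. The repair is easy and costs you nothing downstream: Steps~2 and~3 only ever use the bound with $t^*:=\max_w \Exu{\pi}{\tau_w}$ in the denominator, and $\Pru{\pi}{\tau_v > t}\le \exp(-t/(c\,t^*))$ with leading constant exactly $1$ \emph{is} true, e.g.\ because $\Exu{\alpha}{\tau_v}\le \max_u \Exu{u}{\tau_v}\le 2t^*$ (the last step by \cite[Lemma 10.2]{levin2009markov}), or more elementarily by combining the bound $\Pru{\pi}{\tau_v>t}\le \Exu{\pi}{\tau_v}/(\Exu{\pi}{\tau_v}+t+1)\le e^{-t/(2t^*)}$ for $t\le t^*$ with iterated Markov (using $\Exu{u}{\tau_v}\le 2t^*$) for $t\ge t^*$. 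You should state and prove that corrected version explicitly; as written, Step~1 is not a citable fact.

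For comparison, the paper avoids ever needing a genuine exponential tail with leading constant $1$ by splitting into two regimes. For $k<8\log n$ it uses exactly the iterated-Markov bound $\Pru{u}{\tau_v> 4\ell\lceil t^*\rceil}\le 2^{-\ell}$ evaluated at the single time $t_0=\Theta(t^*\log n/k)$, where the harmless leading constant is absorbed because $\lceil \log_2 n/k\rceil=\Theta(\log n /k)$ in that range. For $k\ge 8\log n$ it uses the identity $\pi(v)\Exu{\pi}{\tau_v}=\sum_{i\ge 0}(P^i_{v,v}-\pi(v))$ together with monotonicity of $P_{v,v}^i-\pi(v)$ for lazy walks to get the polynomial tail $\Pru{\pi}{\tau_v>t}\le \Exu{\pi}{\tau_v}/(\Exu{\pi}{\tau_v}+t+1)$, whose leading constant is automatically $1$, and which raised to the $k$-th power already beats the union bound once $k\gtrsim \log n$. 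Your route, once Step~1 is fixed as above, handles both regimes in one pass, at the price of proving the slightly less standard uniform exponential tail.
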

	
	This bound is tight for many graphs, see Table \ref{tbl:results}. Following this paper the stronger bound $\tcov^{(k)}(\pi)  = \BO{\tcov /k} $ was recently proved by Hermon and Sousi \cite{hermon2021covering}. A version of Theorem \ref{thm:statmatthews} for $\tcov^{(k)}$ was established in \cite{Multi} provided $k=\BO{\log n}$, the restriction on $k$ is necessary (for worst-case) as witnessed by the cycle. Theorem \ref{thm:statmatthews} also gives the following explicit bound.
	
		\begin{corollary}\label{relbdd}For any graph $G$ and any $k\geq 1 $, we have 
		\[t_{\mathsf{cov}}^{(k)}(\pi) =\BO{\frac{m}{k\dmin}\sqrt{\trel}\log n} . \] 
	\end{corollary}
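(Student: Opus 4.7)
The plan is to derive Corollary \ref{relbdd} directly from Theorem \ref{thm:statmatthews} by supplying a universal bound on the worst-case stationary hitting time. Since Theorem \ref{thm:statmatthews} gives $t_{\mathsf{cov}}^{(k)}(\pi) = \BO{\max_{v\in V}\Exu{\pi}{\tau_v}\log n/k}$, it suffices to establish the spectral estimate
\[
\max_{v\in V}\Exu{\pi}{\tau_v} \;\leq\; C\cdot\frac{m\sqrt{\trel}}{\dmin} \;=\; \frac{C\sqrt{\trel}}{2\pi_{\min}},
\]
using $\pi_{\min}=\dmin/(2m)$. Plugging this bound into Theorem \ref{thm:statmatthews} immediately yields $t_{\mathsf{cov}}^{(k)}(\pi) = \BO{m\sqrt{\trel}\log n/(k\dmin)}$, as required.

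For the hitting-time estimate I would begin from the standard spectral identity for reversible chains,
\[
\Exu{\pi}{\tau_v} \;=\; \sum_{i\geq 2}\frac{\psi_i(v)^2}{1-\lambda_i},
\]
where $\{\psi_i\}$ are $L^2(\pi)$-orthonormal eigenfunctions of the lazy transition matrix and $\sum_{i}\psi_i(v)^2 = 1/\pi(v)$. A naive application of $(1-\lambda_i)^{-1} \leq \trel$ only gives the weaker bound $\trel/\pi(v)$, so the key step is to split the spectrum at the threshold $1-\lambda_i = 1/\sqrt{\trel}$. The contribution from the fast modes is bounded pointwise by $\sqrt{\trel}\cdot\sum_{i}\psi_i(v)^2 = \sqrt{\trel}/\pi(v)$; for the slow modes I would pass to the Green's-function identity $\pi(v)\cdot\Exu{\pi}{\tau_v} = \sum_{s\geq 0}(P^s(v,v)-\pi(v))$ and combine it with a tail estimate for $P^s(v,v)-\pi(v)$ at the $\sqrt{\trel}$ time-scale, closing an algebraic inequality for $\Exu{\pi}{\tau_v}$ that again yields $\sqrt{\trel}/\pi(v)$. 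Taking the maximum over $v$ and using $1/\pi(v)\leq 2m/\dmin$ then gives the claimed hitting-time bound.

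The main obstacle is obtaining the $\sqrt{\trel}$ factor rather than the trivial $\trel$: the two regimes in the spectral split have to be balanced carefully, and extracting the $\sqrt{\trel}$ requires exploiting the correlation between eigenvalue size and the corresponding eigenfunction mass at $v$ rather than applying $(1-\lambda_i)^{-1}\leq\trel$ uniformly across the whole spectrum. A useful sanity check is the cycle, where $\trel=\Theta(n^2)$, $\dmin=2$, $m=n$ and $\max_v\Exu{\pi}{\tau_v}=\Theta(n^2)=\Theta(m\sqrt{\trel}/\dmin)$, which confirms that the bound is tight up to constants and that the spectral decomposition cannot afford to lose more than a constant factor in either regime.
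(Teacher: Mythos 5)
Your top-level reduction is exactly the paper's proof: Corollary \ref{relbdd} follows by substituting the estimate $\max_{v\in V}\Exu{\pi}{\tau_v}\leq 20m\sqrt{\trel+1}/\dmin$ into Theorem \ref{thm:statmatthews}. The paper does not prove this hitting-time estimate, however --- it simply invokes \cite[Theorem 1]{oliveira2018random} --- so everything after your first paragraph is an attempt to re-derive a cited ingredient rather than part of the corollary's proof proper. Citing that result (or the return-probability bound \eqref{OPBound}, which is its engine) would complete your argument immediately.

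As written, though, your re-derivation has a genuine gap. Starting from $\Exu{\pi}{\tau_v}=\sum_{i\geq2}\psi_i(v)^2/(1-\lambda_i)$, a single split at $1-\lambda_i=\trel^{-1/2}$ handles the fast modes, but for the slow modes the only way to control $\sum_{i\,\mathrm{slow}}\psi_i(v)^2$ is through $P^{s}_{v,v}-\pi(v)=\pi(v)\sum_{i\ge 2}\lambda_i^{s}\psi_i(v)^2$ with $s=\BO{\sqrt{\trel}}$ (any larger $s$ kills the factor $\lambda_i^s$ on the slow modes), and \eqref{OPBound} then gives $\sum_{i\,\mathrm{slow}}\psi_i(v)^2=\BO{\trel^{-1/4}\deg(v)/(\dmin\pi(v))}$; multiplying by $(1-\lambda_i)^{-1}\leq\trel$ yields a slow-mode contribution of order $\trel^{3/4}/\pi(v)$, not $\trel^{1/2}/\pi(v)$, and optimising the threshold only improves the exponent to $2/3$. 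No two-regime balancing recovers $\sqrt{\trel}$. What does work is either a dyadic decomposition over eigenvalue scales $1-\lambda_i\in[2^{-j-1},2^{-j}]$, each scale contributing $\BO{2^{j/2}\deg(v)/(\dmin\pi(v))}$ with the geometric sum dominated by $j=\log_2\trel$, or the time-domain version: split $\pi(v)\Exu{\pi}{\tau_v}=\sum_{s\geq0}(P^{s}_{v,v}-\pi(v))$ at $s=\lceil\trel\rceil$, bound the head by \eqref{OPBound}, and bound the tail via $P^{s}_{v,v}-\pi(v)\leq\lambda_2^{\,s-\lceil\trel\rceil}\bigl(P^{\lceil\trel\rceil}_{v,v}-\pi(v)\bigr)$. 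Either route also makes explicit that the nontrivial input is \eqref{OPBound} itself, which your sketch uses only implicitly.
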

	\begin{proof} Use $\max_{v\in V}\Exu{\pi}{\tau_{v}} \leq 20 m \sqrt{\trel+1}/\dmin  $ from \cite[Theorem 1]{oliveira2018random} in Theorem \ref{thm:statmatthews}.
	\end{proof}
	Notice that, for any $k\geq 1$, this bound is tight for any expander with $d_{\min}=\Omega(m/n)$, such as preferential attachment graphs with $m\geq 2$, see Theorems \ref{thm:hypercubecover} and \ref{pro:expander}. We now establish some bounds for classes of graphs determined by the return probabilities of random walks. 
	\begin{lemma}\label{thm:constreturn}
		Let $G$ be any graph satisfying $\pi_{\mathsf{min}} =\Omega(1/n)$,  $\tmix = \BO{n}$, and $\sum_{i=0}^{t} P_{vv}^i =\BO{1 + t\pi(v)}$ for any $t\leq \trel$ and $v\in V$. Then for any $1\leq k\leq n$, \[t_{\mathsf{cov}}^{(k)}(\pi) = \Theta\left(\frac{n}{k}\log n\right).\]
	\end{lemma}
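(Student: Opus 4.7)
The plan is straightforward. The matching lower bound $\tcov^{(k)}(\pi) = \Omega((n/k)\log n)$ follows from the general lower bound proved in Section~\ref{sec:lowstate}, which applies whenever $1 \leq k = o(n\log n)$ and hence covers the range $1 \leq k \leq n$. For the upper bound I will apply Theorem~\ref{thm:statmatthews}, which yields $\tcov^{(k)}(\pi) = \BO{\max_v \Exu{\pi}{\tau_v}\log n/k}$; the problem therefore reduces to establishing $\max_v \Exu{\pi}{\tau_v} = \BO{n}$ from the three stated hypotheses.

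To bound $\Exu{\pi}{\tau_v}$, the plan is to use the Kemeny--Snell identity $\Exu{\pi}{\tau_v} = Z_{vv}/\pi(v)$, where $Z_{vv} = \sum_{t=0}^{\infty}(P_{vv}^t - \pi(v))$, and to split the sum at $\trel$. The initial segment is handled directly by the return-probability hypothesis: $\sum_{t=0}^{\trel}(P_{vv}^t - \pi(v)) \leq \sum_{t=0}^{\trel} P_{vv}^t = \BO{1 + \trel\pi(v)}$. For the tail, I would first use monotonicity of $P_{vv}^t$ (for lazy reversible chains) to deduce $P_{vv}^{\trel} \leq (\trel+1)^{-1}\sum_{i=0}^{\trel}P_{vv}^i = \BO{1/\trel + \pi(v)}$, and then exploit that $t \mapsto (P_{vv}^t - \pi(v))/\lambda_2^t$ is non-increasing in $t$ (clear from the spectral decomposition, since $\lambda_j \geq 0$ for a lazy chain). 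This subgeometric decay produces the tail estimate $\sum_{t > \trel}(P_{vv}^t - \pi(v)) \leq \trel \cdot P_{vv}^{\trel} = \BO{1 + \trel\pi(v)}$. Combining the two pieces gives $Z_{vv} = \BO{1 + \trel\pi(v)}$, and hence $\Exu{\pi}{\tau_v} = \BO{1/\pi(v) + \trel} = \BO{n}$ using $\pi_{\min} = \Omega(1/n)$ and $\trel \leq \tmix = \BO{n}$.

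The main obstacle is sharpening the tail of $Z_{vv}$: the naive spectral bound $|P_{vv}^t - \pi(v)| \leq \lambda_2^t$ yields a tail of order $\trel$ with no dependence on $\pi(v)$, which would inflate $\Exu{\pi}{\tau_v}$ to $\BO{\trel/\pi(v)} = \BO{n^2}$ when $\trel = \Theta(n)$ and thus be useless. The refinement of first estimating $P_{vv}^{\trel}$ via monotonicity and then propagating via the subgeometric decay is what restores the correct scale; it stays tight in the motivating examples (hypercubes, $d$-dimensional tori with $d\geq 3$, expanders) where $\trel\pi(v) = o(1)$. Once $\max_v \Exu{\pi}{\tau_v} = \BO{n}$ is in hand, applying Theorem~\ref{thm:statmatthews} delivers the upper bound and completes the proof.
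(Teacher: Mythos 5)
Your proposal is correct, but it takes a genuinely different route from the paper. The paper proves the upper bound directly from the first-visit machinery of Lemma~\ref{covfromvisits}: it extends the hypothesis $\sum_{i\leq t}P_{vv}^i=\BO{1+t\pi(v)}$ from $t\leq\trel$ to all $t$ via Lemma~\ref{covfromvisits}~\eqref{trelenough}, lower-bounds $\Pru{\pi}{X_v(t)\geq 1}$ via part~\eqref{returns}, and then union-bounds over vertices, splitting into the cases $k=\omega(\log n)$ (choose $t=\Theta((n/k)\log n)$ directly) and $k=\BO{\log n}$ (restart in phases of length $5\tmix$, which is where the hypothesis $\tmix=\BO{n}$ enters). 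You instead treat Theorem~\ref{thm:statmatthews} as a black box and reduce everything to showing $\max_v\Exu{\pi}{\tau_v}=\BO{n}$ via the identity $\pi(v)\Exu{\pi}{\tau_v}=\sum_{t\geq 0}(P_{vv}^t-\pi(v))$. Your tail estimate is sound: monotonicity of $P_{vv}^t$ for lazy chains gives $P_{vv}^{\lceil\trel\rceil}=\BO{1/\trel+\pi(v)}$, and since all eigenvalues of the lazy chain are nonnegative, $(P_{vv}^t-\pi(v))/\lambda_2^t$ is indeed non-increasing, yielding $\sum_{t>\trel}(P_{vv}^t-\pi(v))\leq\frac{\lambda_2}{1-\lambda_2}\bigl(P_{vv}^{\lceil\trel\rceil}-\pi(v)\bigr)=\BO{1+\trel\pi(v)}$; note this re-derives exactly what Lemma~\ref{covfromvisits}~\eqref{trelenough} already provides, so you could cite that instead of arguing spectrally. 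Two cosmetic points: $\trel\leq\tmix$ is not literally true (only $\trel-1\leq\tmix/\log 2$, which suffices), and the lower bound should be quoted as holding for $1\leq k\leq c\,n\log n$ rather than $k=o(n\log n)$ --- either way it covers $k\leq n$. What each approach buys: yours is shorter and avoids the case split on $k$ entirely, since Theorem~\ref{thm:statmatthews} already handles small $k$ internally; the paper's direct argument is self-contained at the level of Lemma~\ref{covfromvisits} and produces explicit high-probability tail bounds on $\tau_{\mathsf{cov}}^{(k)}$ rather than only the expectation.
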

	The bound above applies to a broad class of graphs including the hypercube and high dimensional grids. The following bound holds for graphs with sub-harmonic return times, this includes binary trees and $2$d-grid/torus. 
	\begin{lemma}\label{harmonicreturns} 
		Let $G$ be any graph satisfying $\pi_{\max} =\BO{\pi_{\min}}$,  $\sum_{i=0}^tP_{v,v}^i  = \BO{1+\log t}$ for any $t\leq \trel$ and $v\in V$, and $\tmix=\BO{n}$. Then for any $1\leq k\leq (n\log n)/3$,
		\[\tcov^{(k)}(\pi)  = \BO{\frac{n\log n }{k}\log\left( \frac{n\log n }{k}\right) }.\]  
	\end{lemma}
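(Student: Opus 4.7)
The plan is to bound $\Pru{\pi^k}{\tau_{\mathsf{cov}}^{(k)} > t}$ by a union bound over vertices combined with the independence of the $k$ walks, and then integrate to control the expectation. The central quantity to estimate is $\Pru{\pi}{\tau_v > t}$ for each $v\in V$.

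The first step is a Paley--Zygmund/second-moment bound for a single stationary walk. Let $N_v(t)=\sum_{s=0}^{t}\ind{X_s=v}$. Stationarity gives $\Exu{\pi}{N_v(t)} = (t+1)\pi(v)$, while expanding the second moment and using reversibility yields $\Exu{\pi}{N_v(t)^2} \leq 2(t+1)\pi(v)\sum_{i=0}^{t}P_{v,v}^{i}$. Hence
\[
\Pru{\pi}{\tau_v \leq t} \;=\; \Pru{\pi}{N_v(t)>0} \;\geq\; \frac{(t+1)\pi(v)}{2\sum_{i=0}^{t}P_{v,v}^{i}}.
\]
Substituting the hypothesis $\sum P_{v,v}^{i} = \BO{t/n+\log t}$ (valid for $t\leq n(\log n)^2$), together with $\pi(v) = \Theta(1/n)$ (which is implicit in the intended applications such as binary trees and $2$d tori), gives $\Pru{\pi}{\tau_v > t} \leq 1 - \Omega\!\left(\tfrac{t}{t+n\log t}\right)$ on this range. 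Independence and $1-x\leq e^{-x}$ then yield, after a union bound over $v\in V$, the estimate $\Pru{\pi^k}{\tau_{\mathsf{cov}}^{(k)} > t} \leq n\exp\!\left(-\Omega\!\left(kt/(t+n\log t)\right)\right)$.

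Write $L := (n\log n)/k$ and aim for $t^* := C\,L\log L$ with $C$ a large constant. The proof naturally splits around $k \asymp \log n$. When $k\geq \log n$ one has $t^*\leq n\log n$ and $n\log t^*+t^* = \Theta(n\log L)$, so the exponent becomes $\Omega(kt^*/(n\log L)) = \Omega(\log n)$ and $\Pru{\pi^k}{\tau_{\mathsf{cov}}^{(k)} > t^*}\leq 1/n$. For $k\leq \log n$ the second-moment bound saturates (it cannot drive $\Pru{\pi}{\tau_v>t}$ below $1-\Omega(1)$ near $t\approx n\log n$), so I would instead invoke the Aldous--Fill exponential tail $\Pru{\pi}{\tau_v > t}\leq A\exp(-t/(C'\,n\log n))$ for reversible chains. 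This requires $\max_v\Exu{\pi}{\tau_v} = \BO{n\log n}$, which I would obtain from the identity $\Exu{\pi}{\tau_v} = \pi(v)^{-1}\sum_{i\geq 0}(P_{v,v}^{i}-\pi(v))$ by splitting at $i=\tmix = \BO{n}$: the short part is controlled by the hypothesis ($\sum_{i=0}^{n}P_{v,v}^{i} = \BO{\log n}$), and the tail contributes only $\BO{1}$. With this exponential tail, the choice $t^* = \BO{n\log^2 n/k} = \BO{L\log L}$ (using $L\log L \geq L\log n$ in this small-$k$ regime) again yields $\Pru{\pi^k}{\tau_{\mathsf{cov}}^{(k)} > t^*}\leq 1/n$.

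Finally I would use $\Ex{\tau_{\mathsf{cov}}^{(k)}} \leq t^* + \int_{t^*}^{\infty}\Pru{\pi^k}{\tau_{\mathsf{cov}}^{(k)} > s}\,ds$ and bound the tail by the same two estimates; monotonicity of $\Pru{\pi}{\tau_v>t}$ together with continued growth of the hitting probability past $t^*$ gives geometric decay of the integrand and a tail of size $\BO{t^*}$. The main technical obstacle is the transition around $k \asymp \log n$: the second-moment bound alone cannot push $\Pru{\pi}{\tau_v>t}$ below $1-\Omega(1)$ for $t$ of order $n\log n$, while the Aldous--Fill tail alone is too crude to recover the factor $\log L$ (rather than $\log n$) when $k$ is large. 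Combining the two estimates, rather than relying on either in isolation, is precisely what upgrades the Matthews-type bound $\BO{n\log^2 n/k}$ from Theorem~\ref{thm:statmatthews} to the refined $\BO{(n\log n/k)\log(n\log n/k)}$.
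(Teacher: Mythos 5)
Your proposal is correct and follows essentially the same route as the paper: a first/second-moment lower bound on $\Pru{\pi}{\tau_v\le t}$ in terms of $\sum_{i=0}^t P_{v,v}^i$, the choice $t^*=C L\log L$ with $L=n\log n/k$ in the regime $k=\Omega(\log n)$, and a union bound over vertices plus the crude $\BO{n^3}$ worst-case cover time to pass from the tail bound to the expectation. The only (immaterial) difference is in the regime $k=\BO{\log n}$, where you route through the exponential tail coming from $\max_v\Exu{\pi}{\tau_v}=\BO{n\log n}$ (essentially Theorem~\ref{thm:statmatthews}), whereas the paper argues directly with covering periods of length $5\tmix$; both yield $\BO{(n/k)\log^2 n}$, which matches the claimed bound there since $\log(n\log n/k)=\Theta(\log n)$ for such $k$.
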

	\subsection{Lower Bounds}\label{sec:lowstate}

	Generally speaking, lower bounds for random walks seem to be somewhat more challenging to derive than upper bounds. In particular, the problem of obtaining a tight lower bound for the cover time of a simple random walk on an undirected graph was open for many years \cite{aldousfill}. This was finally resolved by Feige \cite{FeigeLow} who proved $\tcov \geq (1-\lo{1})n\log n $ (this bound was known up-to constants for stationary Markov chains by Aldous \cite{aldouslower}). We prove a generalisation of this bound, up to constants, that holds for $k$ random walks starting from stationary (thus also for worst-case).

	\begin{theorem}\label{generallower}There exists a constant $c>0$ such that for any graph $G$ and $1 \leq k \leq c \cdot n\log n$, \[\tcov^{(k)}(\pi)\geq  c\cdot \frac{n}{k}\cdot  \log n .\]
	\end{theorem}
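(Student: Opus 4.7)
The plan is to reduce the $k$-walk process to a single reversible Markov chain $Y$ on $V$ and then invoke Aldous's general lower bound on stationary cover times of reversible chains. Define $Y$ by the transition matrix $Q = (1-p)P + p\Pi$, where $P$ is the lazy random walk on $G$, $\Pi$ is the rank-one matrix whose rows all equal $\pi$, and $p \in (0,1)$ is a reset probability to be optimised. Detailed balance for $P$ implies the same for $Q$, so $Y$ is reversible with stationary distribution $\pi$. The trajectory of $Y$ decomposes into independent sub-excursions separated by resets; each sub-excursion starts from a fresh $\pi$-sample and has length distributed as $\geo{p}$.

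The main step of the reduction is a coupling-plus-Wald argument showing that the stationary cover time of $Y$ is $\BO{k \cdot \tcov^{(k)}(\pi)}$. The idea is to set $p$ so that $1/(2p) = 4 \tcov^{(k)}(\pi)$. A standard Chernoff/Wald estimate then shows that within $\BT{k/p}$ steps, with constant probability at least $k$ of $Y$'s sub-excursions have length $\geq 1/(2p) \geq 4\tcov^{(k)}(\pi)$. Conditional on that, these $k$ long sub-excursions are i.i.d.\ copies of a random walk from $\pi$ of length $\geq 4\tcov^{(k)}(\pi)$, so by Markov's inequality they jointly cover $G$ with probability $\geq 3/4$. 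Iterating this restart argument then yields the claimed upper bound on the stationary cover time of $Y$.

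In the other direction, Aldous's bound from \cite{aldouslower}, applied to the reversible chain $Y$ on $n$ states, gives a stationary cover time of order $\BOhm{\log n \cdot \max_v \Exu{\pi}{\tau_v(Y)}}$. Because $Y$ is stationary under $\pi$, the identity $\Pru{\pi}{Y_t = v} = \pi(v)$ together with a first-moment bound forces $\Exu{\pi}{\tau_v(Y)} = \BOhm{1/\pi(v)}$ for every $v$; choosing $v$ with $\pi(v) = \pi_{\mathsf{min}} \leq 1/n$ then gives an $\BOhm{n \log n}$ lower bound on the stationary cover time of $Y$. Combining with the preceding paragraph produces $\tcov^{(k)}(\pi) \geq c (n/k) \log n$ for some absolute constant $c$, and this bound is non-trivial precisely when $k \leq c \cdot n \log n$. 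The delicate step is the calibration of $p$: it must be small enough that a typical sub-excursion of $Y$ mimics a single stationary walk for long enough to be useful for covering, yet large enough that the total budget $\BT{k/p}$ still captures the expected factor-$k$ speed-up — balancing these competing requirements is exactly what delivers the correct $k$-dependence and makes the argument universal over all graphs $G$.
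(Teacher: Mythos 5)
Your overall architecture is the same as the paper's: replace the $k$ stationary walks by a single reversible ``reset'' chain, upper-bound that chain's stationary cover time by $\BO{k\cdot\tcov^{(k)}(\pi)}$ via a coupling of long geometric excursions to independent stationary walks, and lower-bound it by $\BOhm{n\log n}$ via Aldous. The reduction half is essentially the paper's Lemma \ref{resetcouple} (the paper realises the reset as an extra weighted vertex $z$ rather than as $Q=(1-p)P+p\Pi$, but the Chernoff-on-the-number-of-long-excursions argument is the same), and your reversibility claim for $Q$ is correct.

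The gap is in the application of Aldous's theorem. First, you misstate what it gives: Theorem 1 of \cite{aldouslower} yields $\Ex{\tau_{\cov}}\geq c_0 t^*$ where $t^*$ solves $\sum_{v}\exp(-t^*\pi(v))=1$ (whence $t^*\geq n\log n$ by convexity), not $\BOhm{\log n\cdot\max_v\Exu{\pi}{\tau_v}}$. The latter is not a valid general lower bound on cover times: a single vertex with $\Exu{\pi}{\tau_v}=\BOhm{n}$ does not buy a $\log n$ factor (that requires a large set of \emph{mutually} hard-to-hit vertices, as in Matthews' lower bound; the barbell graph is a counterexample to your stated form). So the step ``$\Exu{\pi}{\tau_v(Y)}=\BOhm{1/\pi_{\min}}$ implies $\tcov(Y,\pi)=\BOhm{n\log n}$'' does not follow from anything you cite. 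Second, Aldous's theorem has hypotheses you never verify and which your chain violates: it requires $Q_{i,i}=0$, whereas $Q=(1-p)P+p\Pi$ with lazy $P$ has $Q_{v,v}\geq 1/2$ (and even with non-lazy $P$ one has $Q_{v,v}=p\pi(v)>0$), and it requires the condition on states $j$ with $\max_i Q_{i,j}$ close to $1$ (hypothesis (b)). The paper's choice of the weighted graph $\hat G(x)$ is made precisely so that the walk is non-lazy and so that hypothesis (b) can be checked by adapting Aldous's leaf/non-leaf partition argument; this verification is the Claim inside the paper's proof and cannot be skipped. The intended conclusion $\tcov(Y,\pi)=\BOhm{n\log n}$ is true and recoverable, but as written this half of your argument does not establish it.
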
 
	We remark that in this section all results hold (and are proven) for \emph{non-lazy} random walks, which by stochastic domination implies that the same result also holds for \emph{lazy} random walks. Theorem~\ref{generallower} is tight, uniformly for all $1 \leq k \leq n$, for the hypercube, expanders and high-dimensional tori, see Theorem \ref{thm:hypercubecover}. We note that~\cite{ElsSau} proved this bound for any start vertices under the additional assumption that $k \geq n^{\epsilon}$, for some constant $\eps>0$. One can track the constants in the proof of Theorem \ref{generallower} and show that $c > 2 \cdot 10^{-11}$, we have not optimised this but note that $c\leq 1$ must hold in either condition of Theorem \ref{generallower} due to the complete graph.

	To prove this result we introduce the \textit{geometric reset graph}, which allows us to couple the multiple random walk to a single walk to which we can apply a lower bound by Aldous~\cite{aldouslower}. The geometric reset graph is a small modification to a graph $G$ which gives an edge-weighted graph $\widehat {G}(x)$ such that the simple random walk on $\widehat {G}(x)$ emulates a random walk on $G$ with $\geo{x}$ \textit{resets} to stationarity, where $\geo{x}$ is a geometric random variable with expectation $1/x$. This is achieved by adding a dominating vertex $z$ to the graph and weighting edges from $z$ so that after $z$ is visited the walk moves to a vertex $v\in V(G)$ proportional with probability $\pi(v)$.
	
	\begin{definition}[The Geometric Reset Graph $\widehat {G}(x)$]\label{georesetgraph}
		For any graph $G$ the undirected, edge-weighted graph $\widehat {G}(x)$, where $0 < x \leq 1$,  consists of all vertices $V(G)$ and one extra vertex $z$. All edges from $G$ are included with edge-weight $1-x$. Further, $z$ is connected to each vertex $u \in G$ by an edge with edge-weight $x \cdot \deg(u)$, where $\deg(u)$ is the degree of vertex $u$ in $G$. \end{definition}
	Given a graph with edge weights $\{w_e\}_{e\in E}$ the probability a \emph{non-lazy} random walk moves from $u$ to $v$ is given by $w_{uv}/ \sum_{w\in V} w_{uw}$. Thus the walk on $\widehat {G}(x)$ behaves as a random walk in $G$, apart from that in any step, it may move to the extra vertex $z$ with probability $\frac{x  \deg(u)}{x  \deg(u) + (1-x)\deg(u)} = x $. Once the walk is at $z$ it moves back to a vertex $u \in V\backslash\{z\}$ with probability\begin{equation*} P_{z,u} = \frac{x\cdot d(u) }{ \sum_{v \in V\backslash\{z\}}x\cdot d(v)} = \pi(u).  \end{equation*} Hence the stationary distribution $\widehat{\pi}$ of the random walk on $\widehat {G}(x)$ is proportional to $\pi$ on $V(G)$, and for the extra vertex $z$ we have  \[\widehat{\pi}(z) = \frac{\sum_{u\in V}x\deg(u)}{\sum_{u\in V}(1-x)\deg(u) +\sum_{u\in V}x\deg(u) } =\frac{x }{(1-x) + x} = x.\]
	
	Using the next lemma we can then obtain bounds on the multiple stationary cover time by simply bounding the cover time in the augmented graph $\widehat {G}(x)$ for some $x$. 
	\begin{lemma}\label{resetcouple}Let $G$ be any graph, $k\geq 1$ and $x = Ck/T$ where $C>30$ and $T\geq 5Ck$. Then, \[\Pru{\pi^k,G}{\tau_{\mathsf{cov}}^{(k)} >\frac{T}{10Ck}}> \Pru{\widehat {\pi},\widehat{G}(x)}{\tau_{\mathsf{cov}} >T} -\exp\left(-\frac{Ck}{50}\right).\]
	\end{lemma}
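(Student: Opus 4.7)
The plan is to construct an explicit coupling between $k$ independent simple random walks of length $\ell := \lfloor T/(10Ck)\rfloor$ on $G$ starting from $\pi$ and a single walk on $\widehat{G}(x)$ of length $T$ started from $\widehat{\pi}$, such that outside a ``failure'' event of total probability at most $\exp(-Ck/50)$, every vertex visited by the $k$ walks is also visited by the $\widehat{G}(x)$-walk. Granted this, if the $\widehat{G}(x)$-walk does not cover $\widehat{G}(x)$ by time $T$, then either $z$ itself is unvisited---an event of probability at most $(1-x)^T\le e^{-Ck}$, which I fold into the failure budget---or some $v\in V(G)$ is unvisited, in which case the same $v$ is missed by each of the $k$ coupled walks. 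Rearranging gives the lemma.

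The coupling exploits the regenerative structure of $\widehat{G}(x)$: each visit to $z$ resamples the walk from $\pi$, because the transition out of $z$ puts mass $\deg(u)/(2m)=\pi(u)$ on each $u\in V(G)$. Thus the walk's trajectory between consecutive visits to $z$ decomposes into i.i.d.\ excursions, the $i$-th of which starts from $V_i\sim\pi$, runs as a simple random walk on $G$, and has length $L_i\sim\geo{x}$ (the number of time steps spent at $G$-vertices during the excursion). The critical observation is that, at each step inside an excursion, the ``reset to $z$'' decision (probability $x$, independent of the current $G$-vertex) and the ``move to a uniform neighbour'' decision are driven by independent coin flips; so $L_i$ is independent of the walk's increments, and \emph{conditional on $L_i\ge \ell+1$} the first $\ell$ random-walk steps of the $i$-th excursion form a simple random walk of length $\ell$ on $G$ starting from $V_i$.

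Call an excursion \emph{long} if $L_i\ge \ell+1$. Since $x\ell\le 1/10$ and $x\le 1/5$, we have $\Pr{L_i\ge \ell+1}=(1-x)^\ell\ge e^{-x\ell/(1-x)}\ge e^{-1/8}>1/2$. Set $M:=\lceil Ck/2\rceil$ and let $\mathcal G$ be the event that (i) the first $M$ excursions, along with the interleaved visits to $z$, complete within $T$ time steps, and (ii) at least $k$ of these $M$ excursions are long. On $\mathcal G$, let $I_1<\dots<I_k$ be the indices of the first $k$ long excursions; couple the $j$-th target walk with starting vertex $V_{I_j}$ and the first $\ell$ random-walk steps of the $I_j$-th excursion. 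By the i.i.d.\ nature of the excursions together with the conditional distribution above, this yields $k$ mutually independent simple random walks of length $\ell$ on $G$ from $\pi$, and their trajectories lie entirely within the trace of the $\widehat{G}(x)$-walk on $V(G)$.

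The main calculation is to bound $\Pr{\mathcal G^c}$; this is the one routine-but-delicate step. For (i), the total time consumed is at most $M+\sum_{i=1}^M L_i$, a sum of i.i.d.\ $(1+\geo{x})$ variables with mean $M(1+1/x)=Ck/2+T/2\le 0.6T$ (using $T\ge 5Ck$); a standard Chernoff / MGF bound for sums of geometrics shows the tail probability of exceeding $T$ is $\exp(-\Omega(Ck))$. For (ii), the count of long excursions among the first $M$ stochastically dominates $\bin{M}{1/2}$, whose mean $\ge Ck/4$ exceeds $k$ by a factor of at least $7$ when $C>30$, so Hoeffding gives probability $\exp(-\Omega(k))$ of fewer than $k$ being long. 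Tuning the constants (with room to spare since $C>30$), both contributions are well below $\tfrac12\exp(-Ck/50)$, so together with the $e^{-Ck}$ contribution from ``$z$ not visited'' the total failure probability stays strictly under $\exp(-Ck/50)$, completing the proof. The sole conceptual step is justifying the i.i.d.\ excursion decomposition, which follows from the strong Markov property applied at each visit to $z$.
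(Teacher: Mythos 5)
Your proposal is correct and follows essentially the same route as the paper's proof: decompose the $\widehat{G}(x)$-walk into i.i.d.\ excursions from $z$ (each starting from $\pi$ with $\geo{x}$ length), show via a geometric tail bound that $\Theta(Ck)$ excursions occur within $T$ steps, use a Chernoff bound to guarantee at least $k$ of them are long enough, and conclude by the coupling inequality. The only differences are cosmetic choices of constants and which tail bounds are invoked.
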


 The result in Lemma \ref{resetcouple} will also be used later in this work to prove a lower bound for the stationary cover time of the binary tree and $2$-dimensional grid when $k$ is small.   
	
	The next result we present utilises the second moment method to obtain a lower bound which works well for $k=n^{\Theta(1)}$ walks on symmetric (e.g.,~transitive) graphs. In particular, we apply this to get tight lower bounds for cycles, the $2$-dim.~torus and binary trees in Section~\ref{fundamental}.

		\begin{lemma}\label{lemma:generalLowerboundpi} For any graph $G$,  subset $S\subseteq V$ and $t\geq 1$, let $p = \max_{u\in S}\Pru{\pi}{\tau_u\leq t}$. Then, for any $k\geq 2$ satisfying $ 2p^2k<1$, we have 	
			\begin{equation*}
				\Pru{\pi^k}{ \tau_{\mathsf{cov}}^{(k)} \leq  t}\leq \frac{4kp^2e^{2kp}}{|S|\min_{v \in S} \pi(v)}.
			\end{equation*}
			
	\end{lemma}

	\subsection{Proofs of Upper Bounds}\label{sec:upproof}
	We begin by stating some basic facts.

	\begin{lemma}\label{covfromvisits}Let $N_v(t)$ be the number of visits to $v\in V$ by a $t$-step walk, $t\geq 1$, then
		\begin{enumerate}[(i)]
			\item\label{union} ${\displaystyle\Pru{\pi^k}{\tau_{\mathsf{cov}}^{(k)} \geq t  } \leq  \sum_{v\in V}\exp\left(-k\Pru{\pi}{N_v(t)\geq 1} \right), }$
			\item\label{returns} ${\displaystyle \frac{(t+1)\pi(v)}{\sum_{i=0}^tP_{v,v}^{i}}\leq \Pru{\pi}{N_v(t)\geq 1 } =\frac{\Exu{\pi}{N_v(t)}}{\Exu{\pi}{N_v(t)\mid N_v(t)\geq 1}}    \leq   \frac{2(t+1)\pi(v)}{ \sum_{i=0}^{t/2} P_{v,v}^i }}$,
			\item\label{trelenough} ${\displaystyle\sum_{i=0}^tP_{u,u}^{i} \leq \frac{e}{e-1}\left(\sum_{i=0}^{\lceil \trel \rceil-1}P_{u,u}^{i}  -\lceil \trel \rceil\pi(u)\right)+ (t+1)\pi(u) }$.
		\end{enumerate}
	\end{lemma}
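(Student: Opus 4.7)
The plan is to prove each part separately; only the third requires more than routine manipulation. For part (i), I would fix $v\in V$ and use the independence of the $k$ walks, each started from $\pi$, to write the probability that $v$ is missed by every walk as $(1-\Pru{\pi}{X_v(t)\geq 1})^k$. The inequality $1-x\leq e^{-x}$ together with a union bound over $v\in V$, noting that $\{\tau_{\mathsf{cov}}^{(k)}\geq t\}$ is contained in the event that some vertex is missed by every walk, yields the stated bound.

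For part (ii), the equality is the elementary identity $\Ex{Y}=\Pr{Y\geq 1}\cdot \Ex{Y\mid Y\geq 1}$ applied to the nonnegative integer-valued variable $Y=X_v(t)$. The numerator equals $\sum_{i=0}^t \Pru{\pi}{X_i=v}=(t+1)\pi(v)\geq t\pi(v)$ by stationarity. For the denominator, I would condition on the first hitting time $\tau_v\leq t$: by the strong Markov property applied at $\tau_v$, given $\tau_v=s$ the conditional expected number of visits to $v$ during $[s,t]$ equals $\sum_{j=0}^{t-s}P_{v,v}^j\leq \sum_{j=0}^t P_{v,v}^j$; averaging over $s\leq t$ gives $\Exu{\pi}{X_v(t)\mid X_v(t)\geq 1}\leq \sum_{j=0}^t P_{v,v}^j$, and dividing yields the stated lower bound.

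For part (iii), I would use the spectral decomposition of the lazy reversible transition matrix. Writing $a_i := P_{u,u}^i-\pi(u) = \pi(u)\sum_{j\geq 2}\lambda_j^i f_j(u)^2$ for an orthonormal basis $\{f_j\}$ of real $\pi$-eigenfunctions with eigenvalues $\lambda_j$, the laziness guarantees $0\leq \lambda_j\leq \lambda_2 = 1-1/\trel$ for all $j\geq 2$, so $a_i\geq 0$ and $a_{i+R}\leq \lambda_2^R a_i$ with $R:=\lceil\trel\rceil$. Splitting $\{0,1,\ldots,t\}$ into consecutive blocks of length $R$ and applying this contraction block-by-block gives
\[\sum_{i=0}^t a_i \;\leq\; \Bigl(\sum_{j\geq 0}\lambda_2^{jR}\Bigr)\sum_{i=0}^{R-1}a_i \;=\; \frac{1}{1-\lambda_2^R}\sum_{i=0}^{R-1}a_i.\]
The elementary inequality $\lambda_2^R=(1-1/\trel)^R\leq (1-1/\trel)^{\trel}\leq e^{-1}$, valid since $R\geq \trel\geq 1$, yields $1/(1-\lambda_2^R)\leq e/(e-1)$. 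Rearranging via $\sum_{i=0}^t P_{u,u}^i = \sum_{i=0}^t a_i + (t+1)\pi(u)$ and $\sum_{i=0}^{R-1}a_i = \sum_{i=0}^{R-1}P_{u,u}^i - R\pi(u)$ gives the stated bound.

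The only mild obstacle is in (iii): one must keep track of the boundary block when $t$ is not a multiple of $R$ (handled by bounding the incomplete block by a full block), and the pointwise contraction $a_{i+R}\leq \lambda_2^R a_i$ genuinely relies on $\lambda_j\geq 0$ for all $j$, which is precisely what the lazy assumption (in force throughout the paper) buys us.
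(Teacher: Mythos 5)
Your proof is correct and, for parts (i) and (ii), follows exactly the paper's argument: independence of the $k$ stationary walks plus $1-x\leq e^{-x}$ and a union bound for (i), and the identity $\Ex{Y}=\Pr{Y\geq 1}\cdot\Ex{Y\mid Y\geq 1}$ together with the strong Markov bound $\Exu{\pi}{X_v(t)\mid X_v(t)\geq 1}\leq \sum_{j=0}^{t}P_{v,v}^{j}$ for (ii). For part (iii) the paper simply defers to the proof of Lemma~1 of Oliveira and Peres, whereas you reprove the inequality from scratch via the spectral decomposition of the lazy reversible chain and the block contraction $a_{i+R}\leq \lambda_2^{R}a_i$ with $R=\lceil\trel\rceil$; this is precisely the argument of the cited source, so your write-up is a correct, self-contained rendering of the same approach, with the nonnegativity of all eigenvalues (laziness) correctly identified as the crucial ingredient.
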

	\begin{proof} For Item \eqref{union}, by independence of the walks and the union bound we have 
		\begin{align*}\Pru{\pi^k}{\tau_{\mathsf{cov}}^{(k)} \geq t  } &
		\leq \sum_{v\in V} \prod_{i=1}^k\left(1-\Pru{\pi}{N_v^{(i)}(t)\geq 1}\right)\leq \sum_{v\in V}\exp\left(-k\Pru{\pi}{N_v(t)\geq 1} \right).    \end{align*}	
		
		For Item \eqref{returns},  since $N_v(t)$ is a non-negative integer, we have 
		\[ \Exu{\pi}{N_v(t)} = \Exu{\pi}{N_v(t)\mid N_v(t)\geq 1}\cdot \Pru{\pi}{N_v(t)\geq 1 }.  \] For the two inequalities first observe that $\Exu{\pi}{N_v(t)}=(t+1)\pi(v)$. Now, conditional on a walk first hitting $v$ at time $s$ we have $N_v(t)= \sum_{i=0}^{t-s}P_{v,v}^{t}$. The first inequality in Item \eqref{returns} then follows since $\Exu{\pi}{N_v(t)\mid N_v(t)\geq 1 }\leq \sum_{i=0}^t P_{v,v}^{i}$.
		For the last inequality in Item \eqref{returns}, observe that, by reversibility, for any $t\geq 1$ \begin{align*}\Pru{\pi}{N_v\left(\left\lceil \tfrac{t-1}{2}\right\rceil\right)\geq 1,\; N_v(  t) \geq 1} &= \sum_{v_0v_1\cdots v_t \,: \,v\in \bigcup_{i=0}^{\lceil \frac{t-1}{2}\rceil}\{v_i\} } \pi(v_0)P_{v_0,v_1}\cdots P_{v_{t-1},v_t} \\ &= \sum_{v_0v_1\cdots v_t \,: \,v\in \bigcup_{i=0}^{\lceil \frac{t-1}{2}\rceil}\{v_i\} } \pi(v_t)P_{v_t,v_{t-1}}\cdots P_{v_{1},v_0}\\
			&= \Pru{\pi}{N_v(t)- N_v\left(\left\lfloor \tfrac{t+1}{2}\right\rfloor -1\right)\geq 1,\; N_v(  t) \geq 1}. 
		\end{align*}Since $\{N_v(  t) \geq 1 \}=\{N_v\left(\left\lceil \tfrac{t-1}{2}\right\rceil\right)\geq 1 \}\cup \{N_v(t)- N_v\left(\left\lfloor \tfrac{t+1}{2}\right\rfloor -1\right)\geq 1\} $ we have \[\Pru{\pi}{N_v\left(\left\lceil \tfrac{t-1}{2}\right\rceil\right) \geq 1\mid N_v( t) \geq 1} \geq 1/2.\] Thus $\Exu{\pi}{N_v(t) \mid N_v(t)\geq 1} \geq (1/2)\cdot \sum_{i=0}^{\lceil \frac{t-1}{2}\rceil}P_{v,v}^{i}$ as claimed.

		Finally, for Item \eqref{trelenough}, the proof of \cite[Lemma 1]{oliveira2018random} shows that \[\sum_{i=0}^t P_{u,u}^{i} - (t+1)\pi(u)\leq \frac{e}{e-1}\left(\sum_{i=0}^{\lceil \trel \rceil-1}P_{u,u}^{i}  -\lceil \trel \rceil\pi(u)\right),\] rearranging gives the result.
	\end{proof}

	Recall	$a\wedge b$ denotes $\min( a, b )$. We shall use the following result of Oliveira and Peres \cite[Theorem 2]{oliveira2018random} for lazy walks: for any $v\in V $ and $t\geq 0$ we have 
	\begin{equation}\label{OPBound}
	P_{v,v}^t - \pi(v) \leq \frac{10\deg(v)}{\dmin}\left(\frac{1}{\sqrt{t+1}} \wedge \frac{\sqrt{\trel+1}}{t+1} \right) .
	\end{equation}
	
	Note that we prove Theorem \ref{nonregbdd} for lazy walks however this also applies to non-lazy walks as the cover time of a lazy walk stochastically dominates that of a non-lazy walk.
	\begin{proof}[Proof of Theorem \ref{nonregbdd}] Recall that we aim to prove $t_{\mathsf{cov}}^{(k)}(\pi) = \mathcal{O} \big( \big(\frac{m}{k\dmin}\big)^2\log^2 n \big)$. To begin observe that if $k\geq 10 (m/\dmin )\log n$ then the probability any vertex $u$ is unoccupied at time $0$ is $(1-\pi(u))^k\leq e^{-\pi(u)k}\leq e^{-\dmin k/(2m)}$. For any graph $\tcov \leq 16mn/\dmin \leq 16n^3 $ by \cite[Theorem 2]{KLScov}, thus we have \[\Exu{\pi^k}{\tau_{\mathsf{cov}}^{(k)}}\leq 16n^3 \cdot n e^{-\dmin k/(2m)} = o\Big(\left( m/(k\dmin)\right)^2\log^2 n\Big).\] It follows that we can assume $k\leq 10 (m/\dmin )\log n$ for the remainder of the proof.

		   We will apply Lemma \ref{covfromvisits} to bound $\Pru{\pi^k}{\tau_{\mathsf{cov}}^{(k)} \geq t}$ for $t\geq 1$. To begin, by \eqref{OPBound} we have 
		\begin{align*}
		\sum_{i=0}^{t} P_{u,u}^{i} &\leq \frac{10\deg(u)}{\dmin}\sum_{i=0}^{t} \frac{1}{\sqrt{i+1}} + (t+1) \pi(u) \leq \frac{10 \deg(u)}{\dmin}\left(\int_{1}^{t} \frac{1}{\sqrt{x}}\, \mathrm{d}x +1\right) + (t+1)\pi(u).\end{align*}Now, since $\dmin\leq \deg(u)$, $\pi(u)\leq 1$ and $t\geq 1$, we have  \begin{align*} \sum_{i=0}^{t} P_{u,u}^{i} &\leq \frac{10\deg(u)}{\dmin}\left(\left[2\sqrt{t}  - 2\sqrt{1}  \right] + 1 \right)+ t\pi(u) +1  \leq \frac{20\deg(u)}{\dmin}\sqrt{ t} +t\pi(u).\end{align*}
		Thus, by Lemma \ref{covfromvisits} \eqref{returns} and dividing each term by a factor of $\pi(u)=\deg(u)/2m$ we have
		\[\Pru{\pi}{N_v(t)\geq 1 } \geq \frac{t}{(40m/\dmin)\sqrt{ t}  +t}.\] We now define $t^* =\left(\frac{300m\log n}{k\dmin}\right)^2$. Firstly if  $k\geq 10\log n $, then for any $t\geq t^*$ we have
		\[k\Pru{\pi}{N_v(t)\geq 1 }\geq \frac{k\left(\frac{300m\log n}{k\dmin}\right)^2}{(40m/\dmin)\cdot \frac{300m\log n}{k\dmin} + \left(\frac{300m\log n}{k\dmin}\right)^2} \geq \frac{300\log n }{40+300/10}>4\log n.  \] Now, since $\tcov\leq 16n^3$, Lemma \ref{covfromvisits}\eqref{union} gives $\Exu{\pi^k}{\tau_{\mathsf{cov}}^{(k)}}\leq t^* +\lo{n^{-3}}\cdot16n^3 = \mathcal{O}\big(\big(\frac{m}{k\dmin}\big)^2\log^2 n\big)   $. 
		
		Finally, if $k\leq 10\log n $ then $ t^*\geq (300m/\dmin)^2/100 =900(m/\dmin)^2$. However, $\tcov \leq 16mn/\dmin \leq 32m^2/\dmin^2$ by \cite[Theorem 2]{KLScov}. Thus, $\Exu{\pi^k}{\tcov^{(k)}}\leq t^* $ holds as claimed.     
	\end{proof}
	\begin{proof}[Proof of Theorem \ref{thm:statmatthews}]
	
	We consider first the case $k < 8\log_2 n$. To begin, for any pair $u,v$ it holds that $\Exu{u}{\tau_{v}} \leq 2\max_{w\in V}\Exu{\pi}{\tau_{w}}$ by \cite[Lemma 10.2]{levin2009markov}.	Thus by Markov's inequality
	\begin{align*}
	    \Pru{u}{\tau_{v}\geq 4\left\lceil\max_{w\in V}\Exu{\pi}{\tau_{w}}\right\rceil} \leq \frac{\Exu{u}{\tau_v}}{4\left\lceil\max_{w\in V}\Exu{\pi}{\tau_{w}}\right\rceil}\leq \frac{1}{2},
	\end{align*}
for any $v\in V$. Then, the Markov property yields 
	\begin{align*}
	    \Pru{u}{\tau_{v}\geq 20\left\lceil \frac{\log_2 n}{k}\right\rceil\cdot   \left\lceil\max_{w\in V}\Exu{\pi}{\tau_{w}}\right\rceil} \leq \left(\frac{1}{2}\right)^{5(\log_2 n)/k} = \frac{1}{n^{5/k}}.
	\end{align*}
	Thus, by independence of the $k$ walks
	\begin{align*}
	    \Pru{\pi^k}{\tau_{v}^{(k)}\geq 20\left\lceil \frac{\log_2 n}{k}\right\rceil\cdot  \left\lceil\max_{w\in V}\Exu{\pi}{\tau_{w}}\right\rceil} \leq \frac{1}{n^5},
	\end{align*}
	and finally by the union bound,
	\begin{align*}
	    \Pru{\pi^k}{\tcov^{(k)}\geq  20\left\lceil \frac{\log_2 n}{k}\right\rceil\cdot \left\lceil\max_{w\in V}\Exu{\pi}{\tau_{w}}\right\rceil} \leq \frac{1}{n^4}.
	\end{align*}
	Therefore, since the cover time of a single walk satisfies $\tcov =\BO{n^3}$, we have
	\begin{align*}
	\Exu{\pi^k}{\tcov^{(k)}}  \leq  20\left\lceil \frac{\log_2 n}{k}\right\rceil\cdot \left\lceil\max_{w\in V}\Exu{\pi}{\tau_{w}}\right\rceil+ \lo{1}.
	\end{align*}
	
We now cover the case  $ 8\log_2 n\leq k \leq 100\max_{v\in V}\Exu{\pi}{\tau_v} \cdot \log n$, where we apply Items \eqref{union} \& \eqref{returns} of Lemma \ref{covfromvisits} to bound $\Pru{\pi^k}{\tau_{\mathsf{cov}}^{(k)} \geq t}$. Observe that
	\begin{align*}
		    \sum_{i=0}^{t} P_{v,v}^i = \sum_{i=0}^{t} (P_{v,v}^i-\pi(v)) + (t+1)\pi(v). 
		\end{align*}
		Since the walk is lazy  $(P_{v,v}^i-\pi(v))$ is non-negative and non-increasing in $i$ \cite[Exercise 12.5]{levin2009markov}. Thus
		\begin{align*}
		    \sum_{i=0}^{t} P_{v,v}^i \leq  \sum_{i=0}^{\infty} (P_{v,v}^i-\pi(v)) + (t+1)\pi(v) = \pi(v)\Exu{\pi}{\tau_v}+ (t+1)\pi(v), 
		\end{align*}
		as $\pi(v)\Exu{\pi}{\tau_v}=\sum_{i=0}^{\infty} (P_{v,v}^i-\pi(v))$ by \cite[Lemma 2.11]{aldousfill}. Now, by Lemma \ref{covfromvisits} \eqref{returns}  we have \begin{align*}
		     \Pru{\pi}{N_v(t)\geq 1 } \geq \frac{(t+1)\pi(v)}{\pi(v)\Exu{\pi}{\tau_v}+ (t+1)\pi(v)} =\frac{(t+1)}{\Exu{\pi}{\tau_v}+ (t+1)}.
		 \end{align*}
	     Choosing $t = 8\left\lceil \frac{\log n}{k}\max_{v\in V}\Exu{\pi}{\tau_v} \right\rceil$ yields
		\begin{align*}
		     \Pru{\pi}{N_v(t)\geq 1 } \geq \frac{8\log n}{k+8\log n} \geq \frac{4\log n}{k},
		 \end{align*}
		 since $k\geq 8\log n$. Then an application of Lemma \ref{covfromvisits} \eqref{union} gives 
		 \begin{align*}
		     \Pru{\pi^k}{\tau_{\mathsf{cov}}^{(k)} \geq t  } \leq n\exp\left(-4\log n\right) = \frac{1}{n^3}.
		 \end{align*}
		 Again, as $\tcov = \bigO(n^3)$, we conclude $\Exu{\pi^k}{\tcov^{(k)}}  \leq 8\left\lceil \frac{\log n}{k}\max_{v\in V}\Exu{\pi}{\tau_v} \right\rceil +\bigO(1)$.
	 
 Finally assume that $k\geq 100\max_{v\in V}\Exu{\pi}{\tau_v} \cdot \log n$. Recall that $ \max_v\Exu{\pi}{\tau_v} \geq (1/2) \cdot  \max_{x,y}\Exu{x}{\tau_y}$ by \cite[Lemma 3.15]{aldousfill} and $\max_{x,y}\Exu{x}{\tau_y} \geq \max_{x,y} m R(x,y)$ by the commute time identity \cite[Proposition 10.6]{levin2009markov}, where $R(x,y)$ is the effective resistance between $x$ and $y$ (see \cite[Section 9.4]{levin2009markov}). Thus we have \begin{equation} \label{eq:pilower} \max_v\Exu{\pi}{\tau_v} \geq \frac{1}{2} \cdot m\cdot \max_v \frac{1}{\deg(v)} \geq\frac{1}{2}\cdot  \max_v \frac{1}{\pi(v)},\end{equation}  as $R(x,y)\geq \max\{1/\deg(x),1/\deg(y)\}$ by the definition of the effective resistance.  
   The probability that any vertex $u$ is unoccupied at time $0$ is $(1-\pi(u))^k\leq e^{-\pi(u)k}$. For any graph $\tcov  \leq 16n^3 $ by \cite[Theorem 2]{KLScov} and thus, one can check that by \eqref{eq:pilower}, we have \[\Exu{\pi^k}{\tau_{\mathsf{cov}}^{(k)}}\leq \max_{v\in V} 16n^3 \cdot n e^{-\pi(v)k}= o\Big(\frac{\max_{v\in V}\Exu{\pi}{\tau_v} \cdot \log n}{k}\Big),\]as claimed. 
\end{proof}

	\begin{proof}[Proof of Lemma \ref{thm:constreturn}]
		The lower bound will follow from the general lower bound we prove later in Theorem \ref{generallower}. 
		For the upper bound notice that, by Lemma \ref{covfromvisits} \eqref{trelenough} and our hypothesis that $\sum_{i=0}^{t} P_{vv}^i =\BO{1 + t\pi(v)}$ for any $t\leq \trel$, we have that for any  $T\geq 0$,  
		\begin{equation*}
		\sum_{t=0}^T P_{v,v}^t \leq  \frac{e}{e-1}\left(\sum_{i=0}^{\lceil \trel \rceil-1}P_{v,v}^{i}  -\lceil \trel \rceil\pi(v)\right)+ (T+1)\pi(v)\leq C(1 + T\pi(v)), 
		\end{equation*} for some constant $C<\infty$. Now, by Lemma~\ref{covfromvisits} \eqref{returns} and as $\pi_{\mathsf{min}} =\Omega(1/n)$ holds by hypothesis, for any $v$ and $T= \BO{n}$ there exists a constant $C'<\infty$ such that
		
		\begin{equation}\label{eqn:returnHyper}\Pru{\pi}{N_v(T)\geq 1} \geq \frac{T\pi(v)}{C(1 + T\pi(v))}\geq \frac{T}{C'n}.\end{equation} First consider $k=\omega(\log n)$, and let $T= \left\lceil 4C'(n/k)\log n\right\rceil =\BO{n}$. Lemma~\ref{covfromvisits} \eqref{union} and \eqref{eqn:returnHyper}  give
		\begin{align*}
		\Pru{\pi^k}{\tau_{\mathsf{cov}}^{(k)} \geq T} \leq n \exp\left(-k\Pru{\pi}{N_v(T)\geq 1} \right)=n\exp\left(-4\log n\right)=\frac{1}{n^3}  .
		\end{align*}
		Otherwise, if $k = \BO{\log n}$, then we consider consecutive periods of length $5t_{\mix}= \BO{n}$. For any $u,v\in V$, \eqref{eqn:returnHyper} gives	\begin{equation*}
		\Pru{u}{N_v(5t_{\mix})\geq 1}\geq \frac{1}{4}\Pru{\pi}{N_v(t_{\mix})\geq 1} \geq \frac{t_{\mathsf{mix}}}{4C'n}.\end{equation*}
		The probability that a vertex $v$ is not hit in a period, starting from any vector in $V^k$ is\begin{align*}
		(1-\min_{u\in V}\Pru{u}{N_v(5t_{\mix})\geq 1})^k \leq e^{-kt_{\mathsf{mix}}/(4C'n)}
		\end{align*}
		and thus after $20C'\lceil \frac{n}{k t_{\mathsf{mix}}}\cdot \log n\rceil $ periods the probability $v$ has not been hit is at most $e^{-5 \log n}= n^{-5}$. By the union bound, and since worst-case cover time of a single walk on any graph is $\BO{n^3}$, the cover time is $\BO{t_{\mix} \cdot   \frac{n}{k t_{\mathsf{mix}}}\cdot \log n }= \BO{\frac{n}{k}\cdot \log n}$.
	\end{proof}

	\begin{proof}[Proof of Lemma \ref{harmonicreturns}] From Lemma \ref{covfromvisits} \eqref{trelenough} we deduce that $\sum_{i=0}^{t} P_{v,v}^i  = \BO{1+ t/n+\log t}$ for any $t\leq n(\log n)^2$. Thus we can apply Lemma \ref{covfromvisits} \eqref{returns} which, for any $2\leq t \leq n(\log n)^2$, gives   \begin{equation}\label{eq:kprob}k\Pru{\pi}{\tau_v\leq t} \geq \frac{ckt}{n\log t + t }, \end{equation} for some fixed constant $c>0$. Let $t^*=C \frac{n\log n}{k}\log \left(\frac{n\log n}{k}\right)$ and $C\log n \leq k \leq (n\log n )/3 $ for some constant $1<C:=C(c)<\infty$ to be determined later then,
		\begin{equation}\label{eq:t*harm}n\log t^* = n \log \left(C \frac{n\log n}{k}\log \left(\frac{n\log n}{k}\right)\right)\geq n\log\left( \frac{n\log n}{k}\right)\geq   t^* .    \end{equation}In addition we have 
		\begin{equation}\label{eq:t*harmotherway}\log t^* =\log \left(C \frac{n\log n}{k}\log \left(\frac{n\log n}{k}\right)\right)\leq (2 + \log C)\log\left( \frac{n\log n}{k}\right) .    \end{equation}
		Thus inserting \eqref{eq:t*harm} into \eqref{eq:kprob} then applying \eqref{eq:t*harmotherway} yields the following for any $v\in V$,
		\[k\Pru{\pi}{\tau_v\leq t} \geq \frac{c}{2} \cdot \frac{kt^*}{n\log t^*} \geq \frac{c  }{2  }\cdot \frac{(Cn\log n)\log\left(\frac{n\log n}{k}\right) }{(2 + \log C)n\log\left(\frac{n\log n}{k}\right)}  = \frac{Cc \log n }{4 + 2\log C  }  .\] We can assume w.l.o.g. that $c<1$ and thus taking $C= 100/c^2$ yields $\frac{Cc }{4 + 2\log C  } > \frac{100/c}{14 + 4\log(1/c)} >5 $. So  by independence of the walks we have  \[\Pru{\pi^k}{\tau_v^{(k)} >t^*} \leq \left( 1-\Pru{\pi}{\tau_v\leq t^* }\right)^k\leq \exp\left(-k\Pru{\pi}{\tau_v\leq t^* }\right) \leq n^{-5}  .\] Thus, by the union bound $\Pru{\pi^k}{\tau_\mathsf{cov}^{(k)} >t^*}\leq n^{-4} $. So, since the worst-case expected cover time by a single walk on any graph is $\BO{n^3}$, we have  $\tcov^{(k)}(\mathcal{T}_n,\pi)= \BO{t^*} $, as claimed. 
		
		The case $k\leq (100\log n )/c^2$ remains.	For any $k = \BO{\log n}$, consider periods of length $5t_{\mix}=\BO{n}$. Then by Lemma \ref{covfromvisits} \eqref{returns}, for any pair of vertices $v,w$ and some $C'<\infty$, we have \[
		\Pru{w}{N_v(5t_{\mix})\geq 1}\geq \frac{1}{4}\Pru{\pi}{N_v(t_{\mix})\geq 1} \geq \frac{t_{\mathsf{mix}}}{C'n\log n}.\]  Thus the probability $v\in V$ is not hit in a period, starting from any configuration in $ V^k$ is \begin{align*}
		(1-\Pru{w}{N_v(5t_{\mix})\geq 1})^k \leq e^{-kt_{\mathsf{mix}}/(C'n\log n)}.
		\end{align*}
		Thus after $5C'\lceil (n\log^2 n)/(k t_{\mathsf{mix}})\rceil $ periods the probability $v$ has not been hit is at most $e^{-5 \log n}= n^{-5}$, so (similarly) the cover time is $\BO{t_{\mix} \cdot \frac{n\log^2 n}{k t_{\mathsf{mix}}}}= \BO{\frac{n}{k}\cdot \log^2 n}$ as claimed.
	\end{proof}

	\subsection{Proofs of Lower Bounds}\label{sec:lowproof}

 	We begin by proving the lower bound obtained by the reset coupling.

	\begin{proof}[Proof of Lemma \ref{resetcouple}]Let $x = Ck/T$, and recall that in each step that the random walk on $\hat{G}(x)$ is not at the vertex $z$, it moves to $z$ with probability $x$. For $i\geq 1$, we refer to the portion of the walk  between the $i$\textsuperscript{th} and $(i+1)$\textsuperscript{th} visits to $z$ as the $i$\textsuperscript{th} sub-walk. For $i\geq 0$ let $X_i\sim \geo{x}$ be i.i.d.\ geometric random variables with mean $1/x$, then it follows that the $i$\textsuperscript{th} sub-walk has length $X_i+1$ and takes $X_i-1$ steps inside $G$ (the steps leaving and entering $z$ are not in $G$) and it takes at most $X_0$ steps to first visit $z$.

If we let $X=\sum_{i=0}^{Ck/3}X_i$ then we see that $X + Ck/3$ stochastically dominates the time taken for the first $Ck/3$ sub-walks to occur within a walk on $\hat{G}(x)$. Observe that $\Ex{X} = T/3$ and so a Chernoff bound for sums of geometric r.v.'s \cite[Theorem 2.1]{Geotail} gives \[\Pr{X >2\Ex{X}  } < e^{-x\Ex{X}(2- 1 -\ln(2))}\leq e^{-\frac{Ck}{10}}.\] Note that $2\Ex{X}+ Ck/3 \leq T $. Thus, if $\mathcal{E}_1$ is the event that there are at least $Ck/3$ sub-walks in a walk of length $T$, then  $\Pr{\mathcal{E}_1^c}\leq e^{-\frac{Ck}{10}}$. 
	
	For each $i\geq 0$, we have $\Pr{X_i\geq 1/(2x)}= (1-x)^{1/(2x)}\geq  1/2 $ by Bernoulli's inequality. Let $\mathcal{E}_2$ be the event that $\{X_i\geq 1/(2x)\}$ holds for more than $k$ values $1\leq i\leq Ck/3$. Recall that $C>30$ and, as the $X_i$'s are independent, a Chernoff bound \cite[Theorem 4.5]{PandC} gives 
		\[\Pr{\mathcal{E}_2^c}\leq  \Pr{\bin{\frac{Ck}{3}}{\frac{1}{2}}\leq  k}\leq  \exp\left(- \frac{1}{2}\cdot \left(1-\frac{6}{C}\right)^2\cdot \frac{Ck}{6}   \right) \leq  e^{-\frac{Ck}{20}} .\]

		Observe that conditional on $\mathcal{E}_1\cap \mathcal{E}_2$ there are at least $k$ sub-walks which take at least $s$ many steps within $G$ where $s= \frac{1}{2x} -1  = \frac{T}{2Ck}-1\geq \frac{T}{10Ck}$, since $T\geq 5Ck$. Recall that at each visit to $z$ the walk on $\hat{G}(x)$ moves to a vertex $v\in V(G)$ proportional to $\pi(v)$. It follows that, conditional on $\mathcal{E}_1\cap \mathcal{E}_2$, we can couple $k$  stationary random walks of length $s$ on $G$ to a single random walk of length $T$ on $\hat{G}$ such that if the latter walk has not covered $\hat{G}$ then the former walks have not covered $G$. Thus 
		\[\Pru{\pi^k,G}{\tau_{\mathsf{cov}}^{(k)} >\frac{T}{10Ck}}\geq  \Pru{\widehat {\pi},\widehat{G}(x)}{\tau_{\mathsf{cov}} >T, \mathcal{E}_1\cap \mathcal{E}_2}\geq  \Pru{\widehat {\pi},\widehat{G}(x)}{\tau_{\mathsf{cov}} >T} - \Pr{ \mathcal{E}_1^c\cup \mathcal{E}_2^c} , \] and the result follows since $ \Pr{ \mathcal{E}_1^c\cup \mathcal{E}_2^c} \leq e^{-\frac{Ck}{10}} + e^{-\frac{Ck}{20}}<e^{-\frac{Ck}{50}}  $. 	\end{proof}

	We shall prove the general lower bound with this coupling, however,  we must first state a technical lemma. 
	\begin{lemma}\label{lowerbddreversible} For any reversible Markov chain and $0<b<1$ we have $ \tcov\leq 2\tcov(\pi) $ and $\Pru{\pi}{\tau_{\cov}>b\cdot \tcov(\pi)  } > \frac{1-b}{2}$. 
	\end{lemma}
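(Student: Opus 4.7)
The proof splits naturally into two parts. For the first inequality $\tcov(\pi) \geq \tcov/2$, I would start from the elementary strong Markov bound $\Exu{u}{\tau_{\cov}} \leq \Exu{u}{\tau_v} + \Exu{v}{\tau_{\cov}}$, valid for all $u,v$, and take expectation over $v\sim\pi$ to obtain $\Exu{u}{\tau_{\cov}} \leq \sum_v \pi(v)\Exu{u}{\tau_v} + \tcov(\pi)$. By the random target lemma for reversible chains (see, e.g., \cite[Lemma~10.1]{levin2009markov}), the sum on the right-hand side is independent of $u$; averaging once more over $u\sim\pi$ shows it equals $\sum_{u,v}\pi(u)\pi(v)\Exu{u}{\tau_v}$, which is at most $\tcov(\pi)$ because $\Exu{u}{\tau_v}\leq \Exu{u}{\tau_{\cov}}$ for every $v$. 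Maximising over $u$ therefore yields $\tcov \leq 2\tcov(\pi)$.

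For the anti-concentration bound, set $L := \tcov(\pi)/100$ and $q := \Pru{\pi}{\tau_{\cov} > L}$, the quantity we wish to lower bound. A truncated first-moment split at $L$ immediately gives
\begin{align*}
\Exu{\pi}{\tau_{\cov}\,\ind{\tau_{\cov}>L}} \;\geq\; \tcov(\pi) - L(1-q) \;\geq\; \frac{99}{100}\,\tcov(\pi).
\end{align*}
For the matching upper bound, I would combine the crude Markov estimate $\max_x\Pru{x}{\tau_{\cov} > 2\tcov}\leq 1/2$ with the strong Markov property applied at times $L+2k\tcov$ to conclude inductively that $\Pru{\pi}{\tau_{\cov} > L + 2k\tcov}\leq q\cdot 2^{-k}$ for every integer $k\geq 0$. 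Integrating the survival function on the slabs $[L+2k\tcov,\,L+2(k+1)\tcov]$ and summing the geometric series then yields $\int_L^\infty \Pru{\pi}{\tau_{\cov}>t}\,\mathrm{d}t \leq 4q\tcov$, which is at most $8q\,\tcov(\pi)$ by the first half of the lemma. Consequently $\Exu{\pi}{\tau_{\cov}\,\ind{\tau_{\cov}>L}} = Lq + \int_L^\infty \Pru{\pi}{\tau_{\cov}>t}\,\mathrm{d}t \leq (1/100 + 8)\, q\,\tcov(\pi)$, and comparing against the lower bound gives $q \geq 99/801 > 1/30$.

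The main technical point will be the iterative tail estimate: I must apply the worst-case Markov bound uniformly over the (random) configuration at each conditioning time $L+2k\tcov$, but since $\max_x\Pru{x}{\tau_{\cov}>2\tcov}\leq 1/2$ is itself worst-case over starts, the strong Markov property delivers this at once. Beyond this the argument is essentially constant tracking; the gap between the resulting $99/801 \approx 0.123$ and the claimed threshold $1/30\approx 0.033$ leaves comfortable slack, so no further tightening is needed.
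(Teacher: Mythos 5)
Your proposal is correct and follows essentially the same route as the paper: the random target lemma (plus the cover-time triangle inequality) for $\tcov \leq 2\tcov(\pi)$, and Markov's inequality iterated from worst-case restarts to obtain a geometric tail, compared against a first-moment lower bound, for the anti-concentration claim. The only difference is bookkeeping — by threading the factor $q$ through every tail slab you get $q \geq 99/801 \approx 0.124$, somewhat stronger than the paper's $\approx 0.037$, but the underlying argument is the same.
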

	
	\begin{proof}[Proof of Lemma \ref{lowerbddreversible}]
		We first show $\tcov\leq 2\tcov(\pi)$. The Random Target Lemma \cite[Lemma 10.1]{levin2009markov} states that $\Exu{x}{\tau_\pi}=\sum_{v\in V}\Exu{x}{\tau_v}\pi(v)$ does not depend on $x\in V$. Thus, for any $u\in V$,  \[ \Exu{u}{\tau_\pi} = \Exu{\pi}{\tau_\pi}\leq \max\limits_{v\in V} \Exu{\pi}{\tau_v}. \] Then, since choosing a vertex $v$ independently from $\pi$ and waiting for the random walk to hit $v$ is a strong stationary time for the random walk, we have 
		\begin{align*} \Exu{u}{\tau_{\mathsf{cov}}} &\leq \Exu{u}{\tau_\pi} + \Exu{\pi}{\tau_{\mathsf{cov}}} \leq \max\limits_{v\in V}\Exu{\pi }{\tau_v} + \Exu{\pi}{\tau_{\mathsf{cov}}} \leq 2 \Exu{\pi}{\tau_{\mathsf{cov}}}. 
		\end{align*}
	Observe that, by the Markov Property, for any $0<b<1$ we have 
		\begin{align*}\tcov(\pi) &\leq b \tcov(\pi)\cdot \Pru{\pi}{\tau_{\cov}\leq b\tcov(\pi) }+(b \tcov(\pi) + \tcov)\cdot \Pru{\pi}{\tau_{\cov}> b\tcov(\pi) }\\ 
		&=  b \tcov(\pi) + \tcov\cdot \Pru{\pi}{\tau_{\cov}> b \tcov(\pi) }\\
		&\leq   b \tcov(\pi) + 2\tcov(\pi)\cdot \Pru{\pi}{\tau_{\cov}> b \tcov(\pi) },\end{align*} as $\tcov\leq 2\tcov(\pi)$. Rearranging gives $\Pru{\pi}{\tau_{\cov}> b\cdot \tcov(\pi) } \geq \frac{1-b}{2}$.	\end{proof}
	
	Before proving Theorem \ref{generallower} we must recall a result by Aldous \cite{aldouslower}.
	
	\begin{theorem}[{\cite[Theorem 1]{aldouslower}}] \label{aldouslower} Let $(X_t)_{t\geq 0}$ be a stationary Markov chain on a state space $I$ with
		irreducible transition matrix $P_{i,j}$ and stationary distribution $\pi$. Let $\tau_{\mathsf{cov}}$ be the cover time and define $t^{*}$ to be the solution of
		\begin{align*} 
		\sum_{i\in I} \exp\left( -t^{*} \cdot \pi(i) \right) = 1.  
		\end{align*} Let $0 < \theta < 1$ and suppose the following hypotheses are satisfied:
		\begin{enumerate}[(a)]
			\item $P_{i,i} = 0$ for all $i\in I$.
			\item $\sum_{j\in J}\exp\left( -t^{*} \cdot \pi(j) \right)\geq \theta $ where $J\subseteq  I$ is the set of states $j$ such that $\max_{i\in I }P_{i,j}\leq 1 - \theta $.  
			\item  The chain is reversible; that is
			$\pi(i)P_{i,j} = \pi(j) P_{j,i}$
			for all $i, j\in I$.
		\end{enumerate}
		Then $\Ex{\tau_{\mathsf{cov}}}\geq  c_0 t^*$, where $c_0 > \theta$ depends only on $\theta$.
	\end{theorem}
We now have all the pieces we need to prove Theorem \ref{generallower}.
	
	\begin{proof}[Proof of Theorem \ref{generallower}] We wish to apply Theorem \ref{aldouslower} to the geometric reset graph $\hat{G}(x)$ for some suitable $x:=x(n,k)$. The following claim is based on \cite[Proposition 2]{aldouslower}. 
		
		\begin{claim}\label{clm:new} For $n\geq 3$ and $0< x \leq 1/2$ the walk on $\hat{G}(x)$ satisfies Theorem \ref{aldouslower} with $\theta=1/2$. 			
		\end{claim}
		\begin{poc}Let $\pi$ and $\hat{\pi}$ denote the stationary distributions of $G$ and $\hat{G}$ respectively. Note that for any $x<1$ the walk on the weighted graph $\hat{G}(x)$ satisfies $P_{i,i}=0$ and is reversible. These are items (a) and (c) from Theorem \ref{aldouslower} and we now prove item (b) also holds for $\theta=1/2$. 
			
			Partition $V(\hat{G})= V(G)\cup\{z\}$ into sets $I_1$, $I_2$, $I_3$ as follows. Let $I_1$ be the set of
			leaves of $G$, that is vertices of degree $1$ in $G$ (and degree $2$ in $\hat{G}$ as all vertices are connected to $z$). Let $I_2$ be the set of vertices of $V(G)$ which are
			adjacent to at least one leaf of $G$. Let $I_3$ be the remaining vertices (in particular $z \in I_3$). Let $J = I_1 \cup I_3$ and fix $j \in J$. For the case $j=z$, by the definition of $\hat{G}$ and by hypothesis, we have $P_{v,z} =x\leq 1/2 $ for any $v\in V$. Otherwise, if $i$ is a neighbour of $j\neq z$ then $i$ cannot be a leaf, so
			$P_{i,j} = \frac{1-x}{\deg(i)(1-x)+x\deg(i)} = \frac{1-x}{\deg(i)} \leq 1/2$. Thus it suffices to verify \begin{equation}\label{eq:claimsum}\sum_{j\in J}\exp\left( -t^{*} \cdot \hat{\pi}(j) \right)\geq 1/2 .\end{equation} Consider an edge $ik\in E(\hat{G})$  where $i\in I_1$, then we must have $k\in  I_2$. Since $\hat{\pi}(i)<\hat{\pi}(k)$ we have $\exp\left( -t^{*} \cdot \pi(i) \right)\geq \exp\left( -t^{*} \cdot \pi(k) \right)$ and also $|I_1|\geq |I_2|$. Thus, summing over all leaves $i$ gives \begin{equation}\label{eq:domsum}\sum_{i\in I_1}\exp\left( -t^{*} \cdot \hat{\pi}(i) \right)\geq \sum_{k\in I_2}\exp\left( -t^{*} \cdot \hat{\pi}(k) \right).\end{equation} However by the definition of $t^*$, the sum over all $i\in I$ is equal to $1$, thus so \eqref{eq:domsum} implies that the
			sum over $I_2$ is at most $1/2$, and so \eqref{eq:claimsum} follows.\end{poc}
		
		Now since $t^*\geq n\log n $ by \cite[(1.2)]{aldouslower}, it follows from Claim \ref{clm:new} and Theorem \ref{aldouslower} that there exists some universal constant $0<c<\infty$ such that for any $0<x\leq 1/2$ we have  \[\Exu{\widehat {\pi},\widehat{G}(x)}{\tau_{\mathsf{cov}}} \geq cn\log n. \] Thus by Lemma \ref{lowerbddreversible} we have $\Pru{\widehat {\pi},\widehat{G}(x)}{\tau_{\mathsf{cov}} >\frac{c}{2}\cdot n\log n} \geq \frac{1}{4}$. We seek to apply Lemma \ref{resetcouple} with $C=100$ and $T= \frac{c}{2}\cdot n\log n$. Firstly, we see that the condition $T\geq 5kC$ forces the restriction $k\leq c'\cdot  n \log n$ where $c' = \frac{c}{2\cdot 5\cdot C}= \frac{c}{1000}$ for  $0<c<\infty$ universal. Secondly the condition $1/2\geq x = Ck/T$ forces the condition $k\leq c''n\log n $ where $c'' = \frac{c}{2\cdot C}\geq c' $.  
		
		Thus, provided $1\leq k\leq c'\cdot  n \log n$ holds, all other assumptions are satisfied and we have \[\Pru{\pi^k,G}{\tau_{\mathsf{cov}}^{(k)} >\frac{(c/2)\cdot n\log n}{ 10\cdot 100\cdot k}}>\frac{1}{4}  -\exp\left(-\frac{100k}{50}\right) \geq \frac{1}{4} - e^{-2}> \frac{1}{10} \]by Lemma \ref{resetcouple}.  The result follows by taking the constant in the statement to be $c/2000\leq c'$,  since $c>0$ is universal.   
	\end{proof}
 
\begin{proof}[Proof of Lemma \ref{lemma:generalLowerboundpi}]
	Let $X_v= \mathbf{1}(\tau_v >t)$ be the indicator that $v\in V$ has \textit{not} been visited up-to time $t$ and $X = \sum_{v\in S} X_v$. Let  $p_v = \Pru{\pi}{\tau_v\leq t}$, $p=\max_{v\in S}p_v$, and observe that $\Exu{\pi^k}{X_v} = (1-p_v)^k$. Thus, we have   \begin{equation}\label{ekp}\Exu{\pi^k}{X}=\sum_{v\in S}(1-p_v)^k\geq \sum_{v\in S} e^{-kp_v}(1-p_v^2 k) \geq  \sum_{v\in S} e^{-kp_v}(1-p^2 k)>0,\end{equation}
	where the first inequality is by \eqref{eq:cheatsheet} since  $0\leq p\leq 1  $ and the last is by the assumption $p^2k \leq 2p^2k  <1$. Let  $R(t)$ be the number of vertices in $S$ that are visited by a single $t$-step random walk. Observe that $\Exu{\pi}{R(t)} = \sum_{v \in S} p_v$. Let $r(v,w) = \Pru{\pi}{\tau_v\leq t,\tau_w\leq t}$, then for any $v,w\in V$,
	\begin{align*}
		\Exu{\pi^k}{X_v X_w} 
		&= \left(1-\Pru{\pi}{\{\tau_v\leq t\} \cup \{\tau_w\leq t\}}\right)^k\nonumber= (1-p_v-p_w +r(v,w))^k.
	\end{align*}Recall the identity $a^k-b^k= (a-b)\sum_{i=0}^{k-1}a^{k-1-i}b^i$ for $k\geq 2$. Thus, 
	\begin{align*}
		\E_{\pi^k}[X^2] &= \sum_{v\in S} \sum_{w\in S} (1-p_v-p_w+r(v,w))^k\notag 
		\\
		&=\sum_{v\in S} \sum_{w\in S}  \left( (1-p_v-p_w)^k + ((1-p_v-p_w+r(v,w))^k-(1-p_v-p_w)^k)\right)\notag  \\
		&= \sum_{v\in S} \sum_{w\in S}  \left( (1-p_v-p_w)^k  +r(v,w)\sum_{i=0}^{k-1}(1-p_v-p_w)^i(1-p_v-p_w+r(v,w))^{k-1-i}\right).\end{align*}Since $2p^2k <1$ and $k\geq 2$, we have $1- p_v -p_w \geq 1-2p >0$. Additionally we have $ r(v,w)\leq \min(p_v,p_w)$ and so $1- p_v -p_w+r(v,w)\leq 1- \max\{p_v,p_w\} \leq 1$. Thus \begin{align} 
		\E_{\pi^k}[X^2]	&\leq \sum_{v\in S} \sum_{w\in S}  \left(  (1-p_v-p_w)^k +r(v,w) \cdot k\right)\notag  \\
		&\leq \sum_{v\in S} \sum_{w\in S}  e^{-kp_v-kp_w} +\left(\sum_{v\in S} \sum_{w\in S}  r(v,w)\right) \cdot k\notag  \\
		&= \left(\sum_{v\in S}e^{-kp_v} \right)^2+ \E_{\pi}[R(t)^2]\cdot k.\label{eq:ekp2} 
	\end{align}
	Recall the Paley-Zygmund inequality: for any non-negative random variable $X$,  $\Pru{\pi^k}{X> 0} \geq  \Exu{\pi^k}{X}^2/\Exu{\pi^k}{X^2} $. Inserting \eqref{ekp} and \eqref{eq:ekp2} into the (inverted) fraction $\frac{\Exu{\pi^k}{X^2}}{\Exu{\pi^k}{X}^2}$ gives
	\begin{equation*}
		\frac{\Exu{\pi^k}{X^2}}{\Exu{\pi^k}{X}^2}  \leq \frac{1}{(1-p^2k)^2}+ \frac{\E_{\pi}{[R(t)^2]}k  }{|S|^2e^{-2pk}(1-p^2k)^2}  
	 =\frac{1}{(1-p^2k)^2}+ \frac{ke^{2pk}\E_{\pi}[R(t)^2] } {|S|^2(1-p^2k)^2}.
	\end{equation*}
	Finally, we claim that $\Exu{\pi}{R(t)^2} \leq2p^2 |S|(\min_{v \in S} \pi(v))^{-1}$, therefore
	\begin{equation*}
		\frac{\Exu{\pi^k}{X^2}}{\Exu{\pi^k}{X}^2} \leq \frac{1}{(1-p^2k)^2}+  \frac{2kp^2e^{2kp}|S|/(\min_{v \in S} \pi(v))}{|S|^2(1-p^2k)^2} \leq  \frac{1+  \frac{2kp^2e^{2kp}}{|S|\min_{v \in S} \pi(v)}}{(1-p^2k)^2}  
	\end{equation*}Now since $\Pru{\pi^k}{ \tau_{\mathsf{cov}}^{(k)} \leq  t} = 1 - \Pru{\pi^k}{ \tau_{\mathsf{cov}}^{(k)} > t} \leq 1 - 	 \Exu{\pi^k}{X}^2/\Exu{\pi^k}{X^2} $  we have 
\[\Pru{\pi^k}{ \tau_{\mathsf{cov}}^{(k)} \leq  t} \leq 1 - \frac{(1-p^2k)^2}{1+  \frac{2kp^2e^{2kp}}{|S|\min_{v \in S} \pi(v)}}\leq 1 - (1-2p^2k)\left(1 - \frac{2kp^2e^{2kp}}{|S|\min_{v \in S} \pi(v)} \right), \] since $\frac{1}{1+x}\geq 1-x $ for any $x\geq 0$.  Now, as $|S|\min_{v \in S} \pi(v)\leq 1$ we have 
\[\Pru{\pi^k}{ \tau_{\mathsf{cov}}^{(k)} \leq  t} \leq 2kp^2 + \frac{2kp^2e^{2kp}}{|S|\min_{v \in S} \pi(v)} \leq \frac{4kp^2e^{2kp}}{|S|\min_{v \in S} \pi(v)}, \] which concludes the main proof.		To prove the claim, first observe that \begin{equation}\label{eq:Rexpression}\Exu{\pi}{R(t)^2} = \Exu{\pi}{\sum_{v\in S}\sum_{w\in S} \mathbf{1}(\tau_v\leq t )\mathbf{1}(\tau_w\leq t )} = \sum_{v\in S}\sum_{w\in S}  \Pru{\pi}{\tau_v\leq t, \tau_w\leq t}.\end{equation} Now, by the Markov property, for any pair $(v,w)\in V^2$ we have
\begin{equation}\label{eq:pibound}
	\begin{aligned}
		\Pru{\pi}{\tau_v\leq t, \tau_w\leq t} &\leq \Pru{\pi}{\tau_v\leq t}\Pru{v}{\tau_w\leq t} + \Pru{\pi}{\tau_w\leq t}\Pru{w}{\tau_v\leq t}\\
		&\leq  p\Pru{v}{\tau_w\leq t} + p\Pru{w}{\tau_v\leq t}.
	\end{aligned}\end{equation}
Thus by inserting the bound from \eqref{eq:pibound}	into \eqref{eq:Rexpression} we have \begin{align*}
		\Exu{\pi}{R(t)^2} &\leq p\sum_{v\in S}\sum_{w\in S}\Pru{v}{\tau_w\leq t} + p\sum_{w\in S}\sum_{v\in S}\Pru{w}{\tau_v\leq t}  \\
		&\leq p\sum_{v\in S} \Exu{v}{R(t)}+p\sum_{w\in S}\Exu{w}{R(t)}\nonumber\\
	&\leq p\sum_{v\in S} \Exu{v}{R(t)}\cdot \pi(v)\cdot(\min_{v \in S} \pi(v))^{-1}+p\sum_{w\in S}\Exu{w}{R(t)}\cdot \pi(w)\cdot (\min_{v \in S} \pi(v))^{-1}\nonumber\\
		&\leq  2p\Exu{\pi}{R(t)}(\min_{v \in S} \pi(v))^{-1}\\ &\leq 2p^2 |S|(\min_{v \in S} \pi(v))^{-1},
	\end{align*}as claimed.
\end{proof}

	\section{Mixing Few Walks to Cover Many Vertices}\label{Sec:partialmixing}
	
	In this section we present several bounds on $\tcov^{(k)}$, the multiple cover time from worst-case start vertices, based on $\tcov^{(k)}(\pi)$, the multiple cover time from stationarity, and a new notion that we call \emph{partial} mixing time. The intuition behind this is that on many graphs such as cycles or binary trees, only a certain number, say $\tilde{k}$ out of $k$ walks will be able to reach vertices that are ``far away'' from their start vertex. That means covering the whole graph $G$ hinges on how quickly these $\tilde{k}$ ``mixed'' walks cover $G$. However, we also need to take into account the number of steps needed to ``mix'' those. Theorem~\ref{thm:characterupper} makes this intuition more precise and suggests that the best strategy for covering a graph might be when $\tilde{k}$ is chosen so that the time to mix $\tilde{k}$ out of $k$ walks and the stationary cover time of $\tilde{k}$ walks are approximately equal. As in the previous section we shall first state our results before proving them in the final two subsections. 
	\subsection{Two Notions of Mixing for Multiple Random Walks}\label{sec:elementry}
	We begin by introducing the notion of \emph{partial mixing time}. For any graph $G$, and any $1 \leq \tilde{k} < k$, we define the \emph{partial mixing time}:
	\begin{equation}\label{def:partialMixingSep} \begin{aligned}
	t_{\mathsf{mix}}^{(\tilde{k},k)}(G) &= \inf\left\{t \geq 1 : \textit{there exists an SST  }\tau\text{ such that }\min_{v\in V}\Pru{v}{\tau \leq t} \geq \tilde{k}/k  \right\}\\
	&= \inf\left\{t \geq 1 : s(t)\leq 1-\tilde k/k  \right\}  .  
	\end{aligned}\end{equation}
	where SST stands for strong stationary time, and $s(t)$ is the separation distance (see Section~\ref{sec:notation}). We note that the two definitions above are equivalent by the following result. 
		\begin{proposition}[{\cite[Proposition 3.2]{AldDia}}] If $\sigma$ is an SST then $\Pr{\sigma >t}\geq s(t)$ for any $t\geq 0$. Furthermore there exists an SST for which equality holds.
	\end{proposition}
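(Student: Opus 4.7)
The proof has two distinct parts, and I would attack them separately. The plan for the inequality $\Pr{\sigma>t}\geq s(t)$ is to unpack the definitions by conditioning on $\sigma$. Fix an arbitrary start vertex $u$ and any $v\in V$, and bound
\[
P^t_{u,v} \;\geq\; \Pru{u}{Y_t=v,\,\sigma\leq t} \;=\; \sum_{k=0}^{t}\Pru{u}{\sigma=k}\cdot\Pru{u}{Y_t=v\mid \sigma=k}.
\]
I would then invoke the strong Markov property at the randomised stopping time $\sigma$ together with the SST property. Conditional on $\{\sigma=k,\,Y_\sigma=w\}$, the remaining $t-k$ steps evolve as $P^{t-k}_{w,\cdot}$; since the SST definition gives $\Pru{u}{Y_\sigma=w\mid \sigma=k}=\pi(w)$, summing over $w$ and using $\pi P = \pi$ yields $\Pru{u}{Y_t=v\mid \sigma=k}=\pi(v)$ for every $k\leq t$. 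Substituting gives $P^t_{u,v}\geq \pi(v)\,\Pru{u}{\sigma\leq t}$, hence $1-P^t_{u,v}/\pi(v)\leq \Pru{u}{\sigma>t}$, and taking the maximum over $u$ and $v$ proves the inequality.

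The equality part is the heart of the proposition, and I would establish it by constructing $\sigma$ explicitly via the Aldous--Diaconis \emph{filling scheme}. The idea is to build $\sigma$ inductively on $t$ so that, for each start vertex $u$, one maintains the invariant $\Pru{u}{Y_t=v,\,\sigma\leq t}=(1-s_u(t))\,\pi(v)$ for every $v$. Concretely, at time $t$, conditional on $\{\sigma>t-1\}$ and the observed history $Y_0,\dots,Y_t$, one declares $\sigma=t$ with the largest probability consistent with this invariant; the residual mass on $\{\sigma>t\}$ is then exactly $s_u(t)$, and the conditional law of $Y_t$ given $\sigma\leq t$ is $\pi$, as required by the SST property.

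The main obstacle is to verify that the scheme is well-defined, namely that the incremental stopping probabilities prescribed by the invariant lie in $[0,1]$. The critical input is the monotonicity $s(t+1)\leq s(t)$, which I would derive from
\[
\min_{u,v} \frac{P^{t+1}_{u,v}}{\pi(v)} \;=\; \min_{u,v}\sum_w P_{u,w}\cdot\frac{P^t_{w,v}}{\pi(v)} \;\geq\; \min_{u',v'} \frac{P^t_{u',v'}}{\pi(v')}
\]
(using that $P$ is stochastic), together with the fact that the separation distance under the filling construction may be tracked per-history rather than only in the worst case. Once the scheme is shown to be consistent, the measurability of $\sigma$ with respect to the natural filtration and the identity $\Pru{u}{\sigma>t}=s_u(t)$ follow by construction, and maximising over $u$ produces the advertised equality with $s(t)$.
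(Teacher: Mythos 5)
The paper does not prove this proposition; it is imported verbatim from Aldous--Diaconis, and your proposal essentially reconstructs the standard argument from that reference. Your first part is correct: conditioning on $\{\sigma=k\}$, using the SST property to get $Y_\sigma\sim\pi$ and the strong Markov property (valid for randomised stopping times, since $\{\sigma=k\}$ depends only on $Y_0,\dots,Y_k$ and independent randomness) to propagate $\pi$ forward, gives $P^t_{u,v}\geq(1-\Pru{u}{\sigma>t})\pi(v)$ and hence $s_u(t)\leq\Pru{u}{\sigma>t}$. The filling-scheme outline for the equality part is also the right construction, and you correctly identify that the only things to check are that the incremental stopping probabilities lie in $[0,1]$, which reduces to $P^t_{u,v}\geq(1-s_u(t))\pi(v)$ (immediate from the definition of $s_u$) and to monotonicity of the separation distance.

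One small repair is needed in the monotonicity step. The construction is carried out for a fixed start vertex $u$ and requires $s_u(t+1)\leq s_u(t)$ for that $u$, whereas your displayed computation, which decomposes $P^{t+1}_{u,v}=\sum_w P_{u,w}P^t_{w,v}$ and takes $\min_{u,v}$ jointly, only yields monotonicity of the worst-case quantity $s(t)=\max_u s_u(t)$ (indeed for fixed $u$ it gives only $s_u(t+1)\leq s(t)$). The fix is to decompose in the other order:
\[
P^{t+1}_{u,v}=\sum_{w}P^{t}_{u,w}P_{w,v}\geq(1-s_u(t))\sum_{w}\pi(w)P_{w,v}=(1-s_u(t))\pi(v),
\]
which gives $s_u(t+1)\leq s_u(t)$ for each fixed $u$, exactly what the per-history stopping probabilities require. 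With that substitution (and the routine observation that $s_u(t)\to0$ for the lazy, hence aperiodic, chain, so $\sigma<\infty$ a.s.), your argument is complete.
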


Finally, we can rewrite the second definition based on $s(t)$ as follows:
\begin{equation}\label{def:partialMixingSepnew}
 t_{\mathsf{mix}}^{(\tilde{k},k)}(G) = \inf \left\{ t \geq 1 \colon \forall u, v \in V \colon P_{u,v}^t \geq \tilde{k}/k \cdot \pi(v) \right\}.
\end{equation}

This notion of mixing, based on the idea of separation distance and strong stationary times for single walks, will be useful for establishing an upper bound on the worst-case cover time. For lower bounds on the cover time by multiple walk we will now introduce another notion of mixing for multiple random walks in terms of hitting probabilities of large sets. Before doing so, we recall a fundamental connection for single random walks which links mixing times with hitting times of large sets. In particular, let
	\[
	t_{\mathsf{H}}(\alpha) = \max_{u \in V, S \subseteq V \colon \pi(S) \geq \alpha} \Exu{u}{ \tau_{S} }, \qquad \text{and}\qquad t_{\mathsf{H}}:= t_{\mathsf{H}}(1/4),
	\]
then the following theorem shows this \textit{large-set hitting time} is equivalent to the mixing time.  
	\begin{theorem}[{\cite{OliMix} and independently \cite{PeresSousi}}]\label{thm:mixhit} Let $\alpha<1/2$.   Then  there  exist  positive  constants $c(\alpha)$ and $C(\alpha)$ so  that  for  every reversible chain \[ c(\alpha) \cdot t_{\mathsf{H}}(\alpha) \leq \tmix(\alpha) \leq C(\alpha) \cdot t_{\mathsf{H}}(\alpha).    \]
	\end{theorem}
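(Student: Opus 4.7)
The plan is to prove the two inequalities separately. The lower bound $c(\alpha) t_{\mathsf{H}}(\alpha) \leq \tmix(\alpha)$ holds for any Markov chain, whereas the upper bound $\tmix(\alpha) \leq C(\alpha) t_{\mathsf{H}}(\alpha)$ is the main obstacle and crucially uses reversibility.

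For the easy direction, I would first invoke the standard submultiplicativity $d(\ell \cdot \tmix) \leq (2 d(\tmix))^{\ell}$ to pass from mixing at threshold $1/4$ down to an arbitrarily small threshold $\beta < \alpha$, at the cost of a multiplicative constant depending only on $\alpha$. Once the chain is $\beta$-mixed, for any $S$ with $\pi(S) \geq \alpha$ and any start $x$ we have $\Pru{x}{X_{\tmix(\beta)} \in S} \geq \alpha - \beta$. Iterating over disjoint blocks of length $\tmix(\beta)$ and applying the strong Markov property yields a geometric tail on $\tau_S$, and so $\Exu{x}{\tau_S} \leq \tmix(\beta)/(\alpha-\beta) = C(\alpha)\tmix(\alpha)$ after optimising $\beta$.

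For the hard direction, the contrapositive is more natural: if the chain fails to mix by time $t$, exhibit a large set that takes long to hit from some vertex. For a worst-case start $x$, define the under-represented set $B_t = \{y : P^t(x,y) < \tfrac{1}{2}\pi(y)\}$; failure to mix forces $\pi(B_t)$ to be bounded below by a constant depending on $\alpha$. Reversibility enters through $P^t(y,x)\pi(y) = P^t(x,y)\pi(x)$, which transfers under-representation from the perspective of $x$ into control of hitting probabilities of $B_t$ from well-represented vertices, thereby producing a set with large expected hitting time from $x$. The main obstacle is arranging that both $\pi(B_t) \geq \alpha$ and $\Exu{x}{\tau_{B_t}} = \Omega(t)$ hold simultaneously, with constants depending only on $\alpha$. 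The clean remedy, following Oliveira and Peres--Sousi, is to stop the chain at a geometric random time of mean $\sim t_{\mathsf{H}}(\alpha)$ and work with the resulting smoothed distribution; a Cauchy--Schwarz manipulation exploiting reversibility then converts smoothed mixing into genuine mixing at a constant-factor cost absorbed into $C(\alpha)$.
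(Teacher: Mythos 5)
This theorem is imported from Oliveira and from Peres--Sousi; the paper gives no proof of it, so there is no internal argument to compare yours against. Your treatment of the easy direction ($c(\alpha)\, t_{\mathsf{H}}(\alpha) \leq \tmix(\alpha)$) is essentially complete and correct: submultiplicativity of $\bar d$ lets you pass from the threshold $\alpha < 1/2$ to a smaller threshold $\beta < \alpha$ at a multiplicative cost depending only on $\alpha$, and then geometric trials over blocks of length $\tmix(\beta)$ give $\Exu{x}{\tau_S} \leq \tmix(\beta)/(\alpha-\beta)$ for every $x$ and every $S$ with $\pi(S) \geq \alpha$. This is the same mechanism the paper itself uses when it applies the theorem in the proof of Lemma~\ref{lem:mixupper}.

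For the hard direction, however, what you have written is a description of the known proofs rather than a proof. The under-represented set $B_t = \{y : P^t(x,y) < \tfrac12 \pi(y)\}$ is the right intuition, but, as you yourself observe, it does not by itself yield a set that is simultaneously of stationary mass at least $\alpha$ and slow to hit from $x$; and the ``clean remedy'' you then invoke --- stopping at a geometric time of mean $\asymp t_{\mathsf{H}}(\alpha)$, showing the smoothed distribution is close to $\pi$, and then converting smoothed mixing into deterministic-time mixing via a reversibility/spectral (Cauchy--Schwarz) argument --- is precisely the substantive content of the cited theorems, which you name but do not carry out. So the inequality $\tmix(\alpha) \leq C(\alpha)\, t_{\mathsf{H}}(\alpha)$ is not established by the proposal. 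A further point any complete argument must address: laziness (assumed globally in this paper and required by Peres--Sousi) is essential for this direction --- for a periodic reversible chain such as the walk on $K_2$ one has $\tmix(\alpha) = \infty$ while $t_{\mathsf{H}}(\alpha)$ is finite --- yet your sketch never uses laziness or aperiodicity; in the actual proofs it enters exactly in the spectral step you gesture at, via non-negativity of the eigenvalues.
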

	Inspired by that fundamental result, we introduce the following quantity, which will be used to lower bound the cover time of multiple random walks,
	\begin{align}
	t_{\mathsf{large-hit}}^{(\tilde{k},k)}(G)  = \min\left\{t \geq 1: \min_{u \in V,  S \subseteq V \colon \pi (S) \geq 1/4} \Pru{u}{ \tau_{S} \leq t} \geq \frac{\tilde{k}}{k}  \right\}.
	\label{eq:largehit}
	\end{align}

Note that both notions of mixing times are only defined for $\tilde k<k$. However, by the union bound, there exists a $C<\infty$ such that if we run $k$ walks for $C\tmix \log k$ steps then all $k$ walks will be close to stationarity in terms of TV-distance. Our next lemma generalises this fact. 	
	\begin{lemma}\label{lem:mixupper}There exists a constant $C<\infty$ such that for any graph and $1\leq \tilde{k}<k$ we have	\begin{enumerate}[(i)]\item\label{itm:maxa}$\displaystyle{\tmix^{(\tilde{k},k)} \leq 2\cdot \tmix\cdot  \left\lceil\log_2 \left(\frac{2k}{k-\tilde{k}}\right)\right\rceil } $ ,
	\item\label{itm:maxb} $\displaystyle{t_{\mathsf{large-hit}}^{(\tilde{k},k)} \leq C\cdot\tmix \cdot \left\lceil\log \left(\frac{k}{k-\tilde{k}}\right)\right\rceil } $.\end{enumerate}	
	\end{lemma}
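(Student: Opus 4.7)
My plan for (i) is to combine the standard submultiplicativity of the total-variation distance with the classical comparison between separation and total-variation distances for reversible chains. Iterating the definition of $t_{\mathsf{mix}}$ through the submultiplicativity inequality $\bar d(s+t)\leq \bar d(s)\bar d(t)$ (with $d\leq \bar d\leq 2d$) produces $d(\ell\cdot t_{\mathsf{mix}}) \leq 2^{-\ell}$ for every integer $\ell\geq 0$; and for reversible lazy chains one has $s(2t)\leq 1-(1-d(t))^2 \leq 2\,d(t)$, see \cite[Lemma~6.17]{levin2009markov}. Combining these yields $s(2\ell\cdot t_{\mathsf{mix}}) \leq 2^{1-\ell}$. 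Choosing $\ell = \lceil \log_2(4k/(k-\tilde k))\rceil$ then gives $s(2\ell\cdot t_{\mathsf{mix}}) \leq (k-\tilde k)/(2k) \leq 1-\tilde k/k$, and the separation-distance formulation of the partial mixing time in~\eqref{def:partialMixingSep} immediately finishes (i).

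For (ii), the plan is to use Theorem~\ref{thm:mixhit} to convert the mixing time into an expectation bound on the large-set hitting time, and then amplify this to a tail bound by Markov's inequality together with the strong Markov property. With $\alpha=1/4$, Theorem~\ref{thm:mixhit} yields a constant $C_0 := 1/c(1/4)$ such that $t_{\mathsf{H}}(1/4) \leq C_0\cdot t_{\mathsf{mix}}$. For every $u\in V$ and every $S\subseteq V$ with $\pi(S)\geq 1/4$, Markov's inequality applied to $\tau_S$ gives $\Pru{u}{\tau_S > 2 C_0\cdot t_{\mathsf{mix}}}\leq 1/2$, \emph{uniformly} in $u$ and in such $S$. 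Iterating this uniform bound over consecutive intervals of length $2C_0\cdot t_{\mathsf{mix}}$ via the strong Markov property, each time taking the supremum of the conditional probability over all possible current states, produces $\Pru{u}{\tau_S > 2 C_0\cdot \ell\cdot t_{\mathsf{mix}}}\leq 2^{-\ell}$ for every integer $\ell\geq 1$. Taking $\ell = \lceil\log_2(k/(k-\tilde k))\rceil$ then makes the right-hand side at most $(k-\tilde k)/k$, so $t_{\mathsf{large-hit}}^{(\tilde k,k)} \leq 2 C_0\cdot t_{\mathsf{mix}}\lceil \log_2(k/(k-\tilde k))\rceil$, which is of the claimed form once the ceiling and the change of logarithm base are absorbed into the constant $C$.

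The main obstacle is essentially bookkeeping: checking that the constants from the Oliveira / Peres--Sousi equivalence (Theorem~\ref{thm:mixhit}), the submultiplicativity of the TV-distance, the separation-to-TV comparison, and the iterated Markov argument all combine cleanly into the stated expressions. Both items follow from assembling standard reversible-chain facts, with no deeper technical difficulty; the only delicate point is that the argument for (ii) naturally yields a bound of the form $C\cdot t_{\mathsf{mix}}\cdot(1+\log(k/(k-\tilde k)))$, and matching it to the stated $C\cdot t_{\mathsf{mix}}\cdot\log(k/(k-\tilde k))$ relies on absorbing the additive constant into $C$, which is legitimate in the regime of $\tilde k$ where the right-hand side is at least one.
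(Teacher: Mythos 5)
Your proposal is correct and follows essentially the same route as the paper: item (i) via submultiplicativity of $\bar d$ together with the separation--TV comparison, and item (ii) via Theorem~\ref{thm:mixhit}, Markov's inequality and iteration over blocks of length $2C_0\tmix$ using the Markov property, including the same harmless absorption of the ceiling/additive constant into $C$. One small slip: the comparison \cite[Lemma 19.3]{levin2009markov} reads $s(2t)\leq 1-(1-\bar d(t))^2$ with $\bar d$ rather than $d$, but since $\bar d(\ell\,\tmix)\leq \bar d(\tmix)^{\ell}\leq 2^{-\ell}$ as well, your chain of inequalities goes through unchanged (with only the base of the logarithm, hence the constant in (i), differing slightly from the paper's choice).
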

	
	The partial mixing time can be bounded from below quite simply by mixing time. 
	\begin{lemma}For any graph and $1\leq \tilde{k}<k$ we have 	 
		\[\tmix^{(\tilde{k},k)} \geq \tmix \!\left(1-\frac{\tilde{k}}{k}\right).\]    
	\end{lemma}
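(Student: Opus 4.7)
The plan is to reduce the inequality to the standard comparison between total variation distance and separation distance for a single random walk. Recall that the partial mixing time is defined via separation distance as $\tmix^{(\tilde k,k)}=\inf\{t\ge 1: s(t)\le 1-\tilde k/k\}$, while the ordinary mixing time at level $\varepsilon$ is $\tmix(\varepsilon)=\inf\{t: d(t)\le\varepsilon\}$, where $d(t)$ is the maximal TV-distance to $\pi$. So the claim is the monotonicity statement that bounding separation forces bounding TV at the same level.

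First, I would invoke the well-known inequality $d(t)\le s(t)$, valid for every $t\ge 0$ and every Markov chain with stationary distribution $\pi$; this follows directly from the definitions since for each $x,y$, $|P^t_{x,y}-\pi(y)|\le \pi(y)(1-P^t_{x,y}/\pi(y))^+ + $ the positive part on the other side, and summing/halving yields $\|P^t_{x,\cdot}-\pi\|_{TV}\le s_x(t)$. Taking the maximum over $x$ gives $d(t)\le s(t)$. (This is a standard fact; compare, e.g., the relation between TV and separation in \cite{levin2009markov} or \cite{AldDia}.)

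Next, set $t:=\tmix^{(\tilde k,k)}$. By definition of the partial mixing time (second formulation in \eqref{def:partialMixingSep}), we have $s(t)\le 1-\tilde k/k$. Combining with $d(t)\le s(t)$ yields $d(t)\le 1-\tilde k/k$. Since $\tmix(\varepsilon)$ is the infimum of those $t$ for which $d(t)\le\varepsilon$, this implies
\[
\tmix\!\left(1-\tfrac{\tilde k}{k}\right)\le t=\tmix^{(\tilde k,k)},
\]
which is exactly the claimed inequality.

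There is essentially no obstacle: the only subtlety is making sure one uses the separation-distance formulation of the partial mixing time (the two definitions in \eqref{def:partialMixingSep} are equivalent by the cited result of Aldous–Diaconis, so this is legitimate). The whole argument is two lines once $d(t)\le s(t)$ is in hand, and does not require reversibility or laziness.
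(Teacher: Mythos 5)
Your proof is correct and is essentially identical to the paper's: the paper also derives the bound in one line from the standard inequality $d(t)\leq s(t)$ (citing \cite[Lemma 6.3]{levin2009markov}) together with the separation-distance formulation of $\tmix^{(\tilde{k},k)}$ in \eqref{def:partialMixingSep}. You also correctly read the right-hand side as the mixing time at threshold $1-\tilde{k}/k$ rather than a product, which is the intended meaning.
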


	\begin{proof} This follows since $d_{\mathrm{TV}}(t)\leq s(t)$ holds for any $t\geq 0$ by \cite[Lemma 6.3]{levin2009markov}. 
	\end{proof}
We would prefer a bound in terms of $\tmix:=\tmix(1/4)$ instead of $\tmix(1-\tilde{k}/k)$ as the former is easier to compute for most graphs. The following lemma establishes such a lower bound for both notions of mixing time at the cost of a $\tilde{k}/k$ factor.   
	\begin{lemma}\label{lem:mixlowerbound} There exists some constant $c>0$ such that for any graph and $1\leq \tilde{k}<k$ we have \begin{enumerate}[(i)]\item\label{itm:mix1}$\displaystyle{\tmix^{(\tilde{k},k)} \geq c\cdot \frac{\tilde{k}}{k}\cdot \tmix}$, \item\label{itm:mix2} $\displaystyle{t_{\mathsf{large-hit}}^{(\tilde{k},k)}\geq c\cdot \frac{\tilde{k}}{k}\cdot \tmix}$.\end{enumerate}
	\end{lemma}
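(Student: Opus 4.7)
The plan is to prove both lower bounds by the same meta-argument: convert the weak $(\tilde k, k)$-mixing guarantee into a standard mixing or hitting guarantee by iterating the relevant bound $\Theta(k/\tilde k)$ times, and then compare against $\tmix$.

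For part~\eqref{itm:mix1} I would work from the separation-distance characterisation in \eqref{def:partialMixingSep}. Setting $t := \tmix^{(\tilde k,k)}$ and $\alpha := \tilde k/k$, the condition $s(t)\leq 1-\alpha$ is exactly the pointwise bound $P^t(x,y) \geq \alpha\,\pi(y)$ for all $x,y\in V$, which yields the Doeblin decomposition $P^t(x,\cdot) = \alpha\,\pi + (1-\alpha)\,\nu_x$ with $\nu_x$ a probability measure. Running the $t$-step chain $m$ times under the natural coupling (at each epoch, with probability $\alpha$ regenerate from $\pi$; otherwise sample from $\nu_{Y_{j-1}}$), the probability of no regeneration in $m$ epochs is $(1-\alpha)^m \leq e^{-\alpha m}$. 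Once a regeneration occurs the chain stays $\pi$-distributed, because $\pi$ is invariant under $P^t$ (equivalently, $\sum_x \pi(x)\nu_x = \pi$). Hence $d(mt)\leq e^{-\alpha m}$, and choosing $m = \lceil \ln 4/\alpha\rceil$ gives $\tmix\leq mt \leq (1+\ln 4)\,t/\alpha$, which rearranges to $t\geq c\,(\tilde k/k)\,\tmix$.

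For part~\eqref{itm:mix2} I would invoke Theorem~\ref{thm:mixhit} (with parameter $1/4$) to produce a pair $(u^*, S^*)$ satisfying $\pi(S^*)\geq 1/4$ and $\Exu{u^*}{\tau_{S^*}}\geq c\,\tmix$. Writing $t^* := t_{\mathsf{large-hit}}^{(\tilde k,k)}$, the definition gives $\Pru{u}{\tau_{S^*}\leq t^*}\geq \tilde k/k$ for \emph{every} $u\in V$. Applying the strong Markov property at the times $jt^*$, the chance of missing $S^*$ in each successive block of $t^*$ steps is at most $1-\tilde k/k$, regardless of the current state, so $\Pru{u^*}{\tau_{S^*}>jt^*}\leq (1-\tilde k/k)^j$. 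Summing the geometric tail yields $\Exu{u^*}{\tau_{S^*}}\leq t^*\cdot k/\tilde k$, which combined with the lower bound from Theorem~\ref{thm:mixhit} gives $t^*\geq c\,(\tilde k/k)\,\tmix$.

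I do not anticipate any real obstacle beyond calibrating constants; both arguments are textbook iterations. The key subtlety is that each step relies on the \emph{uniform-in-starting-vertex} nature of the quantity under consideration: $s(t)$ is a maximum over $x$, and the large-hit time involves a minimum over $u$. Without this uniformity the Markov-property chaining of blocks (which is the engine in both parts) would fail. The only other small point is to note that reversibility, required for Theorem~\ref{thm:mixhit}, is built into our lazy-walk setting on an undirected graph.
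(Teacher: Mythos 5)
Your proposal is correct. Part~\eqref{itm:mix2} is essentially the paper's own argument: both extract an extremal pair $(u^*,S^*)$ realising $t_{\mathsf{H}}\geq \tmix/C$ via Theorem~\ref{thm:mixhit}, chain blocks of length $t_{\mathsf{large-hit}}^{(\tilde k,k)}$ using the uniform-in-$u$ hitting guarantee to get a geometric tail, and conclude $\Exu{u^*}{\tau_{S^*}}\leq t^*\,k/\tilde k$. For part~\eqref{itm:mix1}, however, you take a genuinely different route. The paper converts the separation bound $P_{x,y}^{\ell}\geq(\tilde k/k)\pi(y)$ into a hitting-probability bound for large sets, deduces $t_{\mathsf{H}}\leq 4\ell k/\tilde k$, and then applies Theorem~\ref{thm:mixhit} a second time --- i.e.\ it funnels both items through the same hitting-time template. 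You instead use the Doeblin decomposition $P^t(x,\cdot)=\alpha\pi+(1-\alpha)\nu_x$ and the regeneration coupling (equivalently, submultiplicativity of separation together with $d(t)\leq s(t)$) to get $d(mt)\leq(1-\alpha)^m$ directly. Your version buys two things: it is self-contained for part~\eqref{itm:mix1}, bypassing the Peres--Sousi/Oliveira equivalence (which requires reversibility) entirely for that item, and it yields an explicit constant of order $1/(1+\ln 4)$ rather than the unspecified constant inherited from Theorem~\ref{thm:mixhit}. The paper's version buys uniformity of the argument across both items at the cost of an extra black box. One cosmetic remark: the paper works with $\ell=\tmix^{(\tilde k,k)}+1$ to avoid worrying about whether the infimum in \eqref{def:partialMixingSep} is attained; since time is discrete and $s(\cdot)$ is non-increasing it is attained, so your choice $t=\tmix^{(\tilde k,k)}$ is fine, but it is worth a word.
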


	We leave as an open problem whether our two notions of mixing for multiple random walks are equivalent up to constants, but the next result gives partial progress in one direction. 
	
\begin{lemma}\label{lem:mixingbdd} For any graph and $1\leq \tilde{k}<k/4$ we have \[ t_{\mathsf{large-hit}}^{(\tilde{k},k)} \leq \tmix^{(4\tilde{k},k)} + 1 \leq 2\tmix^{(4\tilde{k},k)} .\]
\end{lemma}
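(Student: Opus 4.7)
The plan is to exhibit, at time $t^* := \tmix^{(4\tilde k,k)}$, a single strong stationary time that forces any large set $S$ to be hit with probability at least $\tilde k/k$ from every start vertex. The argument has no real obstacle; the content is essentially to combine the SST characterisation of separation distance with the trivial bound $\pi(S)\geq 1/4$, and the role of the factor $4$ is to provide the ``budget'' for this $1/4$ loss. The hypothesis $\tilde k<k/4$ is exactly what ensures $4\tilde k/k<1$ so that $\tmix^{(4\tilde k,k)}$ is well-defined (as an infimum over $t\geq 1$).

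Concretely, first I would invoke the second characterisation in \eqref{def:partialMixingSep}, namely $s(t^*)\leq 1-4\tilde k/k$, together with Proposition~3.2 of \cite{AldDia} (stated just after the definition) to produce a strong stationary time $\sigma$ for the single-walk chain with $\Pru{v}{\sigma\leq t^*}\geq 4\tilde k/k$ for every $v\in V$. Next, I would fix an arbitrary start vertex $u\in V$ and an arbitrary set $S\subseteq V$ with $\pi(S)\geq 1/4$, and use the SST property $\Pru{u}{X_\sigma=w\mid \sigma=j}=\pi(w)$ to write
\[
\Pru{u}{X_\sigma\in S,\ \sigma\leq t^*}=\sum_{j=0}^{t^*}\Pru{u}{\sigma=j}\,\pi(S)=\Pru{u}{\sigma\leq t^*}\cdot\pi(S)\geq \frac{4\tilde k}{k}\cdot\frac{1}{4}=\frac{\tilde k}{k}.
\]
Since $\{X_\sigma\in S,\ \sigma\leq t^*\}\subseteq \{\tau_S\leq t^*\}$, this gives $\Pru{u}{\tau_S\leq t^*}\geq \tilde k/k$ uniformly over $u$ and over $S$ with $\pi(S)\geq 1/4$. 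By definition of $t_{\mathsf{large-hit}}^{(\tilde k,k)}$, this yields $t_{\mathsf{large-hit}}^{(\tilde k,k)}\leq t^*=\tmix^{(4\tilde k,k)}$, so in particular $t_{\mathsf{large-hit}}^{(\tilde k,k)}\leq \tmix^{(4\tilde k,k)}+1$, with the $+1$ kept purely as a safety margin.

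Finally, for the second inequality $\tmix^{(4\tilde k,k)}+1\leq 2\,\tmix^{(4\tilde k,k)}$, I would simply note that both partial mixing time and large-hit time are defined as infima over $t\geq 1$, so $\tmix^{(4\tilde k,k)}\geq 1$ and the inequality is immediate. The only mildly technical point is confirming that the Aldous--Diaconis SST from Proposition~3.2 of \cite{AldDia}, which attains equality $\Pr[\sigma>t]=s(t)$, is applicable to the lazy walk used throughout the paper; this is standard since lazy random walks are reversible Markov chains and the SST construction applies verbatim.
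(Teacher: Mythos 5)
Your proof is correct and follows essentially the same route as the paper: the paper works directly from the separation-distance form of the definition, using $s(\ell)\leq 1-4\tilde k/k$ to get $P_{x,y}^{\ell}\geq (4\tilde k/k)\pi(y)$ and summing over $S$, whereas you route the identical computation through the equivalent SST characterisation. Both arguments cash in the factor $4$ against $\pi(S)\geq 1/4$ in exactly the same way, and your closing observations about the $+1$ and $\tmix^{(4\tilde k,k)}\geq 1$ match the paper's.
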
	
	\subsection{Upper and Lower Bounds on Cover Time by Partial Mixing}
	Armed with our new notions of mixing time for multiple random walks from Section \ref{sec:elementry}, we can now use them to prove upper and lower bounds on the worst-case cover time in terms of stationary cover times and partial mixing times. We begin with the upper bound.
	\begin{theorem}\label{thm:characterupper}
		For any graph $G$ and any $1 \leq k \leq n$, 
		\[
		\tcov^{(k)}\leq 12 \cdot \min_{1 \leq \tilde{k} < k} \max \left(  t_{\mathsf{mix}}^{(\tilde{k},k)} ,\; t_{\mathsf{cov}}^{(\tilde{k})}(\pi ) \right).
		\]
	\end{theorem}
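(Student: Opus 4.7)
My plan is to use the strong-stationary-time (SST) formulation of the partial mixing time in \eqref{def:partialMixingSep}: for the chosen $\tilde{k}$ I wait $T_1 := t_{\mix}^{(\tilde{k},k)}$ steps until at least $\tilde{k}$ of the $k$ walks have become stationary, and then rely on the stationary cover time of those walks to finish covering $G$ within a further $\sim T_2 := t_{\cov}^{(\tilde{k})}(\pi)$ steps. Fix $\tilde{k}\in\{1,\dots,k-1\}$ and set $M := \max(T_1, T_2)$; it suffices to show $\tcov^{(k)} \leq 12M \leq 16M$ and then minimise over $\tilde{k}$.

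By the SST characterisation in \eqref{def:partialMixingSep}, there exists an SST $\tau$ with $\min_v \Pru{v}{\tau \leq T_1} \geq \tilde{k}/k$. Attach independent copies $\tau^{(1)},\dots,\tau^{(k)}$ of $\tau$ to the $k$ walks and let $N := |\{\,i : \tau^{(i)} \leq T_1\,\}|$. Since the walks are independent and each SST fires by time $T_1$ with probability at least $\tilde{k}/k$ from any start vertex, $N$ stochastically dominates a $\bin{k}{\tilde{k}/k}$ random variable, whose mean is the integer $\tilde{k}$. A standard result on binomial medians (the median of $\bin{k}{p}$ equals $kp$ whenever $kp$ is an integer) then yields $\Pr{N \geq \tilde{k}} \geq 1/2$.

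On the event $\{N \geq \tilde{k}\}$ I select any $\tilde{k}$ walks with $\tau^{(i)} \leq T_1$. By the strong Markov property and the defining property of an SST, each selected walk, from time $T_1$ onwards, is a stationary walk, and these $\tilde{k}$ processes are mutually independent. Hence from time $T_1$ we have at least $\tilde{k}$ independent stationary walks on $G$, which by monotonicity of cover time (in both the number of walks and the target set) cover $V$ in expected time at most $t_{\cov}^{(\tilde{k})}(\pi) = T_2$ after $T_1$. Markov's inequality gives $\Pr{\tcov^{(k)} \leq T_1 + 2T_2 \mid N \geq \tilde{k}} \geq 1/2$, and combining with $\Pr{N \geq \tilde{k}} \geq 1/2$ produces
\[
\Pr{\tcov^{(k)} \leq T_1 + 2T_2} \geq \tfrac{1}{4}.
\]

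Since this bound is uniform in the starting configuration, the Markov property lets me iterate: restarting the argument on the configuration at time $T_1 + 2T_2$ yields a geometric retry structure with per-epoch length $T_1 + 2T_2 \leq 3M$ and per-epoch success probability at least $1/4$, whence $\Ex{\tcov^{(k)}} \leq 4(T_1+2T_2) \leq 12M \leq 16 M$. The main subtlety is confirming that the $\tilde{k}$ selected walks really are independent stationary walks from time $T_1$ onwards; this relies both on the independence of the $k$ attached SSTs and on the uniformity in starting vertex built into the definition \eqref{def:partialMixingSep} via $\min_v$. For small $\tilde{k}$, where the binomial-median argument feels heavy-handed, a Paley--Zygmund bound applied to $N$ works just as well, at the cost of a larger absolute constant.
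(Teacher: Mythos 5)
Your proposal is correct and follows essentially the same route as the paper: the paper uses the separation-distance decomposition $P_{v,\cdot}^{T}=(1-s_v(T))\pi+s_v(T)\nu_v$ (the one-step equivalent of your SST coupling, per the quoted Aldous--Diaconis proposition) to argue that at least $\tilde{k}$ walks are stationary at time $T_1$ with probability $\geq 1/2$ via the same binomial-median fact, then applies Markov's inequality to $\tcov^{(\tilde{k})}(\pi)$ and iterates over epochs of length $\leq 4M$ with success probability $1/4$. The only point worth making explicit in your write-up is that the selection of the $\tilde{k}$ fired walks must depend only on the indicators $\mathbf{1}\{\tau^{(i)}\leq T_1\}$ (not on post-$T_1$ randomness) so that the selected walks remain independent and stationary from time $T_1$ onwards.
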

	This theorem improves on various results in \cite{Multi} and \cite{ER09} which bound the worst-case cover time by mixing all $k$ walks, and it also generalises a previous result in \cite[Lemma 3.1]{ElsSau}, where most walks were mixed, i.e., $\tilde{k} = k/2$.

	We also prove a lower bound for cover times, however this involves the related definition of partial mixing time based on the hitting times of large sets.

\begin{theorem}\label{thm:characterlower}
	For any graph $G$ with $\pi_{\max}=\max_{u} \pi(u)$ and any $1 \leq k \leq n$,
	\[
	\tcov^{(k)}\geq \frac{1}{16} \cdot  \max_{1 \leq \tilde{k} < k} \min \left(  t_{\mathsf{large-hit}}^{(\tilde{k},k)} ,\; \frac{1}{ \tilde{k} \pi_{\max}} \right).
	\]
	Further, for any regular graph $G$ any $\gamma>0$ fixed, there is a constant $C=C(\gamma)>0$ such that
	\[
	\tcov^{(k)} \geq C \cdot \max_{n^{\gamma} \leq \tilde{k} < k}  \min \left(  t_{\mathsf{large-hit}}^{(\tilde{k},k)} ,\; \frac{n \log n}{\tilde{k}}  \right).
	\]
\end{theorem}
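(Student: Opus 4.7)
The plan is to prove both inequalities by starting all $k$ walks at a worst-case vertex $u^*$ witnessing the definition of $t_{\mathsf{large-hit}}^{(\tilde{k},k)}$ and arguing that not enough walks can enter a ``far'' large set $S^*$ to cover it in the allotted time.

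For the first inequality, fix $\tilde{k}$ and set
\[ M = \min\!\left(t_{\mathsf{large-hit}}^{(\tilde{k},k)},\; 1/(\tilde{k}\pi_{\max})\right), \qquad t = \lfloor M/c \rfloor \]
for a suitable constant $c$. Since $t < t_{\mathsf{large-hit}}^{(\tilde{k},k)}$, by the definition of the large-hit time there exist a vertex $u^*$ and a set $S^* \subseteq V$ with $\pi(S^*) \geq 1/4$ and $\Pru{u^*}{\tau_{S^*} \leq t} < \tilde{k}/k$. Starting all $k$ walks from $u^*$, the number $N$ of walks that enter $S^*$ by time $t$ is stochastically dominated by $\bin{k}{\tilde{k}/k}$, so Markov's inequality gives $\Pr{N \leq 2\tilde{k}} \geq 1/2$. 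Only these $N$ walks can visit vertices of $S^*$, so at most $N(t+1) \leq 2\tilde{k}(t+1)$ vertices of $S^*$ are covered. Combining $|S^*| \geq \pi(S^*)/\pi_{\max} \geq 1/(4\pi_{\max})$ with $2\tilde{k}(t+1) \leq 4\tilde{k}M/c \leq 4/(c\,\pi_{\max})$, taking $c$ sufficiently large guarantees that some vertex of $S^*$ remains uncovered with probability $\geq 1/2$, yielding $\tcov^{(k)} \geq t/2 = \Omega(M)$; the stated constant $1/16$ is recovered by a careful choice of $c$.

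For the second, regular, inequality, the same setup again shows that at most $2\tilde{k}$ walks can visit vertices of $S^*$, but now $|S^*| = \Theta(n)$ and naive counting only yields the weaker $n/\tilde{k}$ bound. To gain the extra $\log n$ factor under the assumption $\tilde{k} \geq n^{\delta}$, I would feed the $\leq 2\tilde{k}$ ``already-mixed'' walks into Lemma~\ref{lemma:generalLowerboundpi} in place of crude counting. The steps are: (i) restrict attention to walks that enter $S^*$ before time $t/2$; (ii) couple each such walk to a stationary walk of length at least $t/2 - \BO{\tmix}$ by running it an additional $\BO{\tmix}$ steps post-entry so that its conditional distribution becomes close to $\pi$, invoking an independent strong stationary time for each such walk and using Theorem~\ref{thm:mixhit} to translate ``hit a large set'' into ``within $\BO{\tmix}$ of stationarity''; (iii) for $t \leq C'(\delta) \cdot (n \log n)/\tilde{k}$ with $C'=C'(\delta)$ small, verify the hypotheses of Lemma~\ref{lemma:generalLowerboundpi} with $k \leftarrow 2\tilde{k}$, $S = V$, and $\alpha < \delta$, to conclude $\Pru{\pi^{2\tilde{k}}}{\tau_{\cov}^{(2\tilde{k})} \leq t/2} = \BO{(\log n)^2 n^{\alpha(1+\varepsilon)}/\tilde{k}} = \lo{1}$, contradicting full coverage of $V$ by time $t$.

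The main obstacle is step~(ii): the walks reaching $S^*$ are not yet stationary, their entry times are random and correlated, and the count of such walks is itself random. The delicate point is to execute the coupling in a way that preserves enough independence for the second moment argument underpinning Lemma~\ref{lemma:generalLowerboundpi}. By absorbing the $\BO{\tmix}$ buffer into the constant $C'(\delta)$ and applying an independent strong stationary time to each walk after it enters $S^*$ (together with a union bound over at most $2\tilde{k}$ such events), the problem reduces to bounding the cover time of $\leq 2\tilde{k}$ genuinely stationary walks on $V$, at which point Lemma~\ref{lemma:generalLowerboundpi} applies directly and the regular-graph bound follows.
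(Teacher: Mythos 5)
Your proof of the first inequality is correct and essentially identical to the paper's: pick the witnessing pair $(u^*,S^*)$ at time just below $t_{\mathsf{large-hit}}^{(\tilde{k},k)}$, dominate the number of walks entering $S^*$ by $\bin{k}{\tilde{k}/k}$, and compare the trivial visit count $\BO{\tilde{k}\cdot t}$ against $|S^*|\geq 1/(4\pi_{\max})$.

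The second part, however, has a genuine gap, and it sits exactly where you flag the "main obstacle." Your plan is to convert the $\leq 2\tilde{k}$ walks that enter $S^*$ into stationary walks (via an $\BO{\tmix}$ buffer and strong stationary times) and then invoke Lemma~\ref{lemma:generalLowerboundpi} with $S=V$. Two things break. First, Lemma~\ref{lemma:generalLowerboundpi} is not a general-purpose lower bound: its hypotheses demand a target set on which the hitting probabilities are \emph{nearly uniform}, $p(1-\eps)\leq p_v(t)\leq p$, together with $p\leq \alpha(\log n)/k$, $p^2k=\lo{1}$ and $\min_{v\in S}\pi(v)=\Omega(1/|S|)$; the two-sided uniformity is what drives its second-moment computation, and it holds for transitive graphs, tori and the leaves of a binary tree, but there is no reason it holds with $S=V$ for an arbitrary regular graph, and your step (iii) gives no way to verify it. Second, the $\BO{\tmix}$ buffer cannot be "absorbed into $C'(\delta)$": the per-walk budget after entering $S^*$ is only $\ell\approx \eps(n\log n)/\tilde{k}$, while $\tmix$ can be vastly larger (the cycle is regular and has $\tmix=\Theta(n^2)$ versus $\ell=\BO{n\log n}$), so there is no time left in which to mix, and a walk given $\tmix$ extra steps could in principle visit far more than $\ell$ vertices. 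The paper's actual argument needs neither stationarity nor a second moment: all walks enter $S$ with the \emph{same} (non-stationary) first-hitting distribution $\kappa$, the trivial bound $\sum_{v\in S}\Pru{\kappa}{\tau_v\leq \ell}\leq \ell$ plus regularity produces a subset $\widetilde{S}$ with $|\widetilde{S}|\geq|S|/2$ of vertices each hit with probability at most $8\ell/n$, a first-moment computation shows the expected number of unvisited vertices of $\widetilde{S}$ is at least $n^{1-9\eps}$, and the method of bounded differences (each walk moves the count by at most $\ell$) gives concentration — this last step is the only place $\tilde{k}\geq n^{\delta}$ is used. Replacing your step (ii)–(iii) by this first-moment-plus-McDiarmid argument is what is needed to close the proof.
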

	
	As we will see later, both Theorem~\ref{thm:characterupper} and Theorem~\ref{thm:characterlower} yield asymptotically tight (or tight up to logarithmic factors) upper and lower bounds for many concrete networks. To explain why this is often the case, note that both bounds include one non-increasing function in $\tilde{k}$ and one non-decreasing in function in $\tilde{k}$. That means both bounds are optimised when the two functions are as close as possible. Then balancing the two functions in the upper bound  asks for $\tilde{k}$ such that $t_{\mathsf{mix}}^{(\tilde{k},k)}  \approx \tcov^{(\tilde{k})}(\pi)$. Similarly, balancing the two functions in the first lower bound demands $t_{\mathsf{large-hit}}^{(\tilde{k},k)}  \approx n/\tilde{k}$ (assuming $\pi_{\max}=O(1/n)$). Hence for any graph $G$ where $t_{\mathsf{mix}}^{(\tilde{k},k)}  \approx t_{\mathsf{large-hit}}^{(\tilde{k},k)} $, and also  $\tcov^{(\tilde{k})}(\pi) \approx n/\tilde{k}$, the upper and lower bounds will be close. This turns out to be the case for many networks, as we will demonstrate in Section~\ref{fundamental}.

	One exception where Theorem \ref{thm:characterlower} is far from tight is the cycle, we shall also prove a min-max theorem but with a different notion of partial cover time which is tight for the cycle. 
	
	For a set $S\subseteq V$, we let $\tau^{(k)}_{\mathsf{cov}}(S)$ be the first time that every vertex in $S$ has been visited by at least one of the $k$ walks, thus $\tau_{\mathsf{cov}}^{(k)}(V) = \tau^{(k)}_{\mathsf{cov}}$. Then we define the set cover time \[t_{\mathsf{large-cov}}^{(k)}= \min_{S : \pi(S)\geq 1/4}\min_{\mu}\Exu{\mu^k}{\tau_{\mathsf{cov}}^{(k)}(S)},\] where the first  minimum is over all sets $S\subseteq V$ satisfying $\pi(S)\geq 1/4$ and the second is over all probability distributions $\mu$ on the set  $\partial S=\{x\in S : \text{exists }y\in S^c, \;xy \in E \}$.  
	\begin{theorem}\label{thm:cyclelower}
		For any graph $G$ and any $1 \leq k \leq n$,
		\[
		\tcov^{(k)}\geq \frac{1}{2} \cdot  \max_{1 \leq \tilde{k} < k} \min \left(  t_{\mathsf{large-hit}}^{(\tilde{k},k)} , t_{\mathsf{large-cov}}^{(\tilde{k})} \right).
		\]
	\end{theorem}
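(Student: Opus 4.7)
The plan is to adapt the proof of Theorem~\ref{thm:characterlower} by replacing the coarse single-vertex lower bound $1/(\tilde{k}\pi_{\max})$ by $t_{\mathsf{large-cov}}^{(\tilde{k})}$. Fix $1\leq \tilde{k}<k$, set $T^{\star}=\tfrac{1}{4}\min\bigl(t_{\mathsf{large-hit}}^{(\tilde{k},k)},\,t_{\mathsf{large-cov}}^{(\tilde{k})}\bigr)$, and by the definition of $t_{\mathsf{large-hit}}^{(\tilde{k},k)}$ pick a worst-case pair $(u^{\star},S^{\star})$ with $\pi(S^{\star})\geq 1/4$ and $\Pru{u^{\star}}{\tau_{S^{\star}}\leq 2T^{\star}}<\tilde{k}/k$. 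Assume $u^{\star}\notin S^{\star}$, else the hitting probability is $1$. The goal is to show that starting all $k$ walks at $u^{\star}$ forces $\Exu{u^{\star},\ldots,u^{\star}}{\tau_{\mathsf{cov}}^{(k)}(V)}\geq T^{\star}$.

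Let $N_t$ be the number of walks that have hit $S^{\star}$ by time $t$. Then $\Exu{u^{\star},\ldots,u^{\star}}{N_{2T^{\star}}}=k\cdot\Pru{u^{\star}}{\tau_{S^{\star}}\leq 2T^{\star}}<\tilde{k}$, so Markov's inequality gives $\Pr{\mathcal{A}}\geq 1/2$ for the event $\mathcal{A}:=\{N_{2T^{\star}}\leq 2\tilde{k}\}$. Since the $k$ walks are i.i.d.\ and $u^{\star}\notin S^{\star}$, conditional on $N_{2T^{\star}}=n\leq 2\tilde{k}$ the entry vertices of the entered walks are i.i.d.\ draws from some distribution $\mu^{\star}$ on $\partial S^{\star}$ (by the strong Markov property and symmetry over the $k$ walks). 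On $\mathcal{A}$ only these $n$ walks can contribute to covering $S^{\star}$ during $[0,2T^{\star}]$; granting them fresh starts at time zero from their entry vertices can only speed covering up, so the conditional expected cover time of $S^{\star}$ is at least $\Exu{(\mu^{\star})^{n}}{\tau_{\mathsf{cov}}^{(n)}(S^{\star})}\geq t_{\mathsf{large-cov}}^{(n)}$ by the definition of $t_{\mathsf{large-cov}}^{(n)}$. Together with a Paley--Zygmund / second-moment argument — in the spirit of Lemma~\ref{lemma:generalLowerboundpi} applied to the indicators $\{v\text{ uncovered at time }2T^{\star}\}$ for $v\in S^{\star}$ — this expectation bound promotes to $\Pr{\tau_{\mathsf{cov}}^{(k)}(S^{\star})>2T^{\star}\mid\mathcal{A}}\geq 1/2$; combining this with $\Pr{\mathcal{A}}\geq 1/2$ and the trivial inclusion $\tau_{\mathsf{cov}}^{(k)}(V)\geq \tau_{\mathsf{cov}}^{(k)}(S^{\star})$ yields $\tcov^{(k)}\geq T^{\star}$. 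The $2\tilde{k}$-versus-$\tilde{k}$ discrepancy in the conditional bound is absorbed either by a sharper Markov step (taking a slightly smaller $T^{\star}$ so that $\E[N_{2T^{\star}}]<\tilde{k}/2$ and hence $n\leq \tilde{k}$ with constant probability) or by relabeling $\tilde{k}\mapsto 2\tilde{k}$ before taking the outer $\max$.

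The main technical obstacle is exactly the expectation-to-probability conversion highlighted above: $t_{\mathsf{large-cov}}^{(\tilde{k})}$ only controls the \emph{mean} cover time of $S^{\star}$ from $(\mu^{\star})^{\tilde{k}}$, whereas the argument requires a lower bound at a constant multiple of the mean with constant probability. Lemma~\ref{lemma:generalLowerboundpi} is the natural template for this step — its variance computation for the number of uncovered vertices is designed for exactly this purpose — and the delicate point is to verify its minimum-$\pi$/symmetry hypothesis for the conditional entry distribution $\mu^{\star}$ on $\partial S^{\star}$ and the corresponding set of hard-to-cover vertices inside $S^{\star}$. Once this second-moment step is carried out, the remaining constant $1/4$ in the statement is routine bookkeeping.
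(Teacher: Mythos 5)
Your skeleton matches the paper's: start all $k$ walks at the worst-case vertex $u^{\star}$ for hitting $S^{\star}$, show that with probability at least $1/2$ only about $\tilde{k}$ walks reach $S^{\star}$ within the relevant time window, and then reduce to the problem of covering $S^{\star}$ by those few walks started (generously, at time zero) from their i.i.d.\ entry distribution on $\partial S^{\star}$. (The paper gets ``at most $\tilde{k}$'' rather than your ``at most $2\tilde{k}$'' by using that the median of $\bin{k}{\tilde{k}/k}$ equals its integer mean $\tilde{k}$ rather than Markov's inequality, but as you note that is cosmetic.)

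The genuine gap is exactly the step you flag as the ``main technical obstacle,'' and the problem is that this obstacle is self-inflicted: you insist on upgrading the expectation bound $\Exu{(\mu^{\star})^{\tilde{k}}}{\tau_{\mathsf{cov}}^{(\tilde{k})}(S^{\star})}\geq t_{\mathsf{large-cov}}^{(\tilde{k})}$ to a constant-probability tail bound $\Pr{\tau_{\mathsf{cov}}^{(k)}(S^{\star})>2T^{\star}\mid\mathcal{A}}\geq 1/2$ via a second-moment argument in the style of Lemma~\ref{lemma:generalLowerboundpi}. That lemma cannot serve as the template here: its hypotheses (a near-symmetric set with $\min_{v\in S}\pi(v)=\Omega(1/|S|)$, uniformly comparable hitting probabilities $p_v\leq \alpha(\log n)/k$, and $p^2k=o(1)$) are graph-specific and simply are not available for an arbitrary graph $G$ and an arbitrary set $S^{\star}$ with $\pi(S^{\star})\geq 1/4$; moreover no such mean-to-tail concentration holds in general, since the cover time of $S^{\star}$ by $\tilde{k}$ walks may put most of its mass far below its mean. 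The paper avoids the issue entirely by staying at the level of expectations: $t_{\mathsf{large-cov}}^{(\tilde{k})}$ is \emph{defined} as a minimum over all entry distributions $\mu$ on $\partial S$ of the expected cover time of $S$, so conditional on $\mathcal{E}$ one argues that either $S^{\star}$ is not covered by time $t_{\mathsf{large-hit}}^{(\tilde{k},k)}-1$ (contributing at least $t_{\mathsf{large-hit}}^{(\tilde{k},k)}/2$), or it is covered within that window and hence only by the early-arriving walks, whose expected cover time of $S^{\star}$ is at least $t_{\mathsf{large-cov}}^{(\tilde{k})}$; multiplying by $\Pr{\mathcal{E}}\geq 1/2$ gives the factor $1/4$. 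If you rewrite your final step along these lines, dropping the Paley--Zygmund conversion, the rest of your argument goes through.
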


	\subsection{Geometric Lower Bounds on the Large-Hit and Large-Cover Times}
	
	We will now derive two useful lower bounds on $ t_{\mathsf{large-hit}}^{(\tilde{k},k)}$; one based on the conductance of the graph, and a second one based on the distance to a large set the random walk needs to hit.
	
	For two sets $A,B \subseteq V$ the \textit{ergodic flow} $Q(A,B)$ is given by $Q(A,B) = \sum_{a\in A, b\in B} \pi(a) P_{a,b},$ where $P$ denotes the transition matrix of a (lazy) single random walk. We define the \textit{conductance} $\Phi(S)$ of a set $S\subseteq V$ with $\pi(S) \in (0,1/2]$ to be \[\Phi(S)= \frac{Q(S,S^c)}{\pi(S)} \quad\text{ and let }\quad \Phi(G) = \min_{S\subseteq V, 0 < \pi( S) \leq 1/2 }\Phi(S).\]
	\begin{lemma}\label{conductancebd}
		For any graph $G$ with conductance $\Phi(G)$, and any $1 \leq \tilde{k} \leq k$, we have
		\[
		t_{\mathsf{large-hit}}^{(\tilde{k},k)}  \geq \frac{\tilde{k}}{k} \cdot \frac{2}{\Phi(G)}.
		\]
	\end{lemma}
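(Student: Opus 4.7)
The plan is to prove the lower bound by exhibiting, for every $t$ strictly smaller than $2\tilde{k}/(k\Phi(G))$, a starting vertex $u$ and a target set $S$ with $\pi(S) \geq 1/4$ such that $\Pru{u}{\tau_S \leq t} < \tilde{k}/k$; such a witness blocks $t$ from being the infimum defining $t_{\mathsf{large-hit}}^{(\tilde{k},k)}$.

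First I would fix a minimum-conductance cut: pick $T \subseteq V$ with $\pi(T) \in (0,1/2]$ and $Q(T, T^c) = \Phi(G)\,\pi(T)$. Then set $S := T$ if $\pi(T) \geq 1/4$, and $S := T^c$ otherwise, so that $\pi(S) \geq 1/4$ in both cases and the walk must cross the sparse cut $(T, T^c)$ to reach $S$. Start the walk from $\mu_0 := \pi|_{S^c}/\pi(S^c)$ and set $\mu_s(v) := \Pru{\mu_0}{X_s = v,\, \tau_S > s}$ for $v \in S^c$.

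The main technical step is the sub-stationary domination $\mu_s(v) \leq \pi(v)/\pi(S^c)$, which I would prove by induction on $s$ using $\pi P = \pi$ and discarding the non-negative contribution of the absorbing side in the recursion $\mu_{s+1}(v) = \sum_{u \in S^c}\mu_s(u) P(u,v)$. This controls the one-step escape flux:
\[
\mu_s(S^c) - \mu_{s+1}(S^c) \;=\; \sum_{u \in S^c} \mu_s(u)\,P(u, S) \;\leq\; \frac{Q(S^c, S)}{\pi(S^c)} \;=\; \frac{\Phi(G)\,\pi(T)}{\pi(S^c)},
\]
where reversibility gives $Q(S^c, S) = Q(S, S^c)$. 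Telescoping produces $\Pru{\mu_0}{\tau_S \leq t} \leq t\,\Phi(G)\,\pi(T)/\pi(S^c)$, and averaging yields a deterministic $u \in S^c$ with the same bound. Requiring this to be strictly less than $\tilde{k}/k$ forces $t < (\tilde{k}/k)\cdot \pi(S^c)/(\pi(T)\Phi(G))$, so the constant $2$ in the stated inequality emerges from the ratio $\pi(S^c)/\pi(T) \geq 2$.

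The main obstacle I anticipate is securing exactly this constant $2$. The ratio $\pi(S^c)/\pi(T)$ is automatically at least $2$ in case A when $\pi(T) \leq 1/3$, but it equals $1$ in case B (since there $S^c = T$) and is close to $1$ in the borderline subcase $\pi(T) \in (1/3, 1/2]$. To extract the sharp constant in these regimes one either exploits the laziness of the walk (writing $P = (I + \tilde{P})/2$, so each per-step escape is halved in the $\tilde{P}$-driven part) or invokes a refined sub-stochastic bound via the Dirichlet eigenvalue of the chain restricted to $S^c$; the flow/telescoping argument itself is classical and yields the correct qualitative bound $\Omega(\tilde{k}/(k\Phi(G)))$ without subtlety.
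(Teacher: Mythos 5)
Your core argument is essentially the paper's argument, carried out from first principles. The paper takes the sparsest cut $S$ with $\pi(S)\leq 1/2$, starts the walk from $\pi$ restricted to $S$, targets $S^c$ (which automatically has $\pi(S^c)\geq 1/2\geq 1/4$, so no case split on $\pi(T)$ is needed), bounds the escape probability by $t\,\Phi(S)/2$ by citing \cite[Proposition~8]{ST12}, and then pigeonholes to a worst-case start vertex exactly as you do. Your sub-stationary domination $\mu_s(v)\leq \pi(v)/\pi(S^c)$ plus telescoping is precisely the elementary proof of the additive version of that cited escape bound, so the two routes coincide; yours is self-contained, the paper's outsources the one nontrivial step.

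The genuine gap is the constant $2$, which you correctly flag but whose proposed repairs do not work. (a) Laziness cannot be exploited a second time: the paper defines $Q(A,B)=\sum_{a\in A,b\in B}\pi(a)P_{a,b}$ with $P$ the \emph{lazy} kernel, so the per-step halving is already absorbed into $\Phi(G)$, and your flux bound $\sum_{u\in S^c}\mu_s(u)P(u,S)\leq Q(S^c,S)/\pi(S^c)$ is already stated for the lazy chain. (b) The multiplicative/Dirichlet refinement gives staying probability at least $(1-x)^t$ with $x$ the one-step escape rate, and $1-(1-x)^t\leq tx$ is the same first-order bound as your telescoping -- no extra factor of $2$ appears. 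In fact the factor $2$ in the paper's proof comes entirely from the normalisation of conductance in the cited statement of \cite{ST12} (which is phrased for the non-lazy combinatorial conductance), not from any probabilistic refinement; with the paper's own lazy definition of $\Phi$ your cases B and $\pi(T)\in(1/3,1/2]$ genuinely only yield $t_{\mathsf{large-hit}}^{(\tilde{k},k)}\geq \tilde{k}/(k\,\Phi(G))$. Since the lemma is only ever invoked inside $\Omega(\cdot)$ bounds this is immaterial to the applications, but to prove the literal statement you should either cite the escape bound of \cite{ST12} with its conductance convention made explicit, or accept the loss of the constant; neither of your two suggested patches closes it.
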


	We remark that a similar bound to that in Lemma \ref{conductancebd} was used implicitly in \cite[Proof of Theorem~1.1]{S10}, which proved 
	$t_{\mathsf{cov}}^{(k)} \geq \sqrt{ \frac{n}{k \cdot \Phi(G)}}$. 
	
	The next lemmas are needed to apply Theorems \ref{thm:characterlower} \& \ref{thm:cyclelower} to cycles/tori. 
	
			\begin{lemma}\label{lem:largehit}
		Let $G$ be a $d$-dimensional torus with constant $d\geq 2$ (or cycle, $d=1$). Then, for any $u \in V$, $S\subseteq V$ satisfying $|S| \geq n/4$, and $\tilde{k} \leq k/2$, we have
		\[
		t_{\mathsf{large-hit}}^{(\tilde{k},k)}  = \Omega \left( \frac{\dist(u,S)^2}{\log( k/\tilde{k}) } \right).
		\]
	\end{lemma}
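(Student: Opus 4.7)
The plan is to work directly from the definition: since $t_{\mathsf{large-hit}}^{(\tilde{k},k)}$ is the smallest $t$ for which $\Pru{v}{\tau_T \leq t}\geq \tilde{k}/k$ for \emph{every} valid pair $(v,T)$ with $\pi(T)\geq 1/4$, it suffices to exhibit a single pair whose hitting probability is below $\tilde{k}/k$ at time $t$. The given pair $(u,S)$ is valid since $|S|\geq n/2$ and the torus is regular, so $\pi(S)\geq 1/2\geq 1/4$. Writing $r:=\dist(u,S)$, the goal is therefore to choose $t$ of order $r^2/\log(k/\tilde{k})$ and show $\Pru{u}{\tau_S\leq t}<\tilde{k}/k$, which forces $t_{\mathsf{large-hit}}^{(\tilde{k},k)}>t$.

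First I would lift the walk $X_s$ on the torus $\mathbb{Z}_L^d$ (with $L=n^{1/d}$) to a walk $Y_s$ on $\mathbb{Z}^d$ that tracks every step without reducing modulo $L$. Each one-dimensional marginal $Y_s^{(i)}-u^{(i)}$ is then a martingale whose increments lie in $\{-1,0,+1\}$ (the laziness only shrinks them), and the contraction property of the modular metric yields $\dist(X_s,u)\leq \sum_{i=1}^d |Y_s^{(i)}-u^{(i)}|$. Consequently the event $\{\tau_S\leq t\}$ is contained in
\[
\left\{\max_{s\leq t}\max_{i} |Y_s^{(i)}-u^{(i)}|\geq r/d\right\},
\]
since hitting $S$ at time $s$ requires $\dist(X_s,u)\geq r$, which by pigeonhole forces at least one coordinate displacement to reach $r/d$.

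Then I would apply the maximal form of Azuma--Hoeffding --- obtained by applying Doob's submartingale inequality to the exponential $\exp(\pm\lambda(Y_s^{(i)}-u^{(i)}))$ and optimising $\lambda$ --- to each coordinate, giving
\[
\Pr{\max_{s\leq t}|Y_s^{(i)}-u^{(i)}|\geq r/d}\leq 2\exp\!\left(-\frac{r^2}{2d^2 t}\right).
\]
A union bound over the $d$ coordinates yields $\Pru{u}{\tau_S\leq t}\leq 2d\exp(-r^2/(2d^2 t))$. Picking $t=c\,r^2/\log(k/\tilde{k})$ with $c=c(d)>0$ sufficiently small then forces the right-hand side below $\tilde{k}/k$; using $\tilde{k}\leq k/2$ (so $\log(k/\tilde{k})\geq\log 2$) and treating $d$ as a constant, an explicit choice like $c=1/(4d^2(1+\log_2(2d)))$ works.

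The proof is essentially a routine Gaussian tail estimate for the diffusive displacement of a $d$-dimensional walk, and the only real care is (a) doing the coordinate decomposition on the \emph{lifted} walk rather than on the torus directly, and (b) using the maximal --- not the fixed-time --- Azuma bound, since $\tau_S\leq t$ is a statement about the entire trajectory up to $t$. No deeper obstacle seems to arise, and the restriction that $d$ is a constant is used only so that the constants from the union bound and the $1/d^2$ in the exponent can be absorbed into the hidden $\Omega(\cdot)$ constant.
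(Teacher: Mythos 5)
Your proposal is correct and follows essentially the same route as the paper: the paper also reduces to a one-dimensional coordinate martingale with increments in $\{-1,0,+1\}$, applies a maximal Azuma-type bound (via a stopped martingale rather than your Doob-plus-exponential variant, which is an equivalent device), union-bounds over the $d$ dimensions to get $\Pru{u}{\tau_S \leq t}\leq 2d\exp(-\dist(u,S)^2/(2td^2))$, and then chooses $t$ proportional to $\dist(u,S)^2/\log(k/\tilde{k})$. Your constant bookkeeping using $\tilde{k}\leq k/2$ checks out, so no gap.
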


	\begin{lemma}\label{lem:obliviouslowerbound}
		Let $S \subseteq V$ be a subset of vertices with $\pi(S) \geq 1/4$, $t \geq 2$ be an integer and $k \geq 100$ such that $\sum_{s=0}^{t} P_{u,u}^s \geq 32 \cdot t \cdot \pi(u) \cdot k$ for all $u \in S$.	Then, for any distribution $\mu$ on $S$,
		\[
		\Exu{\mu^{k/8}}{\tau_{\mathsf{cov}}^{(k/8)}(S)} \geq t/5.
		\]
	\end{lemma}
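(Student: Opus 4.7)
The plan is to set $T := \lfloor t/5 \rfloor$ and, via a first-moment argument, show that after $T$ steps a constant fraction of some $\pi$-heavy subset $S_2 \subseteq S$ remains uncovered by the $k/8$ walks in expectation; this forces $\Pru{\mu^{k/8}}{\tau_{\mathsf{cov}}^{(k/8)}(S)>T}$ to be bounded below by an absolute constant, yielding $\Exu{\mu^{k/8}}{\tau_{\mathsf{cov}}^{(k/8)}(S)}\geq c_0T\geq t/5$ once the constants are tuned.

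First I would upper-bound the single-walk hit probability via the strong Markov property at $\tau_v$: conditional on $\tau_v=s\leq T$ the process from time $s$ to $2T$ is a fresh lazy walk from $v$ of length $\geq T$, so $\Exu{\mu}{X_v(2T)}\geq \Pru{\mu}{\tau_v\leq T}\cdot\sum_{j=0}^{T}P_{v,v}^{j}$, and hence $\Pru{\mu}{\tau_v\leq T}\leq \Exu{\mu}{X_v(2T)}\big/\sum_{j=0}^{T}P_{v,v}^{j}$. Since the walk is lazy (non-negative spectrum), $P_{v,v}^{j}$ is non-increasing in $j$, so averaging gives $\sum_{j=0}^{T}P_{v,v}^{j}\geq \frac{T+1}{t+1}\sum_{j=0}^{t}P_{v,v}^{j}\geq c\,T\pi(v)k$ for an absolute $c>0$ and every $v\in S$, using the hypothesis.

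For the numerator I would exploit that $\sum_{v\in S}\Exu{\mu}{X_v(2T)}\leq 2T+1$, since this is the expected time spent in $S$ during $2T$ steps. Dividing by $\pi(S)\geq 1/4$, the $\pi$-weighted average over $v\in S$ of $\Exu{\mu}{X_v(2T)}/\pi(v)$ is at most $4(2T+1)$, so Markov's inequality under $\pi/\pi(S)$ produces $S_2\subseteq S$ with $\pi(S_2)\geq\pi(S)/2\geq 1/8$ on which $\Exu{\mu}{X_v(2T)}/\pi(v)\leq 8(2T+1)$. Combining with the hit-probability bound gives $\Pru{\mu}{\tau_v\leq T}\leq C/k$ for every $v\in S_2$ and some absolute $C$. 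Independence of the $k/8$ walks together with $k\geq 100$ then yields $\Pru{\mu^{k/8}}{v\text{ unvisited}}\geq(1-C/k)^{k/8}\geq c_0$ for an absolute $c_0>0$, so the expected number of uncovered vertices in $S_2$ is at least $c_0|S_2|$; since this count never exceeds $|S_2|$, it follows that $\Pru{\mu^{k/8}}{\tau_{\mathsf{cov}}^{(k/8)}(S_2)>T}\geq c_0$, and as $S_2\subseteq S$ the same inequality holds with $S$ in place of $S_2$.

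The main obstacle is the numerology: the generous factor $32$ in the hypothesis must absorb the losses in the monotonicity bound, the $\pi$-weighted Markov step (which pays for $\pi(S)\geq 1/4$), and the survival estimate $(1-C/k)^{k/8}\geq c_0$, leaving $c_0T\geq t/5$. The edge case of very small $t$ is handled trivially: the hypothesis combined with $\sum_{j=0}^{t}P_{v,v}^{j}\leq t+1$ forces $\pi(v)\leq 1/(16k)$ for every $v\in S$, whence $|S|\geq \pi(S)/\max_{v\in S}\pi(v)\geq 4k>k/8$, so $\tau_{\mathsf{cov}}^{(k/8)}(S)\geq 1$ almost surely, which already yields the conclusion for bounded $t$.
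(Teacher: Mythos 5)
Your argument is sound and reaches the result by a genuinely different implementation of the same first-moment strategy as the paper. The paper works with walks killed at independent $\geo{1/t}$ times: it defines $S'$ as the set of vertices whose expected number of visits by a killed walk started from $\mu$ is at most $8t\pi(u)$ (using that a killed walk visits at most $t$ vertices in expectation), bounds the hitting probability by $\Ex{Z}/\Ex{Z\mid Z\geq 1}\leq 1/k$ via the killed Green's function $\sum_{s}\tilde{P}^s_{u,u}\geq \frac14\sum_{s\leq t}P^s_{u,u}$, and then converts $k$ killed walks into $k/8$ walks of fixed length $t$ by a Chernoff bound --- that conversion is where the otherwise mysterious $k/8$ in the statement comes from. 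You avoid the killing device entirely: your $2T$-window bound $\Pru{\mu}{\tau_v\leq T}\leq \Exu{\mu}{X_v(2T)}\big/\sum_{j\leq T}P^j_{v,v}$ (strong Markov property plus monotonicity of $P^j_{v,v}$ for the lazy walk) plays the role of the killed Green's function, and your Markov-inequality selection of $S_2$ is the exact analogue of $S'$. Both proofs finish identically, with independence of the walks and a reverse Markov inequality on the number of uncovered vertices. Your route is more direct and works with the $k/8$ walks from the start; the paper's killing device buys slightly cleaner identities at the cost of the extra Chernoff step.

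One piece of numerology does need fixing and is not just a matter of tuning: with $T=\lfloor t/5\rfloor$ you can only conclude $\Exu{\mu^{k/8}}{\tau_{\mathsf{cov}}^{(k/8)}(S)}\geq c_0(T+1)$, and since necessarily $c_0<1$ this falls strictly short of $t/5$ for large $t$ no matter how the remaining constants are chosen. The remedy is immediate: take $T=t$. Then the denominator is $\sum_{j=0}^{t}P^j_{v,v}\geq 32t\pi(v)k$ directly from the hypothesis (no monotonicity needed), the numerator on $S_2$ is at most $8(2t+1)\pi(v)\leq 24t\pi(v)$ for $t\geq 2$, hence $\Pru{\mu}{\tau_v\leq t}\leq \frac{3}{4k}$ and $\bigl(1-\frac{3}{4k}\bigr)^{k/8}\geq e^{-1/10}>1/5$ for $k\geq 100$, giving $\Exu{\mu^{k/8}}{\tau_{\mathsf{cov}}^{(k/8)}(S)}\geq \frac{1}{5}(t+1)\geq t/5$ with room to spare, and your closing edge-case discussion becomes unnecessary.
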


	\subsection{Proofs of Lemmas in Section \ref{sec:elementry}}
	\begin{proof}[Proof of Lemma \ref{lem:mixupper}]We start with Item \eqref{itm:maxa}. Let $\bar d_\mathrm{TV}(t) = \max_{x,y \in V} \|P_{x,\cdot}^t -P_{y,\cdot}^t\|_{\mathrm{TV}}$. Then \[ d_{\mathrm{TV}}(t)\leq \bar d_\mathrm{TV}(t) \leq 2 d_{\mathrm{TV}}(t), \qquad\bar d_\mathrm{TV}(\ell t)\leq \bar d_\mathrm{TV}(t)^\ell \qquad \text{and}\qquad s(2t)\leq 1-(1-\bar d_\mathrm{TV}(t))^2 ,   \] hold for any integers $t,\ell\geq 0$ by Lemmas  4.11, 4.12 and 19.3, respectively, of \cite{levin2009markov}. 
		
		Thus if we take $\ell= \left \lceil \log_{2}\left(2k/(k-\tilde{k})\right)\right \rceil$ and $t = \tmix(1/4)$ then we have  \[ s(2\ell t) \leq 1-(1-\bar d_\mathrm{TV}(\ell t))^2 \leq  2\bar d_\mathrm{TV}(\ell t)\leq 2\bar d_\mathrm{TV}( t)^\ell \leq 2\cdot( 2 d_{\mathrm{TV}}( t))^\ell  \leq 2\cdot \left(\frac{1}{2}\right)^{\log_{2}\left(\frac{2k}{k-\tilde{k}}\right)} \leq 1- \frac{\tilde{k}}{k} , \] 
			 it follows that $ t_{\mathsf{mix}}^{(\tilde{k},k)}\leq 2\ell t \leq 2 \tmix \left \lceil \log_{2}\left(2k/(k-\tilde{k})\right)\right \rceil$.

		For Item \eqref{itm:maxb}, by the Markov property for any non-negative integers $\ell$ and $t$, we have 
		\[\max_{x\in V, S\subseteq V, \pi(S)\geq 1/4}\Pru{x}{\tau_S>\ell t}\leq \left(\max_{x\in V, S\subseteq V, \pi(S)\geq 1/4}\Pru{x}{\tau_S>t}\right)^\ell. \]
		By Markov's inequality we have  $\max_{x\in V, S\subseteq V, \pi(S)\geq 1/4}\Pru{x}{\tau_S>2t_{\mathsf{H}}}\leq 1/2$ and by Theorem \ref{thm:mixhit} there exists some $C$ such that $t_{\mathsf{H}}\leq C\tmix$. Thus, if we take $T = 2C\tmix \cdot \left \lceil \log_2\left(k/(k-\tilde{k})\right)\right \rceil$, then \[\max_{x\in V, S\subseteq V, \pi(S)\geq 1/4}\Pru{x}{\tau_S>T}\leq  \left(\max_{x\in V, S\subseteq V, \pi(S)\geq 1/4}\Pru{x}{\tau_S>2t_{\mathsf{H} }}\right)^{\log_2\left(k/(k-\tilde{k})\right)} \leq 1 -\frac{\tilde{k}}{k}.  \] It follows that $t_{\mathsf{large-hit}}^{(\tilde{k},k)} \leq T = C'\tmix \left\lceil\log\left(k/(k-\tilde{k})\right)\right\rceil$ for some $C'<\infty$.
	\end{proof}
	
	\begin{proof}[Proof of Lemma \ref{lem:mixlowerbound}]For Item \eqref{itm:mix1}, if we let $\ell= \tmix^{(\tilde{k},k)}+1$ then the separation distances satisfies $s(\ell)\leq 1-\tilde{k}/k$. Thus, by the definition of separation distance, for any pair of vertices $x,y\in V$ we have $P_{x,y}^\ell\geq (\tilde{k}/k) \cdot \pi(y)$. Thus for any $x\in V$ and set $S\subseteq V$ satisfying $\pi(S)\geq 1/4$ we have 
		\begin{equation}\label{eq:hitprob}
		\Pru{x}{\tau_S \leq \ell }\geq \sum_{y\in S} P_{x,y}^\ell \geq  \sum_{y\in S} \frac{\tilde{k}}{k} \cdot \pi(y) = \frac{\tilde{k}}{k}\sum_{y\in S}  \pi(y) = \frac{\tilde{k}}{k}\pi(S)\geq \frac{\tilde{k}}{4k}. 
		\end{equation}
		Since \eqref{eq:hitprob} holds for all $x\in V$ and $S\subseteq V$ where $\pi(S)\geq 1/4$, we have $ t_{\mathsf{H}} \leq \ell \cdot (4k/\tilde{k})$, as $\tau_S$ is stochastically dominated by $\ell$ times the number of phases of length $\ell$ before $S$ is hit. By Theorem \ref{thm:mixhit} there exists a universal constant $ C<\infty$ such that for any graph $\tmix \leq   C\cdot t_{\mathsf{H}}$, thus  
		\[\tmix \leq C\cdot t_{\mathsf{H}} \leq C\cdot \ell \cdot (4k/\tilde{k}) =\frac{4C k}{\tilde{k}} \left( \tmix^{(\tilde{k},k)}+1\right) \leq C'\cdot \frac{k}{\tilde{k}} \cdot \tmix^{(\tilde{k},k)},   \] for some universal constant $C'<\infty$ as $ \tmix^{(\tilde{k},k)}\geq 1$.
		
		For Item \eqref{itm:mix2}, observe that if instead we set $ \ell = t_{\mathsf{large-hit}}^{(\tilde{k},k)}$ then \eqref{eq:hitprob} still holds (in fact the stronger bound $\Pru{x}{\tau_S \leq \ell } \geq \tilde{k}/k$  holds) and the rest of the proof goes through unchanged. \end{proof}

	\begin{proof}[Proof of Lemma \ref{lem:mixingbdd}] If we let $\ell= \tmix^{(4\tilde{k},k)}+1$ then $s(\ell)\leq 1-4\tilde{k}/k$. So, $P_{x,y}^\ell\geq (4\tilde{k}/k) \cdot \pi(y)$ for any pair of vertices $x,y\in V$. Thus for any $x\in V$ and set $S\subseteq V$ satisfying $\pi(S)\geq 1/4$ we have 
		\begin{equation*}
		\Pru{x}{\tau_S \leq \ell }\geq \sum_{y\in S} P_{x,y}^\ell \geq  \sum_{y\in S} \frac{4\tilde{k}}{k} \cdot \pi(y) = \frac{4\tilde{k}}{k}\sum_{y\in S}  \pi(y) = \frac{4\tilde{k}}{k}\pi(S)\geq \frac{\tilde{k}}{k}. 
		\end{equation*}Consequently, we have $ t_{\mathsf{large-hit}}^{(\tilde{k},k)}\leq \ell = \tmix^{(4\tilde{k},k)} +1 \leq 2\tmix^{(4\tilde{k},k)},$ as claimed since $\tmix^{(4\tilde{k},k)}\geq 1$. 
	\end{proof}

	\subsection{Proofs of Upper and Lower Bounds for Covering via Partial Mixing}
	
	\begin{proof}[Proof of Theorem \ref{thm:characterupper}]
		Fix any $1 \leq \tilde{k} < k$.
		It suffices to prove that with $k$ walks starting from arbitrary positions running for 
		\[
		t= t_{\mathsf{mix}}^{(\tilde{k},k)}  + 2 \tcov^{(\tilde{k})}(\pi) \leq 3 \cdot \max \left( t_{\mathsf{mix}}^{(\tilde{k},k)} , \tcov^{(\tilde{k})}(\pi) \right)
		\]
		steps, we cover $G$ with probability at least $1/4$. Consider a single walk $X_1$ on $G$ starting from $v$. From the definition of $s_v(t)$ we have that at time $T= t_{\mix}^{(\tilde k, k)}$ there exists a probability measure $\nu_v$ on $V$ such that, 
		$$P_{v,w}^T = (1-s_v(T))\pi(w)+s_v(T)\nu_v(w).$$
		Now, note that \eqref{def:partialMixingSep} yields $(1-s_v(T))\geq \tilde k/k$, therefore, we can generate $X_1(T)$ as follows: with probability $1-s_v(T)\geq \tilde k/k$ we sample from $\pi$, otherwise we sample from $\nu_v$. If we now consider $k$ independent walks, the number of walks that are sampled at time $T$ from $\pi$ has a binomial distribution $\text{Bin}(k,\tilde{k}/k)$ with $k$ trials and probability $\tilde k/k$, whose expectation is $\tilde k$. Since the expectation $\tilde{k}$ is an integer it is equal to the median. Thus, with probability at least $1/2$, at least $\tilde k$ walks are sampled from the stationary distribution. 
		Now, consider only the $\tilde k$ independent walks starting from $\pi$. After  $2t_{\mathsf{cov}}^{(\tilde k)}(\pi)$ steps, these walks will cover $G$ with probability at least $1/2$, due to Markov's inequality.
		
		We conclude that in $t$ time steps, from any starting configuration of the $k$ walks, the probability we cover the graph is at least $1/4$. Hence in expectation, after (at most) 4 periods of length $t$ we cover the graph. 
	\end{proof}
	
	\begin{proof}[Proof of Theorem \ref{thm:characterlower}] By the definition of $ t_{\mathsf{large-hit}}^{(\tilde{k},k)} $, there exists a vertex $u$ and $S\subseteq V$ such that $\Pru{u}{\tau_{S}\leq t_{\mathsf{large-hit}}^{(\tilde{k},k)}-1} < \tilde k /k$. For such a vertex $u$, we consider $k$ walks, all started from $u$, which run for
		$t_{\mathsf{large-hit}}^{(\tilde{k},k)}-1$ steps. It follows that the number of walks that hit $S$ before time $t_{\mathsf{large-hit}}^{(\tilde{k},k)}$ is dominated by a binomial distribution with parameters $k$ and $p = \tilde k/k$, whose expected value and median is $\tilde k$. We conclude that with probability at least $1/2$, at most $\tilde k$ walks hit $S$.
		Note that $|S| \geq \pi(S) /\pi_{\max} \geq 1/(4 \pi_{\max})$. Hence even if all $\tilde{k}$ walks that reached $S$ before time $t_{\mathsf{large-hit}}^{(\tilde{k},k)}$ were allowed to run for exactly $1/(8 \tilde{k} \pi_{\max})-1$ steps (if a walk exits $S$, we can completely ignore the steps until it returns to $S$), then the total number of covered vertices in $S$ would be at most
		\[ \tilde{k} \cdot \left(\frac{1}{8 \tilde{k} \pi_{\max}} \right) < 1/(4 \pi_{\max}) \leq |S|,
		\]
		which concludes the proof of the first bound.
		
		For the second bound, we follow the first part of the proof before and consider again at most $ \tilde{k}$ walks that reach the set $S$.
		Let us denote by $\kappa$ the induced distribution over $S$ upon hitting $S$ from $u$ for the first time. Now, each of the $\tilde{k}$ walks continues for another $\ell=\lfloor \epsilon (n/\tilde{k}) \log(n) \rfloor$ steps, where $0<\epsilon=\epsilon(\gamma) <1$ is a sufficiently small constant fixed later. We now define, for any $v \in S$, the probability $
		p_{v} = \Pru{\kappa}{ \tau_v \leq \epsilon (n/\tilde{k}) \log n }.$
		Observe that since a walk of length $\ell$ can cover at most $\ell$ vertices, we have $\sum_{v \in S} p_v \leq \ell$. Further, define the set $\widetilde{S}=\left\{ v \in S \colon p_v < 8 \ell/n \right\}$. 
		
		Since for every $v$ with $p_v\geq 8 \ell/n$ we have that $np_v/(8 \ell)\geq 1$ and
		\begin{align*}
		|S \setminus \tilde{S}| = \left|\left\{v \in S: p_v \geq 8\ell/n\right\} \right| \leq \sum_{v \in S} \frac{np_v}{8\ell} \leq \frac{n}{8} \leq \frac{|S|}{2},
		\end{align*}
		where the last inequality holds since $G$ is regular. Thus $\tilde{S} \geq |S|/2$. Now, let $Z$ be the number of unvisited vertices in $\widetilde{S}$ after we run $\tilde{k}$ random walks starting from $\kappa$. If $p_*=\max_{v\in \tilde{S}}p_v<8\ell/n$ then, by definition and recalling that $\ell=\lfloor \epsilon (n/\tilde{k}) \log(n) \rfloor$ where $0<\eps<1$, we have
		\begin{align*}
		\Ex{Z} \geq |\widetilde{S}| \cdot \left(1 - p_* \right)^{\tilde{k}} \geq (n/2) \cdot e^{-\tilde{k}\cdot p_* }  \cdot \left(1 - \tilde{k} p_{*}^2 \right) \geq 
		n/2 \cdot e^{- 8 \epsilon \log(n) }
		\cdot \left(1 - \frac{(8 \log n)^2}{\tilde{k}}  \right) \geq
		n^{1-9 \eps },
		\end{align*}
		where the second inequality holds by \eqref{eq:cheatsheet} and the last since $\tilde{k}\geq n^{\gamma}  $ where $\gamma>0$ fixed.
		
		Finally, since each of these $\tilde{k}$ random walks can change $Z$ by at most $\ell$ vertices, by the method of bounded differences \cite[Theorem 5.3]{DPmobd},
		\[\Pr{Z<\Ex{Z}/2}\leq \exp\left(-2\frac{(Z/2 )^2 }{\tilde{k}\ell^2 }  \right) \leq \exp\left(- \frac{ n^{2-18\epsilon} \tilde{k} }{8 \eps^2 n^2 \log^2 n }  \right)  < 1/2 ,  \] provided we have $\epsilon <  \gamma/18$ as $\tilde{k} \geq n^{\gamma}$. This implies that $\Pr{Z \geq 1} \geq 1/2$. 
	\end{proof}
	
	\begin{proof}[Proof of Theorem \ref{thm:cyclelower}]Since we are bounding the worst-case cover time from below we can assume that all walks start from a single vertex $u$. First, consider the $k$ walks running for $ t_{\mathsf{large-hit}}^{(\tilde{k},k)}  -1$ steps. Then, by the definition of $ t_{\mathsf{large-hit}}^{(\tilde{k},k)} $, there exists a vertex $u$ and $S\subseteq V$ such that $\Pru{u}{\tau_{S}\leq t} < \tilde k /k$, therefore, the number of walks that, starting from $u$, hit $S$ before time $t$ is dominated by a binomial distribution with parameters $k$ and $p = \tilde k/k$, whose expected value and median is $\tilde k$. We conclude that, if $\mathcal{E}$ is the event that at most $\tilde k$ walks hit $S$ by time $ t_{\mathsf{large-hit}}^{(\tilde{k},k)}-1$, then $\Pr{\mathcal{E}}\geq 1/2$. 
		
		Although, conditional on $\mathcal{E}$, we know at most $\tilde{k}$ walks hit $S$ by time $ t_{\mathsf{large-hit}}^{(\tilde{k},k)} -1$, we do not know when they arrived or which vertices of $S$ they hit first. For a lower bound we assume these $\tilde{k}$ walks arrived at time $0$ and then take the minimum over all sets $S$ such that $\pi(S)\geq 1/4$ and starting distributions $\mu$ on $\partial S$, the vertex boundary of $S$ (note all particles started from $u$, so they have the same distribution when they enter $S$ for first time). It follows that, conditional on $\mathcal{E}$, the expected time for $\tilde{k}$ walks which hit $S$ before time $ t_{\mathsf{large-hit}}^{(\tilde{k},k)} $ to cover $S$ is at least $t_{\mathsf{large-cov}}^{(\tilde k)}$ and so the result follows. 
	\end{proof}

	\subsection{Proofs of Bounds on the Large-Hit and Large-Cover Times}
	\begin{proof}[Proof of Lemma \ref{conductancebd}]
		Let $S \subseteq V$ be a set such that $\Phi(G)=\Phi(S)$ and $\pi(V \setminus S) \geq 1/2$ (such a set exists by symmetry of $\Phi(S)=\Phi(V \setminus S)$). Let $\pi_{S}$ be the stationary distribution restricted to $S$, that is $\pi_{S}(s)= \pi(s)/\pi(S)$ for $s\in S$ and $\pi_{S}(x)=0$ for $x\notin S$. As shown in \cite[Proposition~8]{ST12}, the probability that a (single) random walk $X_t$ remains in a set $S$ when starting from a vertex in $S$ sampled from $\pi_{S}$, within $t$ steps is at least $ (1-\Phi(S)/2)^{t} \geq 1 - \Phi(S) t/2 $. Let $t<\frac{2\tilde k}{k\Phi(S)}$, then $\Pru{\pi_S}{\tau_{S^c} \leq t} < \tilde k/k$, and thus by taking $u$ as the vertex that minimises the escape probability from $S$, we conclude that  $t_{\mathsf{large-hit}}^{(\tilde{k},k)}  \geq \frac{\tilde{k}}{k} \cdot \frac{2}{\Phi(S)}$. 
	\end{proof}

	Before proving Lemma~\ref{lem:largehit} we first establish an elementary result.
	
	\begin{lemma}[{cf.~\cite[Theorem 13.4]{ProbNetworks}}]\label{lem:carne}
		Let $G=(V,E)$ be a $d$-dimensional torus ($d \geq 2$) or cycle ($d=1$). Then for any $ D\geq 0$ and  $t \geq 1$,
		\[
		\Pr{ \max_{1 \leq s \leq t} \operatorname{dist}(X_0,X_s) \geq  D } \leq 2d \cdot  \exp\left( -\frac{D^2}{2td^2} \right).
		\]
	\end{lemma}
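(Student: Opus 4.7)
The plan is to decompose the distance in the $d$-torus into contributions from each coordinate and control each one via a standard martingale concentration inequality. Write $G = (\mathbb{Z}/m)^d$ with $m = n^{1/d}$, so that the graph distance equals the $\ell_1$-sum of the cyclic coordinate distances. If $\operatorname{dist}(X_0,X_s)\geq D$ for some $s \leq t$, then by pigeonhole there is at least one coordinate $i \in \{1,\dots,d\}$ whose cyclic displacement from the start exceeds $D/d$ at that time. A union bound over the $d$ coordinates will therefore reduce the problem to a one-dimensional tail estimate for a single projected walk.

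Fix such a coordinate $i$. Under the lazy random walk on the $d$-torus, in each step coordinate $i$ increases by $1$ with probability $1/(4d)$, decreases by $1$ with the same probability, and stays fixed otherwise; hence the increments $Z_r^{(i)}$ are independent across steps with mean $0$ and $|Z_r^{(i)}| \leq 1$. Lift the coordinate walk to $\mathbb{Z}$ by setting $Y_s^{(i)} := \sum_{r \leq s} Z_r^{(i)}$. Because $D/d \leq D \leq m/2$, whenever the cyclic displacement along coordinate $i$ reaches $D/d$ the unwrapped displacement $|Y_s^{(i)}|$ also satisfies the same bound --- any value of the form $km \pm D/d$ with $k \geq 1$ is at least $m - D/d \geq D/d$. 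Thus it suffices to bound
\[
\Pr{ \max_{1 \leq s \leq t} |Y_s^{(i)}| \geq D/d }.
\]

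This is a standard application of Doob's maximal inequality to the submartingale $\exp(\lambda Y_s^{(i)})$: using Hoeffding's lemma $\Ex{ e^{\lambda Z_r^{(i)}} } \leq e^{\lambda^2/2}$ and optimising $\lambda = D/(td)$ yields the one-sided tail bound $\exp(-D^2/(2td^2))$, and the same argument applied to $-Y_s^{(i)}$ gives the corresponding lower tail. A final union bound across the two sides and the $d$ coordinates produces the claimed factor $2d$, establishing the lemma. The argument is essentially routine; the only place where the hypothesis $D \leq (1/2) n^{1/d}$ plays a role is in the lift-to-integers step, ensuring that no wrap-around of the cyclic walk can mask a genuinely large displacement.
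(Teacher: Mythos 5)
Your proof is correct and follows essentially the same route as the paper's: decompose the displacement coordinate-wise, reduce to a one-dimensional maximal tail bound at threshold $D/d$, and union bound over the $d$ coordinates (and two signs) to collect the factor $2d$. The only cosmetic difference is that the paper controls the running maximum by applying Azuma to the stopped martingale $S_{\tau\wedge i}$, whereas you use Doob's maximal inequality for the exponential submartingale $e^{\lambda Y_s}$; these are interchangeable and yield the identical bound $2\exp(-D^2/(2td^2))$ per coordinate.
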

	\begin{proof}[Proof of Lemma \ref{lem:carne}] 
		Consider a random walk for $t$ steps on a $d$-dimensional torus (or cycle), where $d\geq 1$.
		Let $Z_1,Z_2,\ldots,Z_{t} \in \{-1,0,+1\}$ be the transitions along the first dimension, and let $S_i = Z_1+\ldots+ Z_i$. Note that $S_i$ is a zero-mean martingale with respect to the $Z_i$. Define $\tau=\min\{i: |S_i|\geq D/d\}\wedge t$, which is a bounded stopping time, and thus $S_{\tau \wedge i}$ is another martingale with increments bounded by 1. Then, by Azuma's inequality \cite[Theorem 13.4]{PandC},
		\begin{align*}
		\Pr{\tau\leq t} = \Pr{|S_{\tau \wedge t}| \geq D/d} \leq 2\exp\left( - \frac{(D/d)^2 }{ 2t } \right)
		\end{align*}
		
		Now, for the random walk, in order to overcome a distance $D$ during $t$ steps, the above event must occur for at least one of the $d$ dimensions, so by the union bound
		\begin{align*}
		\Pr{ \max_{1\leq s \leq t}\dist(X_0,X_s) \geq D } \leq 2d \cdot \exp \left( - \frac{D^2}{2t d^2 } \right),
		\end{align*}as claimed.
	\end{proof}

	\begin{proof}[Proof of Lemma~\ref{lem:largehit}]
	Recall from the definition of $t_{\mathsf{large-hit}}^{(\tilde{k},k)}$ that there must exist a vertex $u$ and set $S$ with $\pi(S)\geq 1/4$ such that the probability a random walk of length $t_{\mathsf{large-hit}}^{(\tilde{k},k)}$ started from $u$ has hit $S$ is at most $\tilde k/ k $. Notice that since the $d$-dimensional torus is regular any set $S$ of size at least $n/4$ satisfies $\pi(S)\geq 1/4$. In order for a random walk to hit the set $S$ within $t$ steps, it must reach a distance $D=\dist(u,S)$ from $u$ at least once during $t$ steps. Let $t$ be given by
		$$t = \left\lfloor \frac{D^2}{2d^2\log\left(2dk/\tilde k\right)}\right\rfloor.$$ Then by Lemma~\ref{lem:carne},
		\begin{align*}
		\Pru{u}{ \tau_{S} \leq t } &\leq \Pr{ \max_{1 \leq s \leq t} \dist(u,X_t) \geq D }\leq 2d \cdot \exp\left( -\frac{D^2}{2td^2} \right) \leq \frac{\tilde{k}}{k}, 
		\end{align*}
		and the result follows. 
	\end{proof}
	
	\begin{proof}[Proof of Lemma \ref{lem:obliviouslowerbound}]
		In the first part of the proof we will work with random walks whose lengths are independent samples from $\geo{1/t}$. That is, we consider walks $(X_0,X_1,\ldots, X_{L-1})$ where $L\geq 1$ is a geometric random variable with mean $t$, which is independent of the trajectory $(X_s)_{s\geq 0 }$. Let $\tilde P_{w,u}^s = \Pru{w}{X_s = u, s<  L}$. We call the above a geometric random walk of expected length $t$. We consider a collection of $k$ independent geometric random walks of length $t$, in particular the lengths of these walks are also independent of each other.
		
		Let us lower bound the expected number of unvisited vertices in $S$ by $k$ independent geometric random walks of expected length $t$. Define a subset $S' \subseteq S$ as
		\[
		S' = \left\{ u \in S \colon ~\sum_{s=0}^{\infty} \sum_{w \in V} \mu(w) \tilde{P}_{w,u}^s \leq 8 t \cdot \pi(u) \right\}.
		\]
		Since the geometric random walk has an expected length of $t$, it visits at most $t$ vertices in expectation and thus
		\[
		\sum_{u \in S} \sum_{s=0}^{\infty} \sum_{w \in V} \mu(w) \tilde{P}_{w,u}^s \leq t.
		\]
		It follows by definition of $S'$ that 
		\[
		t \geq \sum_{u \in V \setminus S'} \sum_{s=0}^{\infty} \sum_{w \in V} \mu(w) \tilde{P}_{w,u}^s \geq \sum_{u \in V \setminus S'} 8 t \cdot \pi(u),
		\]
		and thus
		$\sum_{u \in V \setminus S'} \pi(u) \leq 1/8$. Hence $\sum_{u \in S'} \pi(u) \geq \pi(S)-1/8 \geq 1/4-1/8 = 1/8$.

		Let $Z_t=Z_t(u)$ denote the number of visits to $u$ by a geometric random walk of expected length $t$. 
		The probability a single walk starting from $\mu$ visits a vertex $u \in S'$ before being killed is at most
		\begin{align*}
		\Pru{\mu}{ Z_t \geq 1} &= \frac{ \Exu{\mu}{Z_t}}{\Exu{\mu}{Z_t \, \mid \, Z_t \geq 1}} 
		= \frac{ \sum_{s=0}^{\infty} \sum_{w \in V} \mu(w) \tilde{P}_{w,u}^s   }{ \sum_{s=0}^{\infty} \tilde{P}_{u,u}^s  },  \end{align*}since conditional on the walk having reached a vertex $u$, the expected remaining returns before getting killed is equal to $\sum_{s=0}^{\infty} \tilde{P}_{u,u}^s$. Now observe that $\tilde{P}^{s}_{u,u} \geq \frac{1}{4} \cdot P_{u,u}^{s}$, which follows since $\Pr{ \geo{1/t} > s} =  \left(1 - \frac{1}{t} \right)^{s} \geq \left(1 - \frac{1}{t} \right)^{t}\geq 1/4$, for any $t\geq 2$ and any $s\leq t$. Thus we have  \begin{align*}
			\Pru{\mu}{ Z_t \geq 1} &\leq \frac{ \sum_{s=0}^{\infty} \sum_{w \in V} \mu(w) \tilde{P}_{w,u}^s   }{ \frac{1}{4}\cdot \sum_{s=0}^{t}  P_{u,u}^s  }
	   \leq \frac{8 t \cdot \pi(u)}{\frac{1}{4} \cdot( 32 \cdot t\cdot \pi(u) \cdot k)}  =\frac{1}{k},
		\end{align*}by hypothesis. Let $Y$ the stationary mass of the unvisited vertices in $S'$, then
		\[
		\Ex{ Y } \geq \sum_{u \in S'} \pi(u) \cdot \left(1 - \frac{1}{k} \right)^{k} \geq \frac{1}{4} \cdot \pi(S').  
		\]
		Hence with probability at least $1/4$, at least one vertex in $S'$ remains unvisited by the $k$ random walks whose length is sampled from $\geo{1/t}$. Finally, since each walk is independent the number of walks which run for more than $t$ steps is binomially distributed with parameters $k$ and $p= (1 - 1/t)^t \geq 1/4$. Thus  by a Chernoff bound the probability that less than $1/8$ of the $k$ random walks run for more than $t$ steps is at most $\exp\left(-\frac{(k/8)^2}{(2k/4)} \right)= e^{-k/32} $. Hence, by coupling, $k/8$ random walks of length $t$ do not visit all vertices in $S'$ with probability at least $1/4-e^{-k/32} \geq 1/5$, provided $k\geq 100$.
	\end{proof}

	\section{Applications to Standard Graphs}\label{fundamental}

	In this section we apply the results of the previous sections to determine the stationary and worst-case multiple walk cover times. Firstly, we determine the stationary cover times for many fundamental networks using results from Sections \ref{ReturnSec} and \ref{Sec:partialmixing}. Secondly, using our results for the stationary cover times, we then apply them to the  $\min$-$\max$ (and $\max$-$\min$) characterisations from Section~\ref{Sec:partialmixing}. 
	Along the way, we also have to derive bounds for the partial mixing time and the time to hit a large set. Due to this section being large, the proofs of all results are located in the same subsections as the statements (unlike Sections \ref{ReturnSec} and \ref{Sec:partialmixing}). For a quick reference and comparison of the results of this section the reader is encouraged to consult Table \ref{tbl:results}. 
	
	\subsection{The Cycle} \label{sec:cycle}
	
	Our first result determines the stationary cover time of the cycle up to constants. This result comes from Theorem \ref{nonregbdd} and Lemma \ref{lemma:generalLowerboundpi} along with some additional results and arguments. 
	
	\begin{theorem}\label{cyclethm}	For the $n$-vertex cycle $C_n$, and any integer $ k \geq 2$, we have 
		\[\tcov^{(k)}(\pi)= \BT{\left(\frac{n}{k}\right)^2\log^2 k}.\]
	\end{theorem}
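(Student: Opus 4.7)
The plan is to obtain matching upper and lower bounds of order $(n/k)^2 \log^2 k$ by splitting the argument according to the size of $k$, combining the general tools from Section~\ref{ReturnSec} with a cycle-specific max-gap analysis.

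For the upper bound, the first step is to apply Theorem~\ref{nonregbdd} to the cycle: with $m = n$ and $\dmin = 2$ this gives $\tcov^{(k)}(\pi) = \BO{(n/k)^2 \log^2 n}$, which already matches the target whenever $k = n^{\Omega(1)}$ since then $\log k \asymp \log n$. For smaller $k$ I will complement this with a max-gap argument: the $k$ stationary starts produce, with high probability in $k$, a maximum gap of order $(n/k) \log k$, and at time $t = C (n/k)^2 \log^2 k$ the walk range $\sqrt{t}$ exceeds this maximum gap by a factor $\sqrt{C}$. Showing that all $k$ gaps are simultaneously covered reduces to bounding the probability that an adjacent pair of walks fails to meet across their common gap, each such event depending only on the running ranges of two independent walks, and a Markov iteration converts the resulting constant cover probability into the desired expectation bound. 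In the residual constant-$k$ regime $(n/k)^2 \log^2 k$ is already $\Theta(n^2)$, so the trivial domination $\tcov^{(k)}(\pi) \leq \tcov(\pi) = \BO{n^2}$ suffices.

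For the lower bound, the starting point is Lemma~\ref{lemma:generalLowerboundpi}. Since the cycle is vertex-transitive, $p_v(t) = \Pru{\pi}{\tau_v \leq t}$ is independent of $v$, so one may take $S = V$ and $\varepsilon = 0$, and a standard local-CLT estimate on the cycle gives $p_v(t) \asymp \sqrt{t}/n$. Choosing $t$ of order $(n/k)^2 \log^2 n$ with a small constant and $\alpha$ slightly below $1$ verifies the hypotheses $p \leq \alpha (\log n)/k$ and $p^2 k = \lo{1}$; the lemma then yields $\Pru{\pi^k}{\tau_{\mathsf{cov}}^{(k)} < t} = \lo{1}$ whenever $k$ is polynomial in $n$, and $\log n \asymp \log k$ in that range gives $\tcov^{(k)}(\pi) = \Omega((n/k)^2 \log^2 k)$. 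For smaller $k$ I will use Lemma~\ref{resetcouple} together with Aldous' Theorem~\ref{aldouslower} applied to the geometric-reset graph $\widehat G(x)$, the strategy already flagged in the paper for binary trees and $2$-dimensional tori; the parameter $x$ is tuned so that the single-walk cover-time lower bound on $\widehat G(x)$ translates back through Lemma~\ref{resetcouple} to the desired stationary multiple-walk lower bound.

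The main obstacle I anticipate is the intermediate regime of sub-polynomial but super-constant $k$, where Theorem~\ref{nonregbdd} produces only the weaker $\log^2 n$ factor and Lemma~\ref{lemma:generalLowerboundpi} no longer applies directly; bridging that gap in both directions requires the cycle-specific max-gap probabilistic analysis together with a careful treatment of the partial dependence between coverage events of adjacent gaps, since consecutive gap events share a single random walk.
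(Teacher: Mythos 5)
Your high-level decomposition (large $k$ via Theorem~\ref{nonregbdd} and Lemma~\ref{lemma:generalLowerboundpi}, small $k$ via a cycle-specific argument) matches the paper's, but the two cycle-specific pieces you propose do not work as stated. The main gap is in the upper bound for sub-polynomial $k$. Reducing coverage to ``each adjacent pair of walks meets across its common gap'' fails quantitatively: for a typical gap of length $g\approx n/k$ and walks of length $t=C(n/k)^2\log^2 k$, the probability that the rightward range of the left walk plus the leftward range of the right walk is less than $g$ is of order $g^2/t=\Theta\bigl(1/(C\log^2 k)\bigr)$, so the expected number of adjacent pairs that fail to connect is $\Theta\bigl(k/\log^2 k\bigr)\to\infty$; for the maximum gap of length $\Theta((n/k)\log k)$ the two-walk failure probability is even a constant. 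No union bound over gaps can close this, and in fact many pairs genuinely fail to connect at this time scale --- the cycle is nevertheless covered only because $\Theta(\log k)$ walks start within distance $O(\sqrt{t})$ of any given stretch, and each of them independently sweeps it with constant probability, giving a per-stretch failure probability of $k^{-\Theta(C)}$. This is exactly what the paper does: it partitions the cycle into $k$ deterministic intervals $\mathcal{I}_i$ of length $n/k$, surrounds each by a window $\mathcal{J}_i$ of length $2\sqrt{t^*}$, shows via Chernoff that $\Omega(\log k)$ stationary starts land in $\mathcal{J}_i$, and via a CLT estimate that each such walk covers $\mathcal{I}_i$ with probability at least $1/50$. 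You must bring in these $\Theta(\log k)$ walks per region; two walks per gap cannot suffice.

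The lower bound for small $k$ is also not recoverable by your route. Aldous' Theorem~\ref{aldouslower} applied to $\widehat{G}(x)$ yields $\Ex{\tau_{\cov}}\geq c\, t^*$ with $t^*=\Theta(n\log n)$ (the value of $t^*$ is essentially forced by $\widehat\pi$ being near-uniform, independently of $x$), and pushing this through Lemma~\ref{resetcouple} gives only $\tcov^{(k)}(\pi)=\Omega((n/k)\log n)$, which is far below $(n/k)^2\log^2 k$ when $k=n^{\lo{1}}$; no tuning of $x$ helps, since $x$ does not enter $t^*$. The paper does not prove this regime either --- it cites \cite[Lemma 18]{KKPS} for $k\leq n^{1/20}$ --- so to make your proof self-contained you would need a genuinely cycle-specific argument (e.g.\ the threshold for covering a circle by random arcs, applied to the excursion ranges of the reset walk). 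Finally, a small but real error in your large-$k$ lower bound: in Lemma~\ref{lemma:generalLowerboundpi} you must take $\alpha$ \emph{small} (the paper uses $\alpha=1/100$ with $k>n^{1/20}$), not ``slightly below $1$''; with $\alpha$ near $1$ the conclusion $\BO{(\log n)^2 n^{\alpha(1+\varepsilon)}/k}$ is $\lo{1}$ only for $k$ nearly linear in $n$, not for all polynomial $k$.
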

	The lower bound for $\tcov^{(k)}(\pi)$ provided $k\geq n^{1/20}$ was already known  \cite[Lemma 18]{KKPS}. Here we prove Theorem \ref{cyclethm}, which holds for any $k\geq 2$, by extending the applicable range of $k$ for the lower bound and supplying a new upper bound. We also demonstrate how to fully recover the worst-case cover time below using our new methods from Section \ref{Sec:partialmixing}.

	\begin{theorem}[{\cite[Theorem 3.4]{Multi}}]\label{cycleworst}
		For the $n$-vertex cycle $C_n$,  and any $2 \leq k \leq n$, we have 
		\[
		t_{\mathsf{cov}}^{(k)} = \Theta \left( \frac{n^2}{\log k} \right).
		\]
	\end{theorem}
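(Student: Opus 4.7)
To prove $\tcov^{(k)}(C_n) = \Theta(n^2/\log k)$, the plan is to apply the min-max characterisations from Section~\ref{Sec:partialmixing} on top of the stationary cover time $\tcov^{(\tilde{k})}(\pi) = \Theta\bigl((n/\tilde{k})^2 \log^2 \tilde{k}\bigr)$ already established in Theorem~\ref{cyclethm}. The upper bound will come from Theorem~\ref{thm:characterupper} and the lower bound from Theorem~\ref{thm:cyclelower}; the real work is in obtaining tight cycle-specific estimates for $t_{\mathsf{mix}}^{(\tilde{k},k)}$, $t_{\mathsf{large-hit}}^{(\tilde{k},k)}$ and $t_{\mathsf{large-cov}}^{(\tilde{k})}$.

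For the upper bound I would take $\tilde{k} := \lceil \log k \rceil$ in Theorem~\ref{thm:characterupper}. Theorem~\ref{cyclethm} immediately gives $\tcov^{(\tilde{k})}(\pi) = O\bigl((n/\log k)^2 (\log\log k)^2\bigr) = o(n^2/\log k)$, so the task reduces to showing $t_{\mathsf{mix}}^{(\tilde{k},k)} = O(n^2/\log k)$. The black-box Lemma~\ref{lem:mixupper}(i) only yields $O(\tmix) = O(n^2)$ here, so I would instead bound the separation distance $s(t) = 1 - n \cdot \min_y P^t(x,y)$ directly: the worst $y$ is the antipode $x + \lfloor n/2 \rfloor$, and a local central limit estimate for the lazy walk on $\mathbb{Z}/n\mathbb{Z}$ gives $P^t(x, x + \lfloor n/2 \rfloor) \gtrsim (1/\sqrt{t}) \cdot e^{-n^2/(4t)}$ for $t = o(n^2)$. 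Plugging in $t = C n^2/\log(k/\tilde{k})$ makes the right-hand side exceed $\tilde{k}/(kn)$, so $s(t) \leq 1 - \tilde{k}/k$ and hence $t_{\mathsf{mix}}^{(\tilde{k},k)} = O(n^2/\log(k/\tilde{k})) = O(n^2/\log k)$.

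For the lower bound I would apply Theorem~\ref{thm:cyclelower} with $\tilde{k}$ a small fixed constant. Lemma~\ref{lem:largehit}, used with any $S$ of $\pi$-mass at least $1/4$ and a vertex $u$ diametrically opposite (so $\dist(u,S) = \Omega(n)$), yields $t_{\mathsf{large-hit}}^{(\tilde{k},k)} = \Omega(n^2/\log(k/\tilde{k})) = \Omega(n^2/\log k)$. For $t_{\mathsf{large-cov}}^{(\tilde{k})}$, for any admissible $S$ and any $\mu$ on $\partial S$ a pigeonhole argument on the cycle shows that with constant probability some $S$-vertex lies at distance $\Omega(n/\tilde{k})$ from all $\tilde{k}$ starting positions; by Lemma~\ref{lem:carne} (Azuma-type displacement tail), hitting this vertex with $\tilde{k}$ walks requires $\Omega\bigl((n/\tilde{k})^2/\log \tilde{k}\bigr) = \Omega(n^2)$ steps when $\tilde{k} = O(1)$. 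Minimising the two and combining with Theorem~\ref{thm:cyclelower} gives the desired $\Omega(n^2/\log k)$.

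The main obstacle is the sharp partial-mixing estimate: the generic Lemma~\ref{lem:mixupper} loses a factor of $\log k$, so the local CLT calculation on $\mathbb{Z}/n\mathbb{Z}$ must be done carefully, balancing the Gaussian normalising factor $1/\sqrt{t}$ against the exponential $e^{-n^2/(4t)}$ to get the exact $\log(k/\tilde{k})$ in the denominator. A secondary subtlety is that the lower bound on $t_{\mathsf{large-cov}}^{(\tilde{k})}$ must hold \emph{uniformly} over all $S$ with $\pi(S) \geq 1/4$ and all $\mu$ on $\partial S$, including fragmented choices such as arithmetic progressions on $C_n$, where $\partial S = S$ and $\mu$ can spread the walks; verifying the pigeonhole step in this case requires observing that even with walks spread over the cycle, the density $|S|/n \geq 1/4$ forces some $S$-vertex into the largest gap between walk positions, which is of length $\Omega(n/\tilde{k})$.
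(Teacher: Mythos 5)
Your proposal is correct and follows the paper's route almost exactly for the upper bound: the paper's Lemma~\ref{lem:cyclemixbdd} is precisely the local-CLT computation on $\mathbb{Z}/n\mathbb{Z}$ you describe, giving $t_{\mathsf{mix}}^{(\tilde{k},k)} \leq n^2/\log(k/\tilde{k})$, and the paper likewise takes $\tilde{k}=\log k$ in Theorem~\ref{thm:characterupper} together with Theorem~\ref{cyclethm}. Your lower bound has the same skeleton (Theorem~\ref{thm:cyclelower} with constant $\tilde{k}$, and Lemma~\ref{lem:largehit} for $t_{\mathsf{large-hit}}^{(\tilde{k},k)}=\Omega(n^2/\log k)$), but you bound $t_{\mathsf{large-cov}}^{(\tilde{k})}=\Omega(n^2)$ differently: the paper invokes Lemma~\ref{lem:obliviouslowerbound}, whose hypothesis $\sum_{s\le t}P_{u,u}^s \geq 32 t\pi(u)k$ is verified via the $\Omega(\sqrt{t})$ return-probability estimate of Lemma~\ref{torireturns}, whereas you use a pigeonhole on the starting positions plus the displacement tail of Lemma~\ref{lem:carne}. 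Your variant is sound and arguably more elementary for the cycle (note the pigeonhole is in fact deterministic: $\tilde{k}$ balls of radius $n/(16\tilde{k})$ cover fewer than $n/4\leq|S|$ vertices, so some $v\in S$ is at distance $\Omega(n/\tilde{k})$ from every walk, whatever $S$ and $\mu$ are); the paper's lemma buys generality, as the same argument handles the large-cover time on other graphs. Two small points to tidy up: for $t_{\mathsf{large-hit}}$ you should choose $S$ as a far arc \emph{given} $u$ (the definition is a minimum over pairs $(u,S)$, so you exhibit one bad pair; an arbitrary $S$ of mass $1/4$ need not admit a distant vertex), and the regime of bounded $k$ needs a separate sentence (Theorem~\ref{thm:cyclelower} and Lemma~\ref{lem:largehit} require $\tilde{k}<k$, resp.\ $\tilde{k}\leq k/2$; for $k=O(1)$ the paper falls back on $\tcov=\Theta(n^2)$ and the $O(k^2)$ bound on the speed-up from \cite{ER09}).
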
   
	
	We begin with the proof for the stationary case. 
	
	\begin{proof}[Proof of Theorem \ref{cyclethm}] We split the analysis for $k\geq 2$ into two cases.

	\noindent\textbf{Case (i)} [$2< k\leq n^{1/20}$]\textbf{:} The lower bound is covered by \cite[Lemma 18]{KKPS} so we just prove the upper bound. For a single walk $\tcov(\pi) = \mathcal{O}(n^2)$ \cite[Proposition 6.7]{aldousfill}  and so since $k$ walks take at most as long to cover as a single walk, we can assume that $k\geq 10000$ when proving the upper bound. 
	
To begin, divide the cycle as evenly as possible into $k$ disjoint intervals $\mathcal{I}_1,\dots, \mathcal{I}_k$ of consecutive vertices, each of size $\lfloor n/k\rfloor$ or $\lceil n/k\rceil$. For $2 \leq  c \leq k/\log k $ let $t^*(c)=t^* = \lceil(cn/k)^2\log^2 k\rceil$. Now, for each interval $\mathcal{I}_i$ we let $\mathcal{J}_i(c)$ be an interval of length $\ell=\lfloor \sqrt{t^*(c)} \rfloor  \leq n$ centred around $\mathcal{I}_i$. Note that since $c\geq 2$ and $n$ is large we have $\mathcal{I}_i\subset \mathcal{J}_i $ for each $1\leq i\leq k$. 
		\begin{claim}\label{clm} For any $2 \leq  c \leq k/\log k $, a walk of length $t^*(c)$ starting at any vertex in the interval $\mathcal{J}_i(c)$ will visit all vertices of $\mathcal{I}_i$ with probability at least $1/250$ when $n$ is suitably large.
		\end{claim}
			\begin{pocd}{\ref{clm}}	Let $\mathcal{N}(0,1)$ be the standard normal distribution, then by \cite{Mills}  for any $x>0$: 
				\begin{equation}\label{eq:mills}
					\Pr{\mathcal{N}(0,1)> x} \geq  \frac{x}{x^2+1} \cdot \frac{1}{\sqrt{2\pi}}e^{-x^2/2}     .
				\end{equation}
	Note that all vertices of $\mathcal I_i$ will have been covered if the walk has traveled from one endpoint of $\mathcal J_i$ to the other via a path contained within $\mathcal J_i$. 	Let $S_j$ be the distance of a random walk at time $j$ from its start point. Then, for large $n$, by the Central Limit Theorem and \eqref{eq:mills} we have
			\begin{equation}\label{eq:prob}\Pr{\frac{S_{ t^*/2}}{\sqrt{t^*/2}} > \frac{\ell}{\sqrt{t^*/2}} } \geq  \left(1-o(1)\right) \Pr{\mathcal{N}(0,1)> \sqrt{2}}\geq \left(1-\lo{1}\right)\frac{\sqrt{2}}{3} \cdot \frac{ e^{-1}}{\sqrt{2\pi }} > \frac{1}{15}.    \end{equation}  Now, by symmetry and \eqref{eq:prob}, regardless of its start point within the interval $\mathcal J_i$, with probability at least $1/15$ the walk will have hit the `left' end of $\mathcal J_i$ within at most $t^*/2$ steps.
			Once at the `left' end of $\mathcal J_i$, then again by \eqref{eq:prob} with probability at least $1/15$, it will have reached the right end via a path though $\mathcal J_i$ within at most $t^*/2$ additional steps.
			Thus, for suitably large $n$, with probability at least $(1/15)^2 >1/250$ a walk of length $t^*$ starting in $\mathcal J_i$ will cover $\mathcal I_i$.
		\end{pocd}Let $w_i$ be the number of walks which start in $\mathcal{J}_i$. By Chernoff's bound \cite[Theorem 4.5]{PandC}: 
	\begin{equation}\label{intervalprob}\Pru{\pi^k}{w_i< \frac{3}{4}\cdot c\log k  }=\Pr{\bin{k}{\frac{c\log k }{k} }< \frac{3}{4}\cdot c\log k   } \leq e^{-\frac{(1/4)^2}{2}\cdot  c\log k }= k^{-c/32}.\end{equation}  
		By Claim \ref{clm}, conditional on $w_i$, none of the walks in $\mathcal{J}_i$ cover $\mathcal{I}_i$ w.p.\ at most $(\frac{249}{250})^{w_i}$ when $n$ is large. 
		Hence, by \eqref{intervalprob} a fixed interval $\mathcal{I}_i$ is not covered w.p.\ at most $(249/250)^{(3/4)\cdot c\log k} + k^{-c/32} $. 
		As $(3/4)\ln(249/250)<-3/1000$ and $k\geq 10000$, for any $1000\leq  c \leq k/\log k $, we have  \begin{equation}\label{eq:covt*}\Pru{\pi^k}{\tau_{\mathsf{cov}}^{(k)}>t^*(c)}\leq k\left(k^{-3c/1000} + k^{-c/32}\right) \leq 2k^{1-3c/1000}\leq k^{-c/1000}, \end{equation} by the union bound. Now, as $t^*(k/\log k) = n^2$, observe that we have
		 \begin{equation}\label{eq:splitup}	    	\Exu{\pi^k}{\tau_{\mathsf{cov}}^{(k)} }  \leq t^*(1000)  +\sum_{i=1000}^{\lfloor \frac{k}{\log k }\rfloor}  [t^*(i+1)-t^*(i)]\cdot \Pru{\pi^k}{\tau_{\mathsf{cov}}^{(k)}>t^*(i)}  + \sum_{t=n^2}^\infty \Pru{\pi^k}{\tau_{\mathsf{cov}}^{(k)}>t}. 
		\end{equation} Using $\Pru{\pi^k}{\tau_{\mathsf{cov}}^{(k)}>t^*(i)} \leq k^{-i/1000} $ for $1000\leq  i \leq k/\log k $ by \eqref{eq:covt*}, gives
		\begin{equation}\label{eq:sumup1}\begin{aligned}\sum_{i=1000}^{\frac{k}{\log k }}  [t^*(i+1)-t^*(i)]\cdot \Pru{\pi^k}{\tau_{\mathsf{cov}}^{(k)}>t^*(i)} &\leq\left(\left(\frac{n}{k}\right)^2\log^2 k + 1 \right) \sum_{i=0}^{\infty } (2i+1) \cdot k^{-i/1000} \\&= \BO{\left(\frac{n}{k}\right)^2\log^2 k}.\end{aligned}\end{equation}
Recall that $\Pru{\pi^k}{\tau_{\mathsf{cov}}^{(k)}>t^*(k/\log k) } \leq k^{- \frac{1}{1000}\cdot k/\log k}=e^{-k/1000}$ by \eqref{eq:covt*}. If $\mathbf{v}$ is the worst-case start position vector for $k$ walks to cover a cycle then for any $t= i\cdot n^2$ and $i\geq 1$, 
\[ \Pru{\pi^k}{\tau_{\mathsf{cov}}^{(k)}>t }\leq\Pru{\pi^k}{\tau_{\mathsf{cov}}^{(k)}>t^*(k/\log k) }\cdot \Pru{\mathbf{v}}{\tau_{\mathsf{cov}}^{(k)}>t- n^2 } \leq e^{-k/1000}\cdot  \Pru{\mathbf{v}}{\tau_{\mathsf{cov}}^{(k)}>(i-1)n^2 },\]by the Markov property. Note that $\Pru{\mathbf{v}}{\tau_{\mathsf{cov}}^{(k)}>(i-1)n^2 }\leq (1/2)^{i-1}$ by Markov's inequality since $\tcov = n(n-1)/2$. Thus the second sum on the RHS of \eqref{eq:splitup} satisfies 	
\begin{equation}\label{eq:sumup2}\sum_{t=n^2}^\infty \Pru{\pi^k}{\tau_{\mathsf{cov}}^{(k)}>t} \leq n^2\cdot  \sum_{i=1}^\infty \Pru{\pi^k}{\tau_{\mathsf{cov}}^{(k)}>i\cdot n^2} \leq n^2\cdot  \sum_{i=1}^\infty e^{-k/1000}2^{-i+1}  = 2e^{-k/1000}n^2 .  \end{equation}
Case ($i$) then follows by inserting \eqref{eq:sumup1} and \eqref{eq:sumup2} into \eqref{eq:splitup}.

	\noindent\textbf{Case (ii)} [$k> n^{1/20}$]\textbf{:} The upper bound follows from Theorem \ref{nonregbdd} since in this case $\log k=\Omega(\log n )$. By Lemma \ref{torireturns} we have $\sum_{i=0}^{\lfloor t/2\rfloor}P_{v,v}^{2i} =\BOhm{\sqrt{t}}$ so, by Lemma \ref{covfromvisits} \eqref{returns},\begin{equation}\label{eq:pbound} \Pru{\pi}{\tau_v\leq t}=\BO{\sqrt{t}/n}.\end{equation} 
 
We shall now apply Lemma \ref{lemma:generalLowerboundpi} with $S=V$ and so if we set $t= (cn/k)^2\log^2 k$ for a suitably small constant $c>0$ then $p=\max_{v\in S}\Pru{\pi}{\tau_v\leq t} \leq  \frac{\log n}{100k},$ by \eqref{eq:pbound}. Observe that $|S|\min_{v\in S}\pi(v) = 1$ and also  $2p^2k < 1$ since $k\geq n^{1/20}$ and $n$ is large. Thus Lemma \ref{lemma:generalLowerboundpi} gives \[\Pru{\pi^k}{ \tau_{\mathsf{cov}}^{(k)} \leq  t} \leq \frac{4kp^2e^{2kp}}{|S|\min_{v\in S}\pi(v)} \leq  4k\left(\frac{\log n}{100k}\right)^2\cdot e^{2 k \cdot \frac{\log n}{100k}  } = \frac{4(\log n)^2 }{100^2 k }\cdot n^{1/50} = o\left(1\right) , \] as $k\geq n^{1/20}$, which completes Case ($ii$) and finishes the proof. \end{proof}

The final element we need for our analysis is to identify the partial mixing time of the cycle. For such, we provide bounds for the partial mixing time for all $d$-dimensional torus $\mathds{T}_d$ (which are going to be used later), and we recall the cycle is the $1$-dimensional torus $\mathds{T}_1$. 
      
 	 	\begin{lemma}\label{lem:cyclemixbdd} For any integer $d\geq 1$ there exists a constant $C_d<\infty$ such that for any $1\leq \tilde{k} \leq k/2 = \BO{n}$ the partial mixing time of the $n$-vertex torus $\mathds{T}_d$ satisfies $t_{\mathsf{mix}}^{(\tilde k,k)}  \leq C_d \cdot  n^{2/d}/\log(k/\tilde{k}).$
	 \end{lemma}
	 \begin{proof}
	 	Let $Q$ and $P$ be the transition matrices of the lazy walk on the (infinite) $d$-dimensional integer lattice $\mathbb{Z}^d$ and the (finite) $d$-dimensional $n$-vertex torus $\mathds{T}_d$, respectively. By \cite[Theorem 5.1 (15)]{HSC}, for each $d\geq 1$ there exist constants $C,C',C''>0$ such that for any $t\geq 1$ and $u,v \in \mathbb{Z}^d$ satisfying $||u -v||_{2} \leq t/C''$ we have  \begin{equation}\label{LCLT1} Q_{u,v}^t \geq  \left(\frac{C}{t}\right)^{d/2}\exp\left(-C'\cdot\frac{ ||u-v||_2^2}{t} \right) .  \end{equation}
	For any $t\geq 0$ and $u,v\in V(\mathds{T}_d) $ we have $ P_{u,v}^t \geq  Q_{u,v}^t$ and $\| u - v \|_{\infty} \leq n^{1/d}/2$. Therefore, 
	 	\begin{equation}\label{eq:bddondist}\max_{u,v \in V(\mathds{T}_d)} || u-v||_2^2 \leq d\cdot(n^{1/d}/2)^2  = dn^{2/d}/4, \end{equation} and so if we set  \begin{equation}\label{eq:defoft}
	 		t=  \left\lceil \frac{  dC'  \cdot  n^{2/d}}{4\log(k/\tilde{k})}\right\rceil 
	 		\leq \frac{dC'\cdot  n^{2/d}}{2 \log(k/\tilde{k})}.\end{equation}
	 then, for large $n$ and any $u,v \in V(\mathds{T}_d)$, we have $||u -v||_{2} \leq t/C''$ and thus, by \eqref{LCLT1} and \eqref{eq:bddondist}, 
	 	\begin{equation*}P_{u,v}^t \geq    \left(\frac{C}{t}\right)^{d/2}\cdot\exp\left(-C'\cdot  \frac{dn^{2/d}/4}{t} \right)  \geq  \frac{1}{n}  \left(\frac{C2\log(k/\tilde{k})}{dC'}\right)^{d/2}\cdot \exp\left(- \log\left(\frac{k}{\tilde{k}}\right) \right)\geq c'\cdot \frac{\tilde{k}}{n k}  \end{equation*} for some $c':= c'(d)$, as $ \log( k/\tilde{k}) \geq \log 2 $, which holds by hypothesis. It follows from the definition of separation distance that $s(t)\leq 1 - c'\cdot \frac{\tilde{k}}{n k}$ for $t$ given by \eqref{eq:defoft}. Note that if $c'\geq 1$ then the statement of the lemma follows by \eqref{def:partialMixingSep}, the definition of $t_{\mathsf{mix}}^{(\tilde k,k)}$. Otherwise, assuming $c' <1$, if we take $t' = \lceil 2/c'\rceil \cdot t$ then as separation distance is sub-multiplicative \cite[Ex.\ 6.4]{levin2009markov} we have
	 	\[s(t') \leq s(t)^{\lceil 2/c'\rceil }\leq \left(1-    \frac{c'\tilde{k}}{n k}\right)^{\lceil 2/c'\rceil }\leq \frac{1}{1 +\lceil 2/c'\rceil\cdot \frac{c'\tilde{k}}{n k} }  = 1 -  \frac{ \lceil 2/c'\rceil\cdot \frac{c'\tilde{k}}{n k} }{1 +\lceil 2/c'\rceil\cdot \frac{c'\tilde{k}}{n k} }\leq 1- \frac{\tilde{k}}{n k},  \] for suitably large $n$ where in the second to last inequality we have used the fact that $(1+x)^r \leq \frac{1}{1-rx}$ for any $r\geq 0$ and $x\in [-1, 1/r)$. \end{proof} 
	 	 Next we apply our new methodology to the cycle to recover the worst-case cover time from the stationary case.

	\begin{proof}[Proof of Theorem \ref{cycleworst}] We can assume that $k\geq C$ for a large fixed constant $C$ (in particular one satisfying $\log C \geq 1$), as otherwise the result holds since $\tcov = \Theta(n^2)$, and by \cite[Theorem 4.2]{ER09} the speed-up of the cover time on any graph is $\BO{k^2}$.
	
	For $k\geq C$, define $\tilde k = \lfloor \log(k) \rfloor\geq 1$. Recall that $t_{\mathsf{mix}}^{(\tilde k,k)}  = \BO{n^2/\log (k/\log k) }  = \BO{n^2/\log k } $ by Lemma~\ref{lem:cyclemixbdd}, and $	t_{\mathsf{cov}}^{\tilde k}(\pi) =\BO{\frac{n^2}{(\log k)^2} (\log \log k)^2}$ by Theorem~\ref{cyclethm}. Thus, Theorem~\ref{thm:characterupper} yields $\tcov^{(k)}=  \BO{ \frac{n^2}{\log k}}.$

	 We will use Theorem~\ref{thm:cyclelower} to prove the lower bound, and for such, we need lower bounds for $t_{\mathsf{large-hit}}^{(\tilde{k},k)}$, and  	$t_{\mathsf{large-cov}}^{(\tilde{k})}$ for an appropriate choice of $\tilde k$.	We will indeed prove that if we choose $\tilde k$ as a constant, then  $t_{\mathsf{large-hit}}^{(\tilde{k},k)} = \Omega( n^2/\log k )$ and $t_{\mathsf{large-cov}}^{(\tilde{k})} = \Omega(n^2)$, leading to the desired result as	$t_{\mathsf{cov}}^{(k)} = \Omega(\min(t_{\mathsf{large-hit}}^{(\tilde{k},k)},t_{\mathsf{large-cov}}^{(\tilde{k})}))$.
		
		For $t_{\mathsf{large-hit}}^{(\tilde{k},k)}$ we note that  for any vertex $u$ of the cycle we can find a set of vertices of size at least $n/2$ with minimum distance at least $ \lfloor n/4\rfloor $ from $ u$. Thus by Lemma~\ref{lem:largehit} for any $\tilde{k} \leq k/2$ we have \begin{equation*} 		t_{\mathsf{large-hit}}^{(\tilde{k},k)}  = \Omega\left( \frac{n^2}{\log( k/\tilde{k}) } \right).
		\end{equation*}
		To find a lower bound for $t_{\mathsf{large-cov}}^{(\tilde{k})}$, by Lemma \ref{torireturns} there exists some constant $c>0$ such that for any $1\leq t\leq n^2$ and $u\in V$ the return probabilities in a cycle satisfy  $\sum_{s=0}^t P_{u,u}^{s} \geq  c\sqrt{t}$. Thus if we take $\tilde{k} \geq \min(100,c/8)$ and let $ t= \lfloor (cn/(256\tilde{k}) )^2 \rfloor$ then $c\sqrt{t}\geq 32\cdot t \cdot \pi(u) \cdot (8\tilde{k})$ is satisfied. Thus by Lemma \ref{lem:obliviouslowerbound} (with $k=8\tilde{k}$) we have 
		$t_{\mathsf{large-cov}}^{(\tilde{k})} \geq t/5  = \Omega(n^2)$.
	\end{proof}

 	\subsection{Complete Binary Tree and 2-Dimensional Torus}
 	
 	In this section we derive the multiple cover times for the Complete Binary Tree and 2-Dimensional Torus. We treat them together as their proofs have several common elements.  Some standard estimates such as return probabilities and other elementary results on trees can be found in Section \ref{sec:treeret}.

	\begin{theorem}\label{thm:tree} Let $G$ be the two-dimensional torus $\mathds{T}_2$ or the complete binary tree $\mathcal{T}_n$. Then there exists a constant $c>0$ (independent of $n$) such that for any $1\leq k\leq cn\log n$, 
		\[\tcov^{(k)}(\pi)  = \BT{\frac{n\log n }{k}\log\left( \frac{n\log n }{k}\right) }.\]  
	\end{theorem}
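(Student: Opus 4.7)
The upper bound is a direct application of Lemma~\ref{harmonicreturns}; its two hypotheses are easily verified for $G \in \{\mathds{T}_2, \mathcal{T}_n\}$. Both graphs have $\tmix = \Theta(n)$, and the return-probability estimate $\sum_{i=0}^{t} P_{v,v}^i = \BO{\log t + t/n}$ (for $t \le n(\log n)^2$) follows from the local central limit theorem in the torus case, and from the bound $\sum_{i=0}^{t} P_{v,v}^i \le t\pi(v) + \sum_{i=0}^\infty(P_{v,v}^i - \pi(v)) = \BO{t/n + \pi(v)\cdot \thit} = \BO{t/n + \log n}$ in the tree case, using $\thit = \BO{n\log n}$ and $\pi(v) = \BO{1/n}$.

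For the lower bound, Lemma~\ref{lemma:generalLowerboundpi} is the main tool, and the key analytical step is to establish a tight two-sided estimate on $p_v(t) := \Pru{\pi}{\tau_v \le t}$ on an appropriate subset $S$. For $\mathds{T}_2$, take $S = V$: vertex-transitivity together with the symmetry argument from the proof of Theorem~\ref{cyclethm} case~(ii) yields $p_v(t) = \BT{(t/n)/\log t}$ uniformly in $v$. For $\mathcal{T}_n$, take $S$ to be the set of leaves: the automorphism group of $\mathcal{T}_n$ acts transitively on leaves, so $p_\ell(t)$ does not depend on $\ell \in S$, and one computes $p_\ell(t) = \BT{(t/n)/\log n}$ by combining $\Exu{\pi}{N_\ell(t)} = t\pi(\ell) = \BT{t/n}$ with the identity $p_\ell(t) \cdot \Exu{\pi}{N_\ell(t) \mid \tau_\ell \le t} = \Exu{\pi}{N_\ell(t)}$, noting that after the first visit the walk from $\ell$ accrues $\BT{\log n}$ further returns in the relevant range of $t$ (as forced by the tree-case return estimate above). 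Setting $t = c(n\log n/k)\log(n\log n/k)$ for a small enough constant $c$, the hypotheses of Lemma~\ref{lemma:generalLowerboundpi} hold with some $\alpha < 1$, and its conclusion $\Pru{\pi^k}{\tau_{\cov}^{(k)} < t} = \BO{(\log n)^2 n^{\alpha(1+\eps)}/k}$ yields the claimed lower bound whenever $k = n^{\Omega(1)}$.

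For the small-$k$ regime ($k$ subpolynomial in $n$), the target simplifies to $\BT{n\log^2 n/k}$ since $\log(n\log n/k) = \BT{\log n}$. We match this using the classical single-walk bound $\tcov(\pi) = \Omega(n\log^2 n)$, which holds for both graphs, combined with a reduction $\tcov^{(k)}(\pi) \geq c\, \tcov(\pi)/k$; the latter follows from the geometric-reset coupling of Lemma~\ref{resetcouple} applied with $T$ on the scale of $n\log^2 n/k$ and $x = \Theta(k/T)$, noting that the cover time of $\widehat G(x)$ remains $\Omega(n\log^2 n)$ since $\widehat G(x)$ is a mild perturbation of $G$ for such small $x$. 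The additive error $\exp(-\Theta(k))$ of Lemma~\ref{resetcouple} is absorbed by restricting to $k$ at least a sufficiently large constant; the constant-$k$ case is trivial since $\tcov^{(k)}(\pi) = \BT{\tcov(\pi)}$ there.

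The main obstacle is the lower bound in the small-$k$ regime: the second-moment machinery of Lemma~\ref{lemma:generalLowerboundpi} breaks down for $k$ smaller than any fixed polynomial in $n$, and Lemma~\ref{resetcouple} has an $\exp(-\Theta(k))$ additive error that degrades for small $k$, so bridging the gap relies crucially on the known single-walk $\Omega(n\log^2 n)$ cover time. A secondary technical issue is that $\mathcal{T}_n$ is not vertex-transitive, which we circumvent by restricting the second-moment analysis to the set of leaves, on which the weaker leaf-transitivity suffices to give uniform hitting probabilities.
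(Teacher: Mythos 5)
Your overall architecture matches the paper's: Lemma~\ref{harmonicreturns} for the upper bound, Lemma~\ref{lemma:generalLowerboundpi} for polynomially large $k$, and the geometric-reset coupling of Lemma~\ref{resetcouple} for small $k$. However, there are two genuine gaps.

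First, your verification of the hypothesis of Lemma~\ref{harmonicreturns} for the binary tree is insufficient. The bound $\sum_{i=0}^{t}P_{v,v}^i \leq t\pi(v)+\pi(v)\Exu{\pi}{\tau_v} = \BO{t/n+\log n}$ is genuinely weaker than the required $\BO{t/n+\log t}$: the two differ precisely when $t$ is small compared to every power of $n$, which is exactly the regime $k = n^{1-\lo{1}}$ up to $k\leq (n\log n)/3$ where $t^* = \BT{\frac{n\log n}{k}\log(\frac{n\log n}{k})}$ is small. Feeding $\BO{t/n+\log n}$ into the proof of Lemma~\ref{harmonicreturns} only yields $\tcov^{(k)}(\pi)=\BO{(n/k)\log^2 n}$, which exceeds the claimed bound by an unbounded factor $\log n/\log(\frac{n\log n}{k})$ for large $k$. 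The paper instead uses the pointwise estimate $P_{v,v}^i=\BO{1/i}$ for $i=\BO{n}$, obtained from the two-sided leaf estimate of Lemma~\ref{treereturnslowerbound} together with Lemma~\ref{treeReturnsUpper} (which transfers it to internal vertices); you need this sharper input.

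Second, and more seriously, in the subpolynomial-$k$ regime the assertion that ``the cover time of $\widehat G(x)$ remains $\Omega(n\log^2 n)$ since $\widehat G(x)$ is a mild perturbation of $G$'' is exactly the crux, and it is not a soft fact. The reduction $\tcov^{(k)}(\pi)\geq c\,\tcov(\pi)/k$ that you want to extract is \emph{false} in general: for the barbell graph $\tcov^{(k)}(\pi)=\Theta(2^{-k}n^2/k+n\log n/k)$ while $\tcov/k=\Theta(n^2/k)$, i.e.\ a geometric reset with tiny rate collapses the cover time by a polynomial factor. So one must prove, using properties specific to $\mathds{T}_2$ and $\mathcal{T}_n$, that the reset does not help. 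The paper does this by showing (Claim in the proof of Theorem~\ref{thm:tree}) that pairwise hitting times between the vertices of a polynomial-sized separated set remain $\Omega(n\log n)$ in $\widehat G(c/t^*)$, and then invoking the Kahn--Kim--Lov\'asz--Vu lower bound to get $\tcov(\widehat G,\widehat\pi)=\Omega(n\log^2 n)$; moreover that argument itself requires $k\geq(\log n)^{5/3}$ (so that the reset time is $\lo{n\sqrt{\log n}}$), and the paper handles the remaining range $k\leq(\log n)^{5/3}$ by a third, different argument based on the inequality $\tcov\leq k\,\tcov^{(k)}(\pi)+\BO{k\tmix\log k}+\BO{k\sqrt{\tcov^{(k)}(\pi)\tmix}}$ from \cite{ER09}. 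Your proposal covers neither of these steps. (A minor additional slip: in Lemma~\ref{resetcouple} the conclusion concerns $\tau_{\mathsf{cov}}^{(k)}>T/(10Ck)$, so to conclude anything at scale $n\log^2 n/k$ you must take $T=\Theta(n\log^2 n)$, not $T=\Theta(n\log^2 n/k)$.)
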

	
	For worst-case cover time of the binary tree the best previously known bounds differ by multiplicative $\mathsf{poly}(\log n)$ factors \cite{POTC}.  Using our new $\min$-$\max$ and $\max$-$\min$ characterisations, and some additional calculations, we can now determine $\tcov^{(k)}$ up to constants for any $1 \leq k \leq n$. 
	\begin{theorem}\label{treeworst}
		For the complete binary tree $\mathcal{T}_{n}$:
		\begin{equation*}
		\tcov^{(k)}=
 \begin{cases}
		\BT{\dfrac{n}{k}\log^ 2 n}  & \mbox{if $1 \leq k \leq \log^2 n$}, \vspace{.2cm}\\ 
		 	\BT{\dfrac{n}{\sqrt{k}}\log n}   & \mbox{if $\log^2 n \leq k \leq n$.}
		\end{cases}
		\end{equation*}
	\end{theorem}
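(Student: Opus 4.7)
The strategy is to combine the min-max characterisations from Section~\ref{Sec:partialmixing} with the stationary cover time of Theorem~\ref{thm:tree}, splitting the analysis at the threshold $k=\log^2 n$. The edge case $k=1$ is immediate from the classical $\tcov(\mathcal{T}_n)=\Theta(n\log^2 n)$, so we assume $k\ge 2$ throughout.

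\emph{Upper bound.} For $2\le k\le \log^2 n$ apply Theorem~\ref{thm:characterupper} with $\tilde k=\lceil k/2\rceil$: Lemma~\ref{lem:mixupper} gives $t_{\mathsf{mix}}^{(\tilde k,k)}=O(\tmix)=O(n)$, while Theorem~\ref{thm:tree} gives $\tcov^{(\tilde k)}(\pi)=\Theta(n\log^2 n/k)$, which dominates in this range. For $\log^2 n\le k\le n$ apply Theorem~\ref{thm:characterupper} with $\tilde k=\lceil C\sqrt k\,\log n\rceil$ for a large constant $C$, so Theorem~\ref{thm:tree} yields $\tcov^{(\tilde k)}(\pi)=O(n\log n/\sqrt k)$. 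The key technical step is a tree-specific bound,
\[
t_{\mathsf{mix}}^{(\tilde k,k)} \;=\; O\!\left(\frac{\tilde k}{k}\,\tmix\right) \;=\; O\!\left(\frac{n\log n}{\sqrt k}\right),
\]
which improves Lemma~\ref{lem:mixupper} (that only yields the trivial $O(\tmix)=O(n)$ when $\tilde k/k$ is small). One proves it by analysing the level process of a walk started at a leaf $v$: the level $L_t\in\{0,\dots,\log n\}$ is a birth--death chain biased toward the leaves, with $\Pru{v}{\tau_r\le t}=\Omega(t/n)$ for $t\le n$ (where $r$ denotes the root); combined with fast local mixing inside a half-subtree once $r$ has been reached, this yields the separation estimate $s_v(t)\le 1-\Omega(t/n)$, which is equivalent to the claimed partial mixing bound.

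\emph{Lower bound.} For $2\le k\le \log^2 n$, monotonicity gives $\tcov^{(k)}\ge \tcov^{(k)}(\pi)=\Omega(n\log^2 n/k)$ from Theorem~\ref{thm:tree}. For $\log^2 n\le k\le n$ we apply Theorem~\ref{thm:cyclelower} with $\tilde k=\lceil c\sqrt k\,\log n\rceil$ for a small constant $c$, which requires bounds on $t_{\mathsf{large-hit}}^{(\tilde k,k)}$ and $t_{\mathsf{large-cov}}^{(\tilde k)}$. For the first, any set $S$ with $\pi(S)\ge 1/4$ is separated from a leaf $u$ by the root $r$, so $\Pru{u}{\tau_S\le t}\le \Pru{u}{\tau_r\le t}=O(t/n)$ for $t\le n$ by the same level-process argument, giving $t_{\mathsf{large-hit}}^{(\tilde k,k)}=\Omega(\tilde k\,n/k)=\Omega(n\log n/\sqrt k)$. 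For the second, take $S$ to be an opposite subtree with $\mu$ the Dirac mass at the root of $S$ and run a first-moment argument on the uncovered far leaves --- Lemma~\ref{covfromvisits} together with the sub-harmonic return bound of Lemma~\ref{harmonicreturns} gives $\Pru{\mu}{\tau_w\le T}=O(T/(n\log n))$ for a far leaf $w$, so at $T\le c' n\log^2 n/\tilde k$ the expected number of uncovered far leaves is $(n/4)\exp(-O(\log n))\gg 1$, whence $t_{\mathsf{large-cov}}^{(\tilde k)}=\Omega(n\log^2 n/\tilde k)=\Omega(n\log n/\sqrt k)$. Both quantities balance at our $\tilde k$, so Theorem~\ref{thm:cyclelower} yields $\tcov^{(k)}=\Omega(n\log n/\sqrt k)$.

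The principal obstacle is the sharp tree-specific partial mixing bound $t_{\mathsf{mix}}^{(\tilde k,k)}=O((\tilde k/k)\,\tmix)$ required in the upper bound for $k\ge\log^2 n$: Lemma~\ref{lem:mixupper} is insensitive to $\tilde k/k$ when that ratio is small, and obtaining the matching estimate forces a direct analysis of separation distance on the tree via the level process rather than any generic tool from Section~\ref{Sec:partialmixing}; the remaining ingredients reduce to standard hitting-time estimates and first-moment coverage arguments.
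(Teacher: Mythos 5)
Your overall architecture is the paper's: split at $k=\log^2 n$, use Theorem~\ref{thm:characterupper} with $\tilde k\asymp\sqrt{k}\log n$ for large $k$, and identify the tree-specific partial mixing bound $t_{\mathsf{mix}}^{(\tilde k,k)}=O((\tilde k/k)\,n+\log n)$ as the key ingredient. Your upper bound, including the proof sketch of that bound (hit the root with probability $\Omega(t/n)$, then exploit fast mixing from the root), coincides with the paper's Lemma~\ref{lem:treemixing} and its use in the proof, and your lower bound for $k\le\log^2 n$ is identical to the paper's. Your large-hit estimate is also fine once you fix the phrasing: not every $S$ with $\pi(S)\ge 1/4$ is separated from a leaf by the root, but since $t_{\mathsf{large-hit}}$ is a minimum over pairs $(u,S)$ it suffices to exhibit one bad pair (a leaf and the opposite subtree), which is what you implicitly do.

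The genuine gap is in the lower bound on $t_{\mathsf{large-cov}}^{(\tilde k)}$ for $k\ge\log^2 n$. First, a large first moment does not lower-bound the cover time: from $\Ex{\#\text{uncovered far leaves}}\gg 1$ you cannot conclude $\Pr{\exists\text{ uncovered leaf}}=\Omega(1)$ without a second-moment (Paley--Zygmund) or comparable concentration argument, and the only such tools in the paper are Lemma~\ref{lemma:generalLowerboundpi}, which requires \emph{stationary} starts, and Lemma~\ref{lem:obliviouslowerbound}, which handles arbitrary $\mu$ but whose hypothesis $\sum_{s\le t}P_{u,u}^s\ge 32t\pi(u)k$ forces $t=O(n\log n/\tilde k)$ on the tree and therefore loses exactly the $\log n$ factor you need. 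The paper circumvents this by a different device: it shows the walks arriving at the root of the untouched subtree can be stopped at a random height so that they become \emph{stationary} walks, and then invokes the stationary lower bound of Theorem~\ref{thm:tree} (whose own proof contains the second-moment/reset-graph machinery). Second, $t_{\mathsf{large-cov}}^{(\tilde k)}$ is a minimum over \emph{all} $S$ with $\pi(S)\ge 1/4$ and all boundary distributions $\mu$, so bounding it for a single subtree and a single Dirac mass does not bound the quantity that Theorem~\ref{thm:cyclelower} consumes; you would either have to treat general $(S,\mu)$ (every such $S$ contains $\Omega(n)$ vertices within $O(1)$ of the leaves, so this is plausible but not done) or rerun the proof of Theorem~\ref{thm:cyclelower} for your specific $S$. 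Repairing the first point is the substantive work; the paper's stationarisation trick is the cleanest known route.
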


	The worst-case cover time of the $2$d-torus was shown in \cite{IKPS17}:
	\begin{theorem}[\cite{IKPS17}]\label{2dtreeworst}
		For the two-dimensional torus $\mathds{T}_2$:
		\begin{equation*}
		\tcov^{(k)}=
		\begin{cases}
		\displaystyle{\BT{ \frac{n}{k} \log^ 2 n}} \vspace{.2cm} & \mbox{if $1 \leq k \leq \log^2 n$}, \\
		\displaystyle{\BT{  \frac{n }{\log (k/\log^2 n)} }}   & \mbox{if $\log^2 n \leq k \leq n$.}
		\end{cases}
		\end{equation*}
	\end{theorem}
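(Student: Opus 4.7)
The strategy is to recover Theorem~\ref{2dtreeworst} within the $\min$-$\max$ framework of Section~\ref{Sec:partialmixing}, in direct parallel with our treatment of the cycle and the binary tree in Theorem~\ref{cycleworst} and Theorem~\ref{treeworst}.

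\textbf{Upper bound, $1\leq k\leq \log^2 n$.} I would apply Theorem~\ref{thm:characterupper} with $\tilde k=\lceil k/2\rceil$. Lemma~\ref{lem:mixupper}(i) gives $\tmix^{(\tilde k,k)}=\BO{\tmix}=\BO{n}$, and Theorem~\ref{thm:tree} gives $\tcov^{(\tilde k)}(\pi)=\BT{n\log^2 n/k}$, since throughout this range $\log(n\log n/\tilde k)=\BT{\log n}$ and $n\log^2 n/k\geq n$. The maximum of the two is $\BO{n\log^2 n/k}$, as required.

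\textbf{Upper bound, $k\geq \log^2 n$.} The first task is to establish an analogue of Lemma~\ref{lem:cyclemixbdd} for the 2-dimensional torus, namely $\tmix^{(\tilde k,k)}=\BO{n/\log(k/\tilde k)}$ whenever $\tilde k\leq k/e$. I would prove this via a 2-dimensional local CLT on $\mathbb{Z}^2$: for the lazy walk $Q$ on $\mathbb{Z}^2$ one has $Q^t_{0,x}=\Theta(1/t)\exp(-c|x|^2/t)$ uniformly for $|x|\leq \sqrt t$ (together with a negligible error term). Lifting to the torus gives $P^t_{0,x}\geq Q^t_{0,x'}$ for the shortest lift $x'$, and for $t=n/L$ and the worst-case distance $|x|=\lfloor \sqrt n/2\rfloor$ this yields $P^t_{0,x}\gtrsim (L/n)\exp(-cL)$. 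Choosing $L=(1/c)\log(k/\tilde k)$ makes this at least $\tilde k/(kn)$, so $s(t)\leq 1-\tilde k/k$. With this estimate I would then apply Theorem~\ref{thm:characterupper} at the balancing point $\tilde k=\lceil \log^2 n\cdot \log(k/\log^2 n)\rceil$: Theorem~\ref{thm:tree} returns $\tcov^{(\tilde k)}(\pi)=\BT{n/\log(k/\log^2 n)}$, and since $\log(k/\tilde k)=\log(k/\log^2 n)-\BO{\log\log(k/\log^2 n)}$ is of the same order, the partial mixing bound also gives $\BO{n/\log(k/\log^2 n)}$. The tiny boundary slice $\log^2 n\leq k\leq e\log^2 n$ where $\log(k/\log^2 n)=\BO{1}$ is absorbed by the coarse bound $\tmix^{(\tilde k,k)}=\BO{\tmix}=\BO{n}$.

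\textbf{Lower bound.} For $k\leq \log^2 n$ I would use the trivial inequality $\tcov^{(k)}\geq \tcov^{(k)}(\pi)$ together with Theorem~\ref{thm:tree} to get $\BOhm{n\log^2 n/k}$. For $k\geq \log^2 n$ I would prefer a direct "far vertex" argument, which bypasses the $\tilde k\geq n^{\delta}$ hypothesis in the regular version of Theorem~\ref{thm:characterlower}. Pick any $u\in V$ and a vertex $v$ at distance $D=\lfloor \sqrt n/2\rfloor$ from $u$ on $\mathds{T}_2$. By Lemma~\ref{lem:carne}, a single lazy walk of length $t$ from $u$ reaches $v$ with probability at most $4\exp(-D^2/(8t))$. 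Setting $t=\BT{n/\log k}$ with a small enough constant forces this to be at most $k^{-2}$, so a union bound over the $k$ walks leaves $v$ uncovered with probability at least $1/2$, yielding $\tcov^{(k)}=\BOhm{n/\log k}$. For $k\geq 2\log^2 n$ one has $\log k=\BT{\log(k/\log^2 n)}$, which matches the target; the boundary slice $\log^2 n\leq k\leq 2\log^2 n$ is handled by the Case~1 bound $\BOhm{n}$.

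\textbf{Main obstacle.} The principal technical step is the 2d partial mixing bound $\tmix^{(\tilde k,k)}=\BO{n/\log(k/\tilde k)}$. The 1d analogue in Lemma~\ref{lem:cyclemixbdd} relied on a clean local CLT on $\mathbb{Z}$; in 2d the heat-kernel estimate is slightly more delicate, and one must track the exponent constant so that $\exp(-cL)\geq \tilde k/k$ continues to hold at $L\asymp \log(k/\tilde k)$ and over all distances up to $\sqrt n/2$. Once this estimate is in hand, the remainder of the proof just stitches together tools already developed earlier in the paper.
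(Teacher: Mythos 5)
Your upper bound is, in substance, the paper's own proof. The partial mixing estimate $\tmix^{(\tilde k,k)}=\BO{n/\log(k/\tilde k)}$ that you derive from the local CLT on $\mathbb{Z}^2$ is exactly Lemma~\ref{pmixtori} with $d=2$, and the two choices $\tilde k=\lceil k/2\rceil$ (for $k\leq\log^2 n$) and $\tilde k\approx\log^2 n\cdot\log(k/\log^2 n)$ (for $k\geq\log^2 n$), fed into Theorem~\ref{thm:characterupper} together with Theorem~\ref{thm:tree}, are the same as in the paper; the arithmetic you sketch for balancing the two terms is correct. The lower bound for $k\leq\log^2 n$ via $\tcov^{(k)}\geq\tcov^{(k)}(\pi)$ also matches the paper.

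The gap is in your lower bound for $k\geq\log^2 n$ --- and it is worth knowing that the paper deliberately does \emph{not} prove this part: it cites \cite{IKPS17} and states that its own techniques (including Theorem~\ref{thm:cyclelower}) only give a bound that is loose by a logarithmic factor for $\mathds{T}_2$. Your far-vertex argument via Lemma~\ref{lem:carne} correctly yields $\tcov^{(k)}=\BOhm{n/\log k}$, but the bridging claim ``$\log k=\BT{\log(k/\log^2 n)}$ for $k\geq 2\log^2 n$'' is false whenever $k/\log^2 n$ is sub-polynomial in $\log n$. For example, at $k=\log^2 n\cdot\log\log n$ one has $\log k=\Theta(\log\log n)$ while $\log(k/\log^2 n)=\log\log\log n$: the target is $\BOhm{n/\log\log\log n}$, your argument gives only $\BOhm{n/\log\log n}$, and the stationary bound $\tcov^{(k)}(\pi)=\BT{n/\log\log n}$ does not close the gap either; the discrepancy is unbounded throughout the window $\log^2 n\cdot\omega(1)\leq k\leq (\log n)^{2+\lo{1}}$. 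The obstruction is not cosmetic: if you instead place the target set at distance $D$ with $D^2/t\asymp\log(k/\log^2 n)$, the expected number of walks reaching it is $k\cdot(\log^2 n/k)^{\Theta(1)}\gg 1$, so the union bound fails and one must additionally lower-bound the time for those (roughly $\log^2 n$) arriving walks to \emph{cover} the far region --- exactly the two-stage structure of Theorem~\ref{thm:cyclelower}, which the authors report is still not sharp here. You should either restrict your lower-bound claim to $k\geq(\log n)^{2+\eps}$, where $\log k=\BT{\log(k/\log^2 n)}$ does hold and your argument is valid, and cite \cite{IKPS17} for the intermediate regime, or reproduce the argument of \cite{IKPS17} in full.
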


	Using the tools introduced in Section \ref{Sec:partialmixing} we can recover the upper bounds in Theorem \ref{2dtreeworst} fairly efficiently. However, for the lower bounds in Theorem \ref{2dtreeworst} we did not find a way to apply our (or any other) general techniques to give a tight bound easily for all $k$. The methods presented in this work give a lower bound tight up to a $\log n$ factor however we do not give the details as recovering loose bounds on known quantities is not the goal of this work.

	\subsubsection{Stationary Cover time of the Binary Tree \& \texorpdfstring{$2$d}{2d}-Torus} 
	
	In this section we prove Theorem \ref{thm:tree}. The upper bounds are established by applying Lemma \ref{harmonicreturns} to both graphs.  A matching lower bound is proved by showing that both graphs have a set which is particularly hard to cover.

		\begin{proof}[Proof of Theorem \ref{thm:tree}]

		For the upper bound, in either graph we have, $ \sum_{i=0}^t P_{v,v}^i= \BO{1+\log t}$ for any $v\in V$ and $t\leq \trel$ by Lemmas \ref{treereturnslowerbound}, \ref{treeReturnsUpper} \& \ref{torireturns}, and that  $\tmix =\BO{n}$ for both graphs by \cite[Eq.\ 5.59]{aldousfill} and \cite[Eq.\ 5.6]{levin2009markov}. Thus the upper bound follows directly from Lemma \ref{harmonicreturns}.  
		
		We split the lower bound into three cases depending on the value of $k$. First, set \begin{equation}\label{t*} t^*=\frac{n\log n}{k} \cdot \log\left(\frac{n\log n}{k}\right),\end{equation}  and observe that we aim to prove $\tcov^{(k)}(\pi) =\Omega(t^*)$. We now show this in each case.

	 \noindent\textbf{Case (i)} [$1 \leq k \leq (\log n)^{5/3}$]\textbf{:} First recall the following bound by \cite[Theorem 4.8]{ER09}: \begin{equation}\label{eq:reinbdd}	\tcov \leq k t_{\mathsf{cov}}^{(k)}(\pi) + \BO{k\tmix \log k} +  \BO{k\sqrt{ t_{\mathsf{cov}}^{(k)}(\pi) \tmix } }. \end{equation}For both $\mathds{T}_2$ and $\mathcal{T}_n$ we have $\tmix =\BT{n}$, $\thit =\BT{n\log n}$ and $\tcov =\BT{n\log^2 n}$ by \cite[Section 11.3.2]{levin2009markov} and \cite[Theorem 6.27]{aldousfill}, respectively. Recall also that $\tcov^{(k)}(\pi)\leq \BO{\frac{\thit\log n}{k}} =\BO{\frac{n\log^2 n}{k}}$ by Theorem \ref{thm:statmatthews}, and so plugging these bounds into \eqref{eq:reinbdd} gives
		\begin{equation*}
		\tcov  \leq k t_{\mathsf{cov}}^{(k)}(\pi) + \BO{kn \log\log n} +  \BO{\sqrt{k}n \log n }.
		\end{equation*}Thus for either graph, if $1 \leq k \leq (\log n)^{5/3}$, we have \[t_{\mathsf{cov}}^{(k)}(\pi)\geq \tcov/k -\BO{n\log\log n} - \BO{\frac{n}{\sqrt{k}}\log n} =  \Omega\left( \frac{n\log^2 n}{k}\right) = \Omega(t^*).\]

		\noindent\textbf{Case (ii)} [$(\log n )^{5/3}\leq k\leq n^{1/2}$]\textbf{:} Let $t^*$ be as \eqref{t*} and observe that in this case \begin{equation*}\frac{n(\log n)^2}{2k}\leq \frac{n\log n}{k} \cdot \log\left(n^{1/2} \log n\right)\leq t^*\leq \frac{n\log n}{k} \cdot \log\left(\frac{n}{(\log n)^{2/3}}\right)\leq \frac{n(\log n)^2}{k}. \end{equation*} Let $\widehat {G}:=\widehat {G}(x)$ be the  geometric reset graph from Definition \ref{georesetgraph} where $x=k/(n(\log n)^2)$. We use $\Pru{u,H}{\cdot}$ to denote the law of the (non-lazy) random walk on $H$ started from $u\in V(H)$. For ease of reading, we prove the next claim on Page \pageref{clmproof} after concluding the current proof. 
	
		\begin{claim}\label{CL1}
		Let $G=\mathcal{T}_n$ or $\mathds{T}_2$, $(\log n )^{5/3}\leq k\leq n^{1/2} $,  and $\widehat {G}:=\widehat {G}(x)$, where $x=\kappa\cdot k/(n(\log n)^2)$ for a fixed constant $\kappa>0$. Then, there exists a subset $S\subseteq V$ such that $\log |S| \geq (\log n)/100$ and a constant $\kappa'>0$ (independent of $\kappa$) such that $\Exu{u,\widehat {G}}{\tau_v} \geq  \kappa '\cdot n \cdot \log\left(\frac{n\log n}{k}\right)$ for all $u,v \in S$.
	\end{claim}
	
	In light of Claim \ref{CL1} it follows from \cite[Theorem 1.4]{KKLV} that, for any $x=\kappa\cdot k/(n(\log n)^2)$,  \begin{equation}\label{eq:lowercovhatg}\Exu{\widehat{\pi},\widehat {G}(x)}{\tau_{\cov}}\geq \frac{\log|S|}{2}\cdot \min_{u,v \in S} \Exu{u,\hat{G}}{\tau_v}=\frac{1}{2}\cdot  \frac{\log n}{100}\cdot  \kappa ' n\log\left(\frac{n\log n}{k}\right) \geq  \frac{\kappa ' n (\log n)^2}{400},\end{equation} where we note that although \cite[Theorem 1.4]{KKLV} is stated only for the simple random walk, it holds for all reversible Markov chains, see \cite[Page 4]{CooperSiro}. Thus, by Lemma \ref{lowerbddreversible} and \eqref{eq:lowercovhatg}, there exists some constant $c>0$, independent of $\kappa$, such that for all suitably large $n$, \begin{equation}\label{eq:coupleghat}\Pru{\widehat{\pi},\widehat{G}(x)}{\tau_{\mathsf{cov}} >cn(\log n)^2} >1/3.\end{equation} We now aim to apply Lemma \ref{resetcouple} where to begin with we choose the values $T=cn\log^2 n$ and $C=100$. Observe that $T\geq 5Ck$ for large enough $n$ since $k\leq n^{1/2}$. Finally, since $c$ is independent of $\kappa$, we can set $\kappa = 100/c$ such that $x=\kappa\cdot k/(n(\log n)^2)= Ck/T$. Then, by  Lemma \ref{resetcouple} and \eqref{eq:coupleghat}, 
	\[\Pru{\pi^k,G}{\tau_{\mathsf{cov}}^{(k)} >\frac{cn(\log n)^2}{1000k}}> \Pru{\widehat {\pi},\widehat{G}(x)}{\tau_{\mathsf{cov}} >cn(\log n)^2} -\exp\left(-\frac{100k}{50}\right)\geq \frac{1}{3} - e^{-2}\geq \frac{1}{10}.\]  
	
			\noindent\textbf{Case (iii)} [$ n^{1/2}\leq k \leq cn\log n$]\textbf{:} Let $\delta\in (0,1)$ be a small constant to be chosen later, and let $c = (\delta/2)^2$, and $t^*$ be as given by \eqref{t*}. Let $u$ be a leaf of the tree $\mathcal{T}_n$, or any vertex of $\mathds{T}_2$, then,  $\sum_{i=0}^{\delta t^*} P_{u,u}^i = \BOhm{\log(\delta t^*)}$ by Lemmas \ref{treereturnslowerbound} and \ref{torireturns} respectively. Then, by an application of Lemma \ref{covfromvisits} \eqref{returns} there exists a non-negative constant $C$, such that	\begin{align}
		\Pru{\pi}{ \tau_u \leq\delta\cdot t^* } &\leq \frac{2\delta \cdot (t^*+1) \cdot \pi(u) }{\sum_{s=0}^{\delta t^*/2} P_{u,u}^s} \notag 
		\\ &\leq \frac{C\cdot \delta\log n \cdot \log( (n/k)\cdot \log n)  }{k \cdot \log\left((\delta/2) \cdot (n/k)\log n \cdot \log\left((n/k)\cdot \log n\right)\right) }\nonumber\\
		&=  \frac{C\cdot \delta\log n \cdot \log( (n/k)\cdot \log n))  }{k \cdot \left(\log(\delta/2)+\log((n/k)\cdot  \log n) + \log( \log((n/k)\cdot \log n)) \right)}\nonumber\\
		&\leq \frac{2C\delta \log n\cdot \log((n/k)\cdot \log n)}{k\cdot \log((n/k)\cdot \log n)}\notag 
		\\ &= \frac{2C\delta \log n}{k},\label{eq:treepiprob}
		\end{align}
		where in the last inequality we use that $k\leq c(n\log n) = (\delta/2)^2 (n\log n)$, and thus we have that $\log ((n/k)\cdot \log n)\geq 2\log(\delta/2)$.
		 
		We will now apply Lemma \ref{lemma:generalLowerboundpi}, where for the binary tree $\mathcal{T}_n$ we choose $S$ as the set of all leaves, and for $\mathds{T}_2$ choose $S = V(\mathds{T}_2)$. Thus in either case we have $|S|\min_{v\in S}\pi(v) \geq 1/3$, $p\leq \frac{2C\delta \log n}{k}$ by \eqref{eq:treepiprob} and $2p^2k<1$ since $k \geq n^{1/2}$. 
		Thus by Lemma \ref{lemma:generalLowerboundpi} 
	\[\Pru{\pi^k}{ \tau_{\mathsf{cov}}^{(k)} \leq  t} \leq \frac{4kp^2e^{2kp}}{|S|\min_{v\in S}\pi(v)}  \leq  12k\left(\frac{2C\delta \log n}{k}\right)^2\cdot e^{2 k \cdot \frac{2C\delta \log n}{k} } = \frac{48C^2\delta^2(\log n)^2 }{  k }\cdot n^{4C\delta } = o(1) , \] 
where the last equality follows by taking $\delta = \frac{1}{12C} $ since $k \geq n^{1/2}$.\end{proof}
	It remains to prove Claim \ref{CL1}.

	\begin{pocd}{\ref{CL1}}\label{clmproof}
		
		For $\mathcal{T}_n$ we let $S$ be the set of leaves at pairwise distance at least $(\log n)/2$ and in $\mathds{T}_2$ we take an (almost) evenly spaced sub-lattice where the distance between points next to each other is $\Theta(n^{1/4})$. It is easy to see that one can find such an $S$ of polynomial size, in particular we can take $\log |S|\geq (\log n)/100$.  
	
		We first consider a walk $\hat{W}_t$ in $\hat{G}(x)$ from $\hat{\pi}$ (rather than $u$). Let $N_v(T)$ be the number of visits to $v\in S$ in the interval $[0,T)$, where, for some $\delta >0$, \begin{equation}\label{eq:T} T= \left\lfloor \delta \cdot  n \cdot \log\left(\frac{n\log n}{k}\right) \right\rfloor.\end{equation} Let the random variable $Y$ be the first time that the walk in $\widehat {G}(x)$ started from a vertex in $V(G)$ leaves $V(G)$ to visit $z$. 
		Observe that  $Y \sim \geo{x}$ regardless of the start vertex, and thus $\Pr{Y\geq 1/x} = (1-x)^{1/x}  \geq e^{-1}(1-x^2)\geq e^{-2}$ by \eqref{eq:cheatsheet}. Conditional on $\{Y\geq 1/x\}\cap\{\hat{W}_0\neq z\}$ the walk has the same law as a walk $P$ on $G$ until time $1/x$, as edges not leading to $z$ all have the same weight in $\hat{G}$. 
		Thus, if $Q$ is the transition matrix of the walk on $\hat{G}(x)$ then for any vertex $v\neq z$, we have 
		\[  Q_{v,v}^i \geq e^{-2} \cdot P_{v,v}^i.\] Now, as $\min(T/2-1, 1/x)=1/x$, by Lemma \ref{covfromvisits} \eqref{returns},  \[\Exu{\hat{\pi},\hat{G}}{N_v(T) \mid N_v(T)\geq 1} \geq \frac{1}{2}\cdot \sum_{i=0}^{\min(T/2-1, 1/x)} Q_{v,v}^i \geq \frac{e^{-2}}{2}\cdot   \sum_{i=0}^{1/x} P_{v,v}^i\geq   C\cdot \log(1/x) , \] for some $C>0$ fixed by Lemmas \ref{treereturnslowerbound} \& \ref{torireturns}. 
		Now, for any fixed $\kappa>0$, by Lemma \ref{covfromvisits} \eqref{returns}, \[\Pru{\widehat{\pi},\widehat {G}}{N_v(T)\geq 1} = \frac{\Exu{\hat{\pi},\hat{G}}{N_v(T)}}{\Exu{\hat{\pi},\hat{G}}{N_v(T) \mid N_v(T)\geq 1}} \leq \frac{T/n}{C\cdot \log(1/x) } \leq \frac{\delta \log\left(\frac{n\log n}{k}\right) }{C\log\left(\frac{n(\log n)^2}{\kappa \cdot k}\right)}  \leq \frac{\delta}{C},\] since $\Exu{\hat{\pi},\hat{G}}{N_v(T)}= \hat{\pi}(v)T\leq T/n$. 
		Thus, by taking $\delta= C/2>0$,  we have \begin{equation}\label{eq:hatpihit}\Exu{\widehat \pi,\widehat {G}}{\tau_{v }}\geq \left(1-\Pru{\widehat{\pi},\widehat {G}}{N_v(T)\geq 1} \right)\cdot T \geq \frac{1}{2}\cdot T .\end{equation}

		Recall $Y\sim \geo{x}$ and observe that $1/x\leq n(\log n)^{1/3}/\kappa = o(n\sqrt{n}) $ since $k\geq (\log n)^{5/3}$ and $\kappa>0$ is  fixed. Thus  $\Pru{u,\widehat {G}}{Y> n\sqrt{\log n}  } = (1-x)^{n\sqrt{\log n}} = \lo{1} $. Hence, for any $u,v\in S$,
		\begin{align*} \Pru{u,\widehat {G}}{ \tau_v <Y} &= \sum_{i=0}^{\infty} \Pru{u,\widehat {G}}{\tau_v<i, Y=i } =\lo{1} + \sum_{i=0}^{n\sqrt{\log n }} \Pru{u,\widehat {G}}{\tau_v<i\mid Y=i }\Pru{u,\widehat {G}}{ Y=i }.\end{align*}
		Now, if we condition on the walk in $\widehat {G}$ not taking an edge to $z$ up to time $i$ then, since the weights on edges not going to $z$ are all the same, this has the same law as a  trajectory in $G$ of length $i$. It follows that $\Pru{u,\widehat {G}}{\tau_v<i\mid Y=i }  = \Pru{u,G}{\tau_v<i }$, and so we have 
		\begin{equation}\label{eq:uandhat} \Pru{u,\widehat {G}}{ \tau_v <Y} =\lo{1} + \sum_{i=0}^{n\sqrt{\log n }} \Pru{u,G}{\tau_v<i}\Pru{u,\widehat {G}}{ Y=i }\leq \Pru{u,G}{\tau_v<n\sqrt{\log n }}+\lo{1} .\end{equation}		 
		
		For either graph $G=\mathcal{T}_n,\mathds{T}_2$ we have $\thit(G)=\BO{n\log n} $ and $\Exu{u,G}{\tau_v}= \BOhm{n\log  n}$, where the latter bound is by symmetry as the effective resistance between any two vertices in $S$ is $\BOhm{\log n}$. Thus we can apply Lemma \ref{lowconcfromupconc} to give $\Pru{u,G}{ \tau_v>n\sqrt{\log n} } > 1/3$. Therefore $\Pru{u,\widehat {G}}{ \tau_v>Y}>1/3 -\lo{1} \geq 1/4$ by \eqref{eq:uandhat}. We now extend the distribution $\pi$ on $G$ to $\hat{G}$ by setting $\pi(z)=0$, and observe that $\hat{\pi}(y)\leq \pi(y)$ for all $y\neq z$. A walk on $\hat{G}$ at $z$ moves to a vertex of $ V(\hat{G})$ distributed according to $\pi$ in the next step, thus 
		\[\Exu{\hat{\pi},\widehat {G}}{ \tau_v} = \sum_{y\in V(\hat{G})} \hat{\pi}(y)\Exu{y,\widehat {G}}{ \tau_v}  \leq \sum_{y\neq z} \pi(y)\Exu{y,\widehat {G}}{ \tau_v}  + \hat{\pi}(z) \left(\Exu{\pi,\widehat {G}}{ \tau_v}+1\right)\leq 3 \Exu{\pi,\widehat {G}}{ \tau_v}.\] For any $i\geq 0$,  $\Exu{u,\widehat {G}}{ \tau_v \,\big|\, \tau_v>Y, Y=i} = i+2 +\Exu{\pi,\widehat {G}}{ \tau_v}\geq \Exu{\hat{\pi},\widehat {G}}{ \tau_v}/3$. To see the first equality note that at time $i$ the walk has not yet hit $v$ and takes a step to $z$, thus two steps later the walk is at a vertex sampled from $\pi$. Hence, when $n$ is large, for any $u,v\in S$ by \eqref{eq:hatpihit}
		\[\Exu{u,\widehat {G}}{ \tau_v} \geq \Exu{u,\widehat {G}}{ \tau_v \,\big|\, \tau_v>Y}\cdot \Pru{u,\widehat {G}}{ \tau_v>Y} \geq  \frac{\Exu{\widehat{\pi},\widehat {G}}{ \tau_v}}{3}\cdot \frac{1}{4} \geq \frac{T}{24} . \] The claim follows from  \eqref{eq:T} since $\delta=C/2>0$ and $C>0$ is fixed and independent of $\kappa$.    
	\end{pocd}

	\subsubsection{Worst-Case Binary Tree} 

		The following result, needed for Theorem \ref{treeworst}, gives a bound on the partial mixing time.  
\begin{lemma}\label{lem:treemixing}
		For the complete binary tree $\mathcal{T}_n$ and any $1 \leq \tilde{k} \leq k/100$,  \[t_{\mathsf{mix}}^{(\tilde{k},k)} =\BO{\frac{\tilde{k}}{k} \cdot n + \log n}.\]
	\end{lemma}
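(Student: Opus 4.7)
The plan is to bound the separation distance and invoke its defining relation to the partial mixing time. Concretely, I aim to show that for $t := C\bigl((\tilde k/k)n + \log n\bigr)$ with a sufficiently large constant $C$, one has $s(t) \leq 1 - \tilde k/k$. When $\tilde k/k \geq 1/8$, this follows immediately from Lemma~\ref{lem:mixupper}\eqref{itm:maxa} applied with $\tilde k \leq k/2$: it gives $t_{\mathsf{mix}}^{(\tilde{k},k)} \leq 2\tmix \lceil \log 8 \rceil = \BO{\tmix} = \BO{n}$, using the standard bound $\tmix(\mathcal{T}_n) = \Theta(n)$, and this matches $(\tilde k/k)n + \log n$ in this regime. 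So I focus on the substantive case $\tilde k/k < 1/8$.

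In that regime, split $t = t_1 + t_2$ with $t_1 := C_1(\tilde k/k)n$ and $t_2 := C_2 \log n$. Let $r$ denote the root and $h := \lceil \log_2 n \rceil$. By the strong Markov property applied at the first hit $\tau_r$ of $r$, together with the monotonicity of $s(\cdot)$ in $t$, for any vertices $u,v$ I obtain
\begin{equation*}
P^t_{u,v} \;\geq\; \sum_{s=0}^{t_1}\Pru{u}{\tau_r = s}\,P^{t-s}_{r,v} \;\geq\; \Pru{u}{\tau_r \leq t_1}\cdot \min_{t' \geq t_2} P^{t'}_{r,v}.
\end{equation*}
So it suffices to prove (A) $\Pru{u}{\tau_r \leq t_1} \geq c_1 \tilde k/k$ uniformly in $u$, and (B) $s_r(t_2) \leq 1 - c_2$, which by monotonicity gives $P^{t'}_{r,v} \geq c_2 \pi(v)$ for every $v$ and every $t'\geq t_2$. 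Combining (A) and (B) yields $P^t_{u,v} \geq c_1 c_2 (\tilde k/k)\pi(v)$, hence $s_u(t) \leq 1 - c_1 c_2\, \tilde k/k$; absorbing $c_1,c_2$ into the overall constant $C$ gives the desired bound.

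For (A), the key observation is that the depth process $D_s := \dist(X_s, r)$ is itself a Markov chain: a lazy birth-death chain on $\{0,1,\dots,h\}$ biased towards $h$, with down-rate $1/6$, up-rate $1/3$, and laziness $1/2$ at each internal state. For such strongly biased birth-death chains on a finite interval, the hitting time $\tau_0$ from any starting point has essentially exponential distribution with rate $\Theta(1/n)$, so in particular $\Pr[\tau_0 \leq s] = \Omega(s/n)$ uniformly for $s \leq n$, which yields $\Pru{u}{\tau_r \leq t_1} \geq c_1 \tilde k/k$ by the choice of $t_1$. For (B), the left-right symmetry of $\mathcal{T}_n$ makes the conditional distribution of $X_s$ given $D_s = d$ uniform over the $2^d$ depth-$d$ vertices, so $P^{t_2}_{r,v}/\pi(v)$ reduces to the ratio $\Pr[D_{t_2} = \dist(v,r)]/\pi_D(\dist(v,r))$ for the depth chain (with $\pi_D$ its stationary distribution). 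Since the depth chain is a biased birth-death chain on an interval of length $h = O(\log n)$ with constant drift, standard estimates (see, e.g., Chapter 20 of Levin-Peres) give that its mixing time, and hence its separation-distance mixing time, is $\Theta(h)$; taking $C_2$ large enough gives $s_r(t_2) \leq 1 - c_2$.

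The main obstacle is step (A): rigorously turning the exponential-tail heuristic for $\tau_r$ into a uniform bound $\Pru{u}{\tau_r \leq s} \geq cs/n$ valid at \emph{small} time scales $s = O((\tilde k/k)n)$ and for every starting vertex $u$ (including both those near the root and those deep among the leaves). This is handled by a quantitative small-time estimate for the biased depth chain, for instance via commute-time or effective-resistance identities or a direct analysis of the generating function of $\tau_0$, showing that the CDF of $\tau_0$ from any start is essentially linear in $s$ on the scale $s \ll n$, despite the mean being of order $n$. Once (A) is in place, the rest of the argument is the displayed inequality together with the separation-distance estimate from (B).
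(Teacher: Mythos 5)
Your two-phase decomposition (hit the root in $t_1 = \Theta((\tilde k/k)n)$ steps, then mix from the root in separation distance in $t_2 = \Theta(\log n)$ steps, combine via $P^t_{u,v}\geq \Pru{u}{\tau_r\leq t_1}\min_{t'\geq t_2}P^{t'}_{r,v}$) is precisely the paper's strategy, and your two sub-statements (A) and (B) correspond exactly to the two claims in the paper's proof. Part (B) is also handled similarly: the paper couples the depth chain started from the root with a stationary copy and converts the resulting TV bound into the pointwise lower bound $P^{t_0}_{r,v}\geq \pi(v)(1-n^{-10})$ using the left--right symmetry you mention.

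The genuine gap is (A), which you acknowledge. You gesture at a direct small-time estimate for the biased depth chain (resistance, commute time, or a generating-function computation), but you do not actually establish $\Pru{u}{\tau_r\leq s}=\Omega(s/n)$ uniformly in $u$ for $s=O(n)$ --- and this is the substantive step. The paper fills it differently: it first proves $\Pru{\pi}{\tau_r\leq t}=\Theta(t/n)$ via the second-moment/return-probability identity of Lemma~\ref{covfromvisits}\eqref{returns} together with $\sum_{s\leq n}P^s_{r,r}=\Theta(1)$ (Lemma~\ref{rootreturns}), and then transfers this to a leaf start by observing that a stationary walk hits a leaf within $O(\log n)$ steps while missing the root with probability $1-O(1/n)$, so $\Pru{u}{\tau_r\leq t_1}\geq \Pru{\pi}{\tau_r\leq t_1}-O(1/n)$, which suffices once the constant in $t_1$ is taken large enough. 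Since your case split already forces $t_1=\Omega(1)$, this route would slot directly into your argument and would be worth adopting, as it avoids delicate small-time distributional estimates for the birth--death chain.

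One more technical point: ``absorbing $c_1,c_2$ into $C$'' does not by itself yield the claim, because the definition of $t_{\mathsf{mix}}^{(\tilde k,k)}$ requires $s(t)\leq 1-\tilde k/k$ exactly, and enlarging $t$ does not turn $s(t)\leq 1-c_1c_2\tilde k/k$ into $s(t)\leq 1-\tilde k/k$ if $c_1c_2<1$. You need to arrange $c_1c_2\geq 1$ directly (the paper does this by getting $c_1\geq 2$ from the choice of $C$ in $t_1$, together with $c_2=1-n^{-10}$ in (B)), or alternatively rerun the argument with $\tilde k$ replaced by $\lceil \tilde k/(c_1c_2)\rceil$ and fall back to your $\tilde k/k\geq 1/8$ case when this exceeds $k/8$. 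This is a fixable bookkeeping issue, but as written the final step does not close.
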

	
	\begin{proof} Let $r$ be the root and $s_r(t)$ be the separation distance from $r$. We begin with two claims, which will be verified once we finish the current proof.
		\begin{claim}\label{clm:treesep}There exists some $C<\infty$ such that $s_r(t)\leq n^{-10}$ for any $t\geq t_0=C\log n$.
		\end{claim}
				\begin{claim}\label{claim:worsthyper2}
			 For any $u\in V$, $\frac{50k\log n}{n} \leq \tilde{k}\leq \frac{k}{100}$ and $t_1=\frac{50\tilde{k} n}{k}$ we have $\Pru{u}{\tau_r \leq t_1} \geq 2\tilde k/k$. 
		\end{claim}
	
		From these two claims, we conclude that, for any  pair of vertices $v,u$, and $\frac{50k\log n}{n} \leq \tilde{k}\leq \frac{k}{100}$,
		\begin{align*}
		P_{u,v}^{t_1+t_0} &\geq \sum_{s=0}^{t_1} \Pru{u}{\tau_r =s}\Pru{r}{X_{t_1+t_0-s}=v} \geq \sum_{s=0}^{t_1} \Pru{v}{\tau_r =s} \pi(v)(1-n^{-10}) \nonumber\geq \frac{2\tilde k} {k}\cdot \frac{\pi(v)}{2} = \frac{\tilde k}{k}\pi(v),
		\end{align*}
		thus for $\tilde{k}$ as above the result follows from definition \eqref{def:partialMixingSepnew} of $\tmix(\tilde{k},k)$. For $\tilde{k}\leq \frac{50k\log n}{n}$ the result follows since $\tmix(\tilde{k},k)$ is increasing in $\tilde{k}$.  \end{proof}
	
	\begin{pocd}{\ref{clm:treesep}} 
	Let $Y_t$ be the distance from $X_t$ to the root $r$, where $X_t$ is a lazy random walk starting from $r$. Then $Y_t$ is a biased random walk (towards root (left) w.p. $1/6$, towards leaves (right) w.p. $2/6$ and stay put w.p. $1/2$) on the path ${0,\ldots, h-1}$ with reflecting barriers, and $Y_0 = 0$. Consider $Y'_t$ to be an independent random walk with the same transition matrix as $Y_t$ but starting from distribution $\mu$ that denotes the stationary distribution of the biased walk on the path. To couple the walks we assume that $Y'$ starts at a vertex $i>0$ (or else it has already met $Y$), then both walks move independently unless $Y'$ is next to $Y$ (thus to the right of it). In this case we sample $Y'$ first then if $Y'$ moves left then $Y$ stays put (and they meet) otherwise, $Y$ moves either left, right or stays with probabilities $1/5$, $2/5$ and $2/5$ respectively.  
			
			Now, notice that $Y$ and $Y'$ must have met by the time that $Y$ reaches $h-1$. We can upper bound $\Pr{Y_t\leq h-1}$ by $\Pr{\sum_{i=1}^t Z_i\leq h-1}$, where $Z_i$ are i.i.d. random variable that take value $1$ w.p $2/6$, value $-1$ w.p $1/6$, and value $0$ w.p $1/2$. Since $h = \log_2 n$ and by choosing $t_0 = C\log_2n$ with $C$ large enough, by a simple application of Chernoff's bound, we have that $\Pr{\sum_{i=1}^t Z_i\leq h-1} \leq n^{-12}$. We conclude that the probability that $Y_t$ and $Y'_t$ do not meet in $\BO{\log n}$ steps is $n^{-12}$. By the standard coupling characterisation of the total variation distance (\cite[Proposition 4.7]{levin2009markov}), we have $||\Pru{0}{Y_t=\cdot}-\mu(\cdot)||_{\mathrm{TV}}\leq n^{-12}$, and thus for all $i$, $\Pru{0}{Y_t=i}-\mu(i) \geq -n^{-12}$. By symmetry, for any vertex at height $i$ in the binary tree we have $\Pru{r}{X_t = v} -\mu(i)/
			2^i \geq - n^{-12}$, since $\mu(i)/2^i = \pi(v)$. We conclude that $\Pru{r}{X_t=v}\geq \pi(v)\cdot\left(1-{n^{-10}}\right)$ for any $v\in V(\mathcal{T}_n)$ as claimed.
		\end{pocd}

		\begin{pocd}{\ref{claim:worsthyper2}}We first bound the probability started from $\pi$. By Lemma \ref{rootreturns} we have $\sum_{s=0}^{t_1} P_{r,r}^s \leq 2+8t_1/n$, where $r$ is the root. Therefore, by Lemma~\ref{covfromvisits} \eqref{returns}, we have \[\Pru{\pi}{\tau_r\leq t_1} \geq \frac{t_1}{2n + 8t_1}.\]
			
			Note that the worst-case for our claim is when $v\in \mathcal{L}$ is a leaf, so we assume this. Denote by $\tau_\mathcal{L}$ the first time the random walk hits a leaf, then $\Pru{\pi}{\tau_r<\tau_{\mathcal{L}}} \leq \frac{8\log n}{n} $ by Lemma \ref{rootreturns}, thus
			\begin{align}\label{eq:treeroothitbound}
			\Pru{\pi}{\tau_r\leq t_1} &\leq  \Pru{\pi}{\tau_r\leq t_1, \tau_r>\tau_{\mathcal{L}}}+ \Pru{\pi}{\tau_r<\tau_{\mathcal{L}}}\nonumber\\
			&\leq  \Pru{\pi}{\tau_r\leq t_1, \tau_r>\tau_{\mathcal{L}}, \tau_{\mathcal{L}}\leq c\log n}+\Pru{\pi}{\tau_{\mathcal{L}}>c\log n}+ \frac{8\log n}{n}  \nonumber\\
			&\leq \sum_{s=0}^{\lfloor c\log n \rfloor} \Pru{v}{\tau_r\leq t_1-s}\Pru{\pi}{\tau_{\mathcal{L}} =s}+\frac{9\log n}{n} 
			\end{align}
			the last inequality holds as $\Pru{\pi}{\tau_{\mathcal{L}}>c\log n}\leq (\log n )/n$, for $c$ large by using the Chernoff's bound and the fact the height of a walk on the tree is a biased walk on a path with reflective barriers (as was used in the proof of Claim \ref{clm:treesep}). Thus, by \eqref{eq:treeroothitbound} we have
			\begin{align*}
			\Pru{\pi}{\tau_r\leq t_1} &\leq \Pru{u}{\tau_r\leq t_1} \Pru{\pi}{\tau_{\mathcal{L}} \leq c\log n}+\frac{9\log n}{n}\leq  \Pru{u}{\tau_r\leq t_1}+\frac{9\log n}{n}.
			\end{align*}
			
			Since $(50\log n)/n\leq \tilde{k}/k\leq 1/100  $ we conclude that \[\Pru{u}{\tau_r\leq t_1} \geq		\frac{t_1}{2n + 8t_1} - \frac{9\log n}{n} \geq \frac{1}{3}\cdot  \frac{t_1}{2n + 8t_1}  = \frac{50\tilde{k}/k}{6\left(1 + 4\cdot 50\tilde{k}/k \right) }\geq \frac{2\tilde{k}}{k},\]  as desired.
		\end{pocd}

	We now prove a lemma which may be regarded as a large-hitting time -- since the random walk starts at a leaf in the left-subtree, and the goal is to hit any vertex in the right-subtree. 
	
 	\begin{lemma}\label{lem:treelargehitting}
 		Let  $\mathcal{T}_{n}$ be a complete binary tree rooted at $r$, then for any leaf $\ell$ and any $t \geq 1$
 		\[
 		\Pru{\ell}{ \tau_{r} \leq t } \leq 6t/n.
 		\]
 	\end{lemma}
	
 	\begin{proof} 
 		We first would like to prove that $P_{\ell,r}^t \leq 6/n$ for any $t \geq 1$. However, this follows immediately by reversibility since $P_{r,\ell}^t \leq 3/n$, as a random walk from $r$ will have a uniform probability over all $2^{h}\geq n/3$ leaves by symmetry. Hence, for any $t\leq n$, we have $	\Pru{\ell}{ \tau_{r} \leq t } \leq  \sum_{s=0}^t P_{\ell,r}^s \leq  6t/n$ by Markov's inequality.	\end{proof}
 
	Finally we are ready to prove the worst-case cover time for the binary tree.
	\begin{proof}[Proof of Theorem \ref{treeworst}] The results holds immediately for $k=1$ by known results of cover times of binary trees. We proceed by a case analysis. 
	
		 \noindent\textbf{Case (i)} [$2 \leq k \leq (\log n)^2$]\textbf{:} Choose $\tilde k = \lfloor k/2 \rfloor$, then by Lemma \ref{lem:mixupper} the time for $\tilde k$ walks to mix is bounded by a constant times the single walk mixing time, which is $\Theta(n)$ for the case of the binary tree. Also, we have  $\tcov^{(k)}= \BO{ t_{\mathsf{cov}}^{(\tilde k)}(\pi)}= \BO{ (n/\tilde k) (\log n)^2}$ by Theorem~\ref{thm:tree}. Therefore by Theorem \ref{thm:characterupper} the upper bound follows. For the lower bound we have $\tcov^{(k)}\geq t_{\mathsf{cov}}^{(k)}(\pi)$, thus the first part of the formula has been shown.
		
		 \noindent\textbf{Case (ii)} [$(\log n)^2\leq k \leq n$]\textbf{:} Again, as shown in Theorem~\ref{thm:tree}, for any $1 \leq \tilde{k} \leq k$
		\[
		\tcov^{(\tilde{k})}(\pi) = \BO{\frac{n\log n }{\tilde{k}}\log\left( \frac{n\log n }{\tilde{k}}\right) }.
		\]
		Also by Lemma~\ref{lem:treemixing}, for any $1\leq \tilde{k}\leq k/100$, we have 
		\[
		t_{\mathsf{mix}}^{(\tilde{k},k)}  = \mathcal{O}\!\left( \frac{\tilde{k}}{k} \cdot n + \log n \right).
		\]
		To balance the last two upper bounds, we choose $\tilde{k} = \lfloor(\sqrt{k} \cdot \log n)/100\rfloor$ so that  
		\[
		\max\left\{ \tcov^{(\tilde{k})}(\pi), t_{\mathsf{mix}}^{(\tilde{k},k)}  \right\} = \BO{ \frac{n\log n}{\sqrt{k} }},
		\]
		and the upper bound follows from Theorem~\ref{thm:characterupper}.
		
		To prove a matching lower bound, let $\tilde{k}=\lfloor \sqrt{k} \cdot \log n \rfloor $. Assume that all $k$ random walks start from an arbitrary but fixed leaf from the left subtree of $r$. Let $t =   (n/6) \cdot \tilde{k}/k$. The number of walks that reach the root by time $t$ has binomial distribution $\bin{k}{p}$ with parameters $k$ and $p$ where $p= 		\Pru{l}{ \tau_{r} \leq t } \leq 6t/n = \tilde{k}/k$ by Lemma~\ref{lem:treelargehitting}. Thus the expected number of walks to reach to root within $t$ steps is upper bounded by the integer $\tilde k$ and so is the median. Therefore with probability at least $1/2$, at most $\tilde{k}$ out of the $k$ walks reach the root vertex $r$ by time $t$. Once we have $\tilde{k}$ walks at the root $r$, we consider the problem of covering the right sub-tree of $r$ with root $r_1$ (which has $2^{d-1}-1 = \Theta(n)$ vertices), assuming that $\tilde{k}$ walks start at the root $r_1$ (at step $0$). Since we are looking for a lower bound we can assume that no walks leave the sub-tree and so the problem reduces to  compute a lower bound of the cover time of a binary tree with $\tilde k$ walks from the root of the tree. 
		
		Since the $\tilde k$ walks start from the root, the time it takes to cover the whole set of vertices of the binary tree is lower-bounded by the time it takes to cover the set of leaves (starting from the root). We claim that the previous quantity is, again, lower-bounded by starting the walks from the stationary distribution. To see this, for each walk, independently sample a height $H$ with probability proportional to the sum of the degrees in such a height, then stop the walk when it reaches height $H$ for first time. A simple analysis shows that the distribution of the vertex where the walk stops is the stationary distribution. Note that before stopping the walk cannot have reached a leaf. Hence, we can ignore the time it takes to stop the walks, start all the random walks from the stationary distribution, and, for a lower bound, only consider the expected time to cover the leaves. We can apply the same argument as in the proof of Theorem~\ref{thm:tree} (an application of Lemma \ref{lemma:generalLowerboundpi} with $S$ taken to be a well spaced subset of the leaves) to lower bound the time taken to cover the leaves. Therefore, we conclude that there exists a constant $c>0$ such that the expected time it takes to cover the leaves with $\tilde k$ walks starting from $\pi$ is bounded from below by\[
		c\cdot  \frac{n\log n }{\tilde{k}}\log\left( \frac{n\log n }{\tilde{k}}\right). 
		\]
	Recall that with probability at least $1/2$, at most $\tilde{k}$ walks reach the root by time $(n/6)\cdot \tilde{k}/k $, thus,  \[
		\tcov^{(k)} \geq \frac{1}{2}\cdot \min \left( \frac{n}{6} \cdot \frac{\tilde{k}}{k} ,\; c\cdot\frac{n\log n }{\tilde{k}}\log\left( \frac{n\log n }{\tilde{k}}\right) \right).
		\]
		Since we set $\tilde{k} = \lfloor \sqrt{k} \cdot \log n \rfloor$ earlier, we obtain $\tcov^{(k)} = \Omega\left(\frac{n \log n }{\sqrt{k}}\right)$.
	\end{proof}

	\subsubsection{Worst-Case \texorpdfstring{$2$d}{2d}-Torus} 
	
	Recall that Lemma \ref{lem:cyclemixbdd} in Section \ref{sec:cycle} bounds the partial mixing time of the $d$-dim torus. We now use this and our stationary cover time bounds to prove the upper bound in Theorem \ref{2dtreeworst}.

	\begin{proof}[Proof of the Upper Bound in Theorem \ref{2dtreeworst}] Observe that the case $k=1$ is immediate by known results on the cover time of the torus. Now, by Lemma~\ref{lem:cyclemixbdd}, for any $1\leq \tilde{k}\leq k/2$ we have
		\begin{equation}\label{tmixpagain}
		t_{\mathsf{mix}}^{(\tilde{k},k)} = \BO{ \frac{n}{ \log( k / \tilde{k} )} }.
		\end{equation}
		Further, by Theorem \ref{thm:tree}, we have
		\[
		\tcov^{(\tilde{k})}(\pi) = \BO{ \frac{n \log n}{\tilde{k}} \cdot \log\left( \frac{n \log n}{\tilde{k}} \right) }.
		\]
				 \noindent\textbf{Case (i)} [$2 \leq k \leq 2(\log n)^2$]\textbf{:} Choose $\tilde{k}= \lfloor k/2 \rfloor$, and the bound on $t_{\mathsf{cov}}^{(k)}(\pi)$ dominates, and we obtain $\tcov^{(k)}= \BO{ (n/k) \log^2 n }$,
		by Theorem \ref{thm:characterupper}. The lower bound follows by $\tcov^{(k)}\geq t_{\mathsf{cov}}^{(k)}(\pi)$.

		 \noindent\textbf{Case (ii)} [$  2(\log n)^2\leq n$, upper bound only]\textbf{:} 	We now prove (only) the upper bound in the remaining case $2(\log n)^2 \leq k \leq n$. We choose $
		\tilde{k} = \lfloor (\log n)^2 \cdot \log (\log k/(\log n)^2)\rfloor $ and obtain
		\[
		\tcov^{(\tilde{k})}(\pi) = \BO{ \frac{n}{\log n \cdot \log (k / (\log n)^2)} \cdot \log n } = \BO{  \frac{n}{ \log (k/(\log n)^2)} },
		\]
		which is of the same order as the upper bound of $t_{\mathsf{mix}}^{(\tilde{k},k)}$ in \eqref{tmixpagain}, since $\log( (a/b) \log (a/b) )= \Theta( \log (a/b))$, thus the result follows from Theorem \ref{thm:characterupper}. \end{proof}

	\subsection{Expanders and Preferential Attachment}

	Formally, an expander is a (sequence of) graphs $(G_n)_{n \geq 1}$ such that for all $n \geq 1$: $(i)$ $G_n$ is connected, $(ii)$ $G_n$ has $n$ vertices, and $(iii)$ $\trel(G_n) = 1/(1-\lambda_2)\leq C$ for some constant $C>0$ independent of $n$, where $\lambda_2$ is the second largest eigenvalue of the transition matrix. Equivalently, due to Cheeger's inequality, a graph is an expander if $\inf_n \Phi(G_n) > 0$.
	
	All previous works~\cite{Multi,ER09,ElsSau} on multiple random walks required expanders to be regular (or regular up to constants). Here, we allow a broader class of expanders --- our methods can treat any graph with bounded relaxation time provided it satisfies $\pi_{\min}=\Omega(1/n)$. This class includes some graphs with heavy-tailed degree distributions as long as they have a constant average degree. Such non-regular expanders are quite common, as they include graph models for the internet such as preferential attachment graphs \cite{MPS06}.
	
	\begin{theorem}\label{pro:expander}
		For any expander with $\pi_{\min}=\Omega(1/n)$, for any $1 \leq k \leq n$,
		\[
		\tcov^{(k)}=\BT{\tcov^{k}(\pi)}= \BT{\frac{n}{k} \log n }.
		\]
	\end{theorem}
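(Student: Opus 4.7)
The proof naturally splits into first establishing the stationary cover time bound and then transferring this to worst-case start vertices via the partial-mixing min-max characterisation. My plan is to show $\tcov^{(k)}(\pi) = \Theta((n/k)\log n)$ by appealing to Lemma \ref{thm:constreturn}. For any expander we have $\trel = \BO{1}$ by definition, so the return-probability hypothesis $\sum_{i=0}^t P_{vv}^i = \BO{1 + t\pi(v)}$ for $t \leq \trel$ is trivial since the left-hand side is at most $\trel + 1 = \BO{1}$. Combined with the standard bound $\tmix = \BO{\trel \log(1/\pi_{\min})} = \BO{\log n}$ (which in particular is $\BO{n}$) and the hypothesis $\pi_{\min} = \Omega(1/n)$, all three conditions of Lemma \ref{thm:constreturn} are satisfied and the stationary cover time follows.

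For the worst-case cover time, the lower bound is immediate, since taking a max over start configurations in $V^k$ dominates the average over $\pi^k$, giving $\tcov^{(k)} \geq \tcov^{(k)}(\pi) = \Omega((n/k)\log n)$ from the previous step. For the upper bound, I would apply Theorem \ref{thm:characterupper} with the symmetric choice $\tilde{k} = \lceil k/2 \rceil$, which is valid whenever $k \geq 2$ (the case $k = 1$ reduces to the classical expander cover time $\tcov = \Theta(n\log n)$). Lemma \ref{lem:mixupper}\eqref{itm:maxa} then gives
\[
t_{\mathsf{mix}}^{(\tilde{k},k)} \leq 2\tmix \left\lceil \log\!\left( \tfrac{4k}{k-\tilde{k}} \right) \right\rceil = \BO{\tmix} = \BO{\log n},
\]
while the stationary cover time bound from the first step yields $\tcov^{(\tilde{k})}(\pi) = \BO{(n/k)\log n}$. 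Since $k \leq n$ forces $(n/k)\log n \geq \log n$, the max of the two is $\BO{(n/k)\log n}$, and Theorem \ref{thm:characterupper} delivers the desired upper bound.

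There is no real obstacle here: once the stationary cover time is pinned down via Lemma \ref{thm:constreturn}, the whole theorem is a direct application of the partial-mixing framework with the choice $\tilde{k} = k/2$. This choice is natural because expanders mix so quickly that losing half the walks to "unmixed" status costs only an additive $\BO{\log n}$ that is absorbed by the stationary-cover term. The only mildly delicate point is that we allow non-regular expanders (e.g.\ preferential attachment), where the assumption $\pi_{\min} = \Omega(1/n)$, rather than regularity, is exactly what is needed to keep $\tmix = \BO{\log n}$ and to satisfy the hypotheses of Lemma \ref{thm:constreturn}.
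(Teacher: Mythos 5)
Your proof is correct and follows essentially the same route as the paper: stationary cover time first, then Theorem \ref{thm:characterupper} with $\tilde{k}=k/2$ and Lemma \ref{lem:mixupper} to absorb the $\BO{\log n}$ partial mixing cost, with the lower bound coming from the stationary case. The only (harmless) difference is that you obtain the stationary upper bound from Lemma \ref{thm:constreturn} (whose return-sum hypothesis is indeed trivial when $\trel=\BO{1}$), whereas the paper invokes Corollary \ref{relbdd} with $\trel=\BO{1}$ and $\pi_{\min}=\Omega(1/n)$; both are valid.
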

	\begin{proof}
	Note that the case $k = 1$ follows immediately from known results about cover times in expanders. For $k\geq 2$, consider $\tilde{k} = \lfloor k/2 \rfloor$, and recall that  $t_{\mathsf{mix}}^{(\tilde{k},k)} = \BO{t_{\mathsf{mix}}} = \BO{\log n}$ by Lemma \ref{lem:mixupper}. By hypothesis $m/\dmin =\mathcal{O}(n)$ and since the graph is an expander $\trel=\BO{1}$, hence by Corollary~\ref{relbdd} we have $t_{\mathsf{cov}}^{(\tilde k)}(\pi)=\BO{(n/\tilde k) \log n}$.  Thus by Theorem \ref{thm:characterupper} we have $\tcov^{(k)}= \BO{(n/k) \log n }$, proving the upper bound.
	
	The lower bound follows by Theorem~\ref{generallower} since $\tcov^{(k)}\geq t_{\mathsf{cov}}^{(k)}(\pi) =\Omega((n/k) \log n)$.
	\end{proof}

	\subsection{The Hypercube}

	The hypercube is not covered by the results in the previous section, since it is not an expander. However, we will show that the same bound on stationary cover times holds nevertheless:
	
	\begin{theorem}\label{thm:hypercubecover}
		Let $G$ be the hypercube with $n$ vertices,  then for any $k\geq 1$, \[t_{\mathsf{cov}}^{(k)}(\pi) = \Theta\left(\frac{n}{k}\log n\right).\]
	\end{theorem}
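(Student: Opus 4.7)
The plan is to verify the three hypotheses of Lemma~\ref{thm:constreturn} for the hypercube $Q_d$ (with $n=2^d$ vertices) and invoke it directly. Two of these are immediate: the lazy walk on $Q_d$ is symmetric, so $\pi$ is uniform and $\pi_{\mathsf{min}} = 1/n$; and the well-known estimate $t_{\mix}(Q_d) = \Theta(d \log d) = \Theta(\log n \log \log n)$ gives $t_{\mix} = \BO{n}$. Only the return probability condition $\sum_{i=0}^t P_{v,v}^i = \BO{1 + t\pi(v)}$ for $t \leq t_{\rel}$ requires work.

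To establish this, I would diagonalise the lazy transition matrix in the Walsh basis: the eigenvalues are $\lambda_k = 1 - k/d$ with multiplicity $\binom{d}{k}$ for $k=0,1,\ldots,d$, so $t_{\rel} = d = \log_2 n$, and by vertex transitivity
\[
P_{v,v}^t = \frac{1}{n}\sum_{k=0}^d \binom{d}{k}\left(1-\frac{k}{d}\right)^t.
\]
Summing over $t \in \{0,\ldots,T\}$ via the geometric series and bounding $1-(1-k/d)^{T+1} \leq 1$ yields
\[
\sum_{t=0}^T P_{v,v}^t \leq \frac{T+1}{n} + \frac{d}{n}\sum_{k=1}^d \frac{1}{k}\binom{d}{k}.
\]

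The key combinatorial estimate is $\sum_{k=1}^d \binom{d}{k}/k = \BO{2^d/d}$. This can be obtained from the integral representation $\sum_{k=1}^d \binom{d}{k}/k = \int_0^1 ((1+x)^d - 1)/x \, dx$, or by observing that the sum is dominated by terms with $k \approx d/2$, where $\binom{d}{k} = \Theta(2^d/\sqrt{d})$ and $1/k = \Theta(1/d)$, so the bulk contribution is $\Theta(2^d/d)$. Plugging this back gives $\sum_{t=0}^T P_{v,v}^t = \BO{1 + T/n}$, which for $T \leq t_{\rel} = \log_2 n$ is $\BO{1}$, verifying the hypothesis with $\pi(v) = 1/n$.

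With all three hypotheses of Lemma~\ref{thm:constreturn} in place, the stated bound $t_{\mathsf{cov}}^{(k)}(\pi) = \Theta((n/k)\log n)$ follows immediately for $1 \leq k \leq n$. The only genuinely non-trivial step is the combinatorial estimate on $\sum_k \binom{d}{k}/k$; everything else (spectral decomposition of the hypercube walk, the mixing time bound, uniformity of $\pi$) is standard. As a sanity check, the matching $\Omega((n/k)\log n)$ lower bound can alternatively be read off from Theorem~\ref{generallower} (valid in the range $1 \leq k \leq cn\log n$), which reinforces that we have the right order.
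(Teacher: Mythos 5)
Your proposal is correct and follows essentially the same route as the paper: both reduce the theorem to checking the three hypotheses of Lemma~\ref{thm:constreturn}. The only difference is that the paper cites \cite{cooper2014note} for the estimate $\sum_{t=0}^{\trel} P_{v,v}^t = \BO{1}$ (together with $\trel = \BO{\log n}$), whereas you verify it directly and correctly via the Walsh eigenbasis, the geometric-series bound $\sum_{t\le T}(1-k/d)^t \le d/k$, and the estimate $\sum_{k\ge 1}\binom{d}{k}/k = \BO{2^d/d}$ — a self-contained substitute for the citation.
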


	\begin{proof}We wish to apply Lemma \ref{thm:constreturn}. This is applicable since the hypercube is regular and by \cite{cooper2014note}, we have that for any vertex $v$,  $ \sum_{t=0}^{\trel} P_{vv}^t\leq  2 + \lo{1}$ and $\trel = \BO{\log n}$.
	\end{proof}
	
	We will also derive the result below in a more systematic way than the original proof \cite{ElsSau} using our new characterisations involving partial mixing time and hitting times of large sets.
	\begin{theorem}[{\cite[Theorem 5.4]{ElsSau}}]\label{thm:hypercubeworstcase}
		For the hypercube with $n$ vertices,  \begin{equation*}
		\tcov^{(k)}=
		\begin{cases}
		\displaystyle{\BT{ \frac{n}{k} \log n} }\vspace{.05cm} & \mbox{if $1 \leq k \leq n/\log \log n$}, \\
		\Theta( \log n \log \log n )   & \mbox{if $n / \log \log n \leq k \leq n$.}
		\end{cases}
		\end{equation*}
	\end{theorem}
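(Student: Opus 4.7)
The plan is to derive both upper bounds uniformly from Theorem~\ref{thm:characterupper}, the first regime lower bound directly from Theorem~\ref{thm:hypercubecover}, and the second regime lower bound via a bespoke argument that exploits the coordinate structure of the hypercube. Throughout I use the classical bound $\tmix = \Theta(\log n \log\log n)$ for the lazy walk on the $d$-dimensional hypercube with $d=\log_2 n$, together with $\pi_{\max}=1/n$.

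For the upper bound, apply Theorem~\ref{thm:characterupper} with $\tilde{k} = \lceil k/2\rceil$. Lemma~\ref{lem:mixupper}\eqref{itm:maxa} yields $t_{\mathsf{mix}}^{(\tilde{k},k)} = O(\tmix) = O(\log n \log\log n)$, and Theorem~\ref{thm:hypercubecover} gives $\tcov^{(\tilde{k})}(\pi) = \Theta((n/k)\log n)$, so
\[
\tcov^{(k)} = O\!\left(\max\{(n/k)\log n,\ \log n \log \log n\}\right),
\]
which equals $O((n/k)\log n)$ when $k \leq n/\log\log n$ and $O(\log n \log\log n)$ when $k \geq n/\log\log n$. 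The lower bound in the first regime is immediate from $\tcov^{(k)} \geq \tcov^{(k)}(\pi) = \Omega((n/k)\log n)$ via Theorem~\ref{thm:hypercubecover}.

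The main obstacle is the lower bound in the second regime, $\tcov^{(k)} = \Omega(\log n \log\log n)$ for $k \geq n/\log\log n$. Since $\tcov^{(k)}$ is non-increasing in $k$, it suffices to prove $\tcov^{(n)} = \Omega(\log n \log\log n)$; neither Theorem~\ref{thm:characterlower} nor Theorem~\ref{thm:hypercubecover} is tight here (balancing the terms in Theorem~\ref{thm:characterlower} only yields $\Omega(\log n \sqrt{\log\log n})$ at $k=n$, as we cannot simultaneously make $t_{\mathsf{large-hit}}^{(\tilde{k},k)}$ and $(n\log n)/\tilde k$ both large). Instead, start all $n$ walks from $\mathbf{0}:=0^d$ and lower bound the time to hit the antipode $\mathbf{1}:=1^d$. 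Using the standard coupon-collector description of the lazy walk (at each step pick a uniform coordinate and reset it to a fresh uniform bit), the walk reaches $\mathbf{1}$ at time $s$ only if every coordinate has been picked at least once in $[1,s]$ \emph{and} each such fresh bit equals $1$. The former probability is at most $(1-(1-1/d)^s)^d$ by negative association of the multinomial pick-counts, so for $s \leq t := cd\log d$ with a fixed $c<1$, using $(1-1/d)^s \geq (1-1/d)^t \geq d^{-c}/2$ for large $d$ and $(1-x)^d\leq e^{-dx}$,
\[
\Pru{\mathbf{0}}{X_s=\mathbf{1}} \leq \left(\frac{1-(1-1/d)^s}{2}\right)^d \leq \frac{1}{n}\exp\!\left(-\frac{d^{1-c}}{2}\right).
\]

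A union bound over $s \in \{0,1,\ldots,t\}$ and the $n$ walks then gives
\[
\Pru{\mathbf{0}}{\exists\, 1\leq i\leq n,\ s\leq t:\, X_s^{(i)} = \mathbf{1}} \leq n \cdot (t+1) \cdot \frac{e^{-d^{1-c}/2}}{n} = (cd\log d+1)\exp(-d^{1-c}/2),
\]
which tends to $0$ as $d\to\infty$ since $d^{1-c}$ dominates $\log(d\log d)$ for any fixed $c<1$. Consequently the antipode $\mathbf{1}$ is unvisited by any of the $n$ walks by time $t$ with probability $1-o(1)$, giving $\tcov^{(n)} \geq (1-o(1)) c d\log d = \Omega(\log n \log\log n)$. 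By monotonicity of $\tcov^{(k)}$ in $k$, this lower bound extends to all $k \leq n$, completing the proof.
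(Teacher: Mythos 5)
Your proof is correct, and for the hard part (the lower bound $\Omega(\log n \log\log n)$ when $k\geq n/\log\log n$) it takes a genuinely different route from the paper. The paper stays within its $\max$-$\min$ framework: Lemma~\ref{lem:hypercubelargehit} shows via a bounded-differences argument that a walk started at $0^d$ stays inside the Hamming ball of radius $d/2$ for $(1/100)d\log d$ steps with probability $1-\exp(-d^{\Omega(1)})$, i.e.\ $t_{\mathsf{large-hit}}^{(\tilde{k},n)}=\Omega(\log n\log\log n)$ even for $\tilde{k}=n\exp(-\Omega(\sqrt{\log n}))$, and then Theorem~\ref{thm:characterlower}(i) with this $\tilde k$ gives $\min(\Omega(\log n\log\log n),\exp(\Omega(\sqrt{\log n})))$. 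Your parenthetical claim that Theorem~\ref{thm:characterlower} cannot yield better than $\Omega(\log n\sqrt{\log\log n})$ at $k=n$ is therefore mistaken --- the characterisation is tight here once one has the right estimate on the large-hit time --- but this does not affect your argument, which instead runs a direct first-moment computation: with all $n$ walks at $0^d$, the pointwise bound $\Pru{0^d}{X_s=1^d}\leq n^{-1}\exp(-d^{1-c}/2)$ for $s\leq cd\log d$ (via the refresh description, negative association of the pick-counts, and the $2^{-d}$ factor for the refreshed bits) survives a union bound over all $n$ walks and all $t+1$ time steps. This is a clean, self-contained adaptation of the classical single-walk mixing lower bound on the hypercube; what it loses relative to the paper's route is reusability (the paper's Lemma~\ref{lem:hypercubelargehit} is a statement about the partial mixing/large-hit time that plugs into the general machinery), and what it gains is that it avoids the slightly delicate bookkeeping of Theorem~\ref{thm:characterlower}. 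The upper bounds and the first-regime lower bound coincide with the paper's, with your single application of Theorem~\ref{thm:characterupper} at $\tilde{k}=\lceil k/2\rceil$ handling both regimes at once (modulo the trivial $k=1$ edge case where $\tilde k<k$ fails, which is just the classical cover time).
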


	In order to prove this theorem, we need to bound $t_{\mathsf{large-hit}}^{\left(\alpha,n\right)}$ from below which is done with the following lemma.  
	\begin{lemma}\label{lem:hypercubelargehit}
		For the hypercube with $n$ vertices and any $1\leq \tilde{k}\leq k$ satisfying $\tilde{k} \geq k\cdot e^{-\sqrt{\log  n}}\geq 1$ we have  \[t_{\mathsf{large-hit}}^{(\tilde{k},k)}\geq \frac{1}{100}\cdot  (\log n) \log \log n .\]
	\end{lemma}

	\begin{proof}[Proof of Lemma \ref{lem:hypercubelargehit}]
		Let $d$ denotes the dimension of the hypercube with $n=2^d$ vertices.
		Fix the vertex $u=0^{d}$ and consider the set $S_u=\{v \in V : d_H(u,v)\leq d/2 \}$, so $|S_u| \leq 3n/4$. Here $d_H$ is the Hamming distance.
		We will estimate the probability of a random walk leaving $S_u$ in $\ell= (1/100) d \log d$ steps. Recall that a lazy random walk on the hypercube can be considered as performing the following two-step process in each round: 1.) Choose one of the $d$ bits uniformly at random, 2.) Independently, set the bit to $\{0,1\}$ uniformly at random.
		
		Let us denote by $C_{t}$ the set of chosen coordinates, and $U_{t}$ the set of unchosen coordinates in any of the first $t\leq \ell$ steps in the process above. Note that the unchosen coordinates are zero, while the chosen coordinates 
		are in $\{0,1\}$ independently and uniformly. By linearity of expectations and since $|U_t|$ is non-increasing in $t$, we have
		\[
		\Ex{ |U_{t}|  }\geq \Ex{ |U_{\ell}|  } = d \cdot  \left( 1- \frac{1}{d} \right) ^{\ell} \geq d\cdot \left(e^{-1}\left( 1 - \frac{1}{d}\right)\right)^{(\log d)/100}\geq  d\cdot \left(e^{-1}/2\right)^{(\log d)/100} \geq d^{0.9},
		\] where the first inequality is by \eqref{eq:cheatsheet}. Using the Method of Bounded Differences~\cite[Theorem 5.3]{DPmobd}, we conclude that for any  $t \leq \ell$,
		\begin{equation}\label{eq:hyp1}
		\Pr{ |U_{t}| \leq d^{0.8} } \leq   \exp\left( - \frac{ 2 (d^{0.9}-d^{0.8})^2 }{t \cdot 1^2} \right) \leq  \exp(- d^{0.7}).
		\end{equation}
		Next consider the sum of the values at the chosen coordinates $C_{t}$ at time $t\leq \ell $, which is given by 
		\[
		Z_t= \sum_{i \in C_{t}} Y_i,
		\]
		where the $Y_i \in \{0,1\}$ are independent and uniform variables, representing the coordinates of the random walk. Note that $\Ex{Z_t\mid |C_{t}|}=|C_{t}|/2$, and so Hoeffding's bound implies
		\begin{equation}\label{eq:hyp2}
		\Pr{  | Z_t - |C_{t}|/2 | \geq d^{0.75} \; \Big| \; |C_{t}| } \leq  2\exp \left(- 2d^{1.5}/d \right) = 2\exp \left(- 2d^{0.5} \right).
		\end{equation}
		Conditional on the events $|U_{t}| \geq d^{0.8}$ and  $Z_t \leq \Ex{Z_t} +2 d^{0.75}$ we have $Z_t\leq (d-d^{0.8})/2+2d^{0.75}<d/2$, and so the random walk is still in the set $S_u$ at step $t\leq \ell$. By the Union bound over the all steps $t=1,\ldots,\ell$, and \eqref{eq:hyp1} and \eqref{eq:hyp2}, for large $d$, these events hold with probability at least 
		\[1-   \ell \cdot \left(   \exp(- d^{0.7})+ 2\exp \left(- 2d^{0.5} \right) \right) \geq 1 - \exp\left(-\sqrt{d} \right)\geq 1 - \exp\left(-\sqrt{\log n} \right) . \] Thus a random walk of length $\ell$ escapes the set $S_u$ with probability at most $e^{-\sqrt{\log  n}}$. Since $S_u$ must be escaped to hit a worst-case set with stationary mass at least $1/4$, it follows that from the definition \eqref{eq:largehit} of $t_{\mathsf{large-hit}}^{(\tilde{k},k)}$ and monotonicity that for any $\tilde{k} \geq k\cdot e^{-\sqrt{\log  n}}\geq 1$ we have $t_{\mathsf{large-hit}}^{(\tilde{k},k)}\geq \ell$. 
	\end{proof}
	We can now apply our characterisation to find the worst-case cover time of the hypercube.  
	\begin{proof}[Proof of Theorem \ref{thm:hypercubeworstcase}] We can assume that $k\geq 2$ by known results for the cover time of the hypercube. Now, observe that Lemma~\ref{lem:mixupper} and \cite[(6.15)]{levin2009markov} yield
		\begin{equation}\label{eq:hypmix}t_{\mathsf{mix}}(\lfloor k/2 \rfloor,k)=\BO{ t_{\mathsf{mix}}} = \BO{\log n \cdot \log \log n}.
		\end{equation}
	 \noindent\textbf{Case (i)} [$2 \leq k \leq n/\log \log n$]\textbf{:} By Theorem~\ref{thm:hypercubecover}, for any $1 \leq \tilde{k} \leq k$ we have 
		\[
		\tcov^{(\tilde{k})}(\pi) = \Theta\left( \frac{n}{\tilde{k}}\cdot \log n\right). 
		\]
		Let $\tilde{k}=k/2$ and then by \eqref{eq:hypmix},  $t_{\mathsf{mix}}(\tilde k,k)$ is always at most $\BO{\tcov^{(\tilde{k})}(\pi)}$. Hence Theorem~\ref{thm:characterupper} implies $\tcov^{(k)}= \BO{(n/k) \log n}$. The lower bound follows by Theorem~\ref{generallower} since \[\tcov^{(k)}\geq t_{\mathsf{cov}}^{(k)}(\pi) =\Omega\left(\frac{n}{k}\cdot \log n\right).\]

		\noindent\textbf{Case (ii)} [$n/ \log \log n \leq k \leq n$]\textbf{:} If we choose $\tilde{k}=\lfloor \frac{n}{2\log \log n} \rfloor$, then by monotonicity
		\[
		\tcov^{(k)}\leq t_{\mathsf{cov}}^{\left(\tilde{k}\right)} = \BO{ \log n \cdot \log \log n}.
		\]
		Also by monotonicity and \eqref{eq:hypmix} we have $t_{\mathsf{mix}}(\tilde{k},k)\leq t_{\mathsf{mix}}(\lfloor k/2 \rfloor ,k) = \BO{\log n \cdot \log \log n} $  thus the results follows from Theorem~\ref{thm:characterupper}. 
		 
		To prove a matching lower bound, recall that Lemma~\ref{lem:hypercubelargehit} states 
		\[
		t_{\mathsf{large-hit}}^{( n \exp(-\sqrt{\log n}), \,n) }\geq \frac{1}{100}\cdot(\log n )\cdot \log \log n.
		\]
		Again, by monotonicity, we can assume $k=n$ and choose $\tilde{k} = \lfloor n \exp( - \sqrt{ \log n})\rfloor \leq k$, giving \[\tcov^{(k)}\geq  \min\left(t_{\mathsf{large-hit}}^{(\tilde{k}, \,n) }, \frac{1}{\tilde{k}\pi_{\min}}\right)   \geq \min \left( \frac{\log n}{100} \cdot \log \log n, \;\exp\left( \sqrt{\log n}\right) \right)\] by an application of the first bound in Theorem~\ref{thm:characterlower}.
	\end{proof}

	\subsection{Higher Dimensional Tori}
	The proof of the stationary cover time of higher dimensional tori is similar to the hypercube.
	
	\begin{theorem}\label{thm:3+toristat}
		For $d$-dimensional torus $\mathds{T}_d$, where $d\geq 3$, and any $1\leq k \leq n$ we have \[t_{\mathsf{cov}}^{(k)}(\pi) = \Theta\left(\frac{n}{k}\log n\right).\]
	\end{theorem}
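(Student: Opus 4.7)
The plan is to apply Lemma \ref{thm:constreturn} directly, exactly as was done for the hypercube in Theorem \ref{thm:hypercubecover}. That lemma gives the required two-sided bound $\Theta((n/k)\log n)$ once we verify its three hypotheses for the $d$-dimensional torus $\mathds{T}_d$ with $d\geq 3$.

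First, since $\mathds{T}_d$ is $2d$-regular we have $\pi_{\min}=1/n$, so $\pi_{\min}=\Omega(1/n)$. Second, it is standard (and used repeatedly in this paper) that $t_{\mix}(\mathds{T}_d)=\Theta(n^{2/d})$, which is $\BO{n}$ for every $d\geq 2$, and similarly $\trel=\Theta(n^{2/d})$. So the remaining, and main, task is to verify the return-probability hypothesis: for every vertex $v$ and every $t\leq \trel$, one has $\sum_{i=0}^{t}P_{vv}^{i}=\BO{1+t\pi(v)}$.

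For this I would invoke the local central limit theorem for lazy random walks on $\mathds{Z}^d$, in the form of equation \eqref{LCLT} already used in Section \ref{sec:CTLtori} and Lemma \ref{torireturns}. Since $t\leq \trel=\Theta(n^{2/d})$, the walk on the torus has not yet felt the wrap-around, so $P^t_{vv}$ matches the $\mathds{Z}^d$ heat kernel up to lower-order terms; in particular there is a constant $C<\infty$ such that $P^i_{vv}\leq C\cdot i^{-d/2}$ for all $1\leq i\leq \trel$. Because $d\geq 3$ we have $\sum_{i=1}^{\infty} i^{-d/2}<\infty$, so
\[
\sum_{i=0}^{t}P^i_{vv}\;\leq\;1+C\sum_{i=1}^{\infty} i^{-d/2}\;=\;\BO{1},
\]
which is certainly $\BO{1+t\pi(v)}$ as required. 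This is exactly the place where the argument breaks for $d=2$ (logarithmic divergence), which is why the $2$-torus is handled separately through Theorem~\ref{thm:tree}.

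With all three hypotheses of Lemma \ref{thm:constreturn} established, the lemma yields $t_{\mathsf{cov}}^{(k)}(\pi,\mathds{T}_d)=\Theta((n/k)\log n)$ for every $1\leq k\leq n$, completing the proof. The only non-trivial step is the return-sum estimate above, and even that reduces to a one-line application of the local CLT already set up elsewhere in the paper; everything else is essentially bookkeeping.
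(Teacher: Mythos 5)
Your proposal is correct and follows essentially the same route as the paper: the paper's proof also applies Lemma~\ref{thm:constreturn} after noting $\trel=\BO{n^{2/d}}=\lo{n}$ and that $\sum_{t=0}^{\trel}P_{vv}^t=\BO{1}$, the latter being exactly the content of Lemma~\ref{torireturns} (via the local CLT \eqref{LCLT}) for $d\geq 3$. Your verification of the return-sum hypothesis is the same one-line summation of $i^{-d/2}$, so there is nothing to add.
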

	\begin{proof} For the $d$-dimensional torus, where $d\geq 3$ we have $ \sum_{t=0}^{\trel} P_{vv}^t= \BO{1}$  by Lemma \ref{torireturns}. Also $\trel = \BO{n^{2/d}}=\lo{n}$,  see \cite[Section 5.2]{aldousfill}. Thus, we can apply Lemma \ref{thm:constreturn}.
	\end{proof}

	Using our machinery, we can recover the following result in full quite easily.
	\begin{theorem}[\cite{IKPS17}]\label{3+toriworst}
		For the $d$-dimensional torus, where $d \geq 3$ is constant:
		\begin{equation*}
		\tcov^{(k)}=
		\begin{cases}
		\displaystyle{\BT{ \frac{n}{k}\cdot  \log n} }\vspace{.1cm}  & \mbox{if $1 \leq k \leq 2n^{1-2/d} \log n$}, \\
		\displaystyle{\Theta\left(  n^{2/d} \cdot \frac{1}{\log (k/(n^{1-2/d} \log n))}  \right) }  & \mbox{if $2n^{1-2/d} \log n < k \leq n$.}
		\end{cases}
		\end{equation*}
	\end{theorem}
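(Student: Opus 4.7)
The plan is to combine the min--max characterisations from Section~\ref{Sec:partialmixing} with the stationary cover time given by Theorem~\ref{thm:3+toristat} and the partial mixing time estimate in Lemma~\ref{pmixtori}. The two regimes correspond precisely to whether the quantity $(n/k)\log n$ (the stationary cover time) or $n^{2/d}$ (the mixing time) is dominant, and the breakpoint at $k = n^{1-2/d}\log n$ is exactly where they coincide.

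For the first regime $1 \leq k \leq n^{1-2/d}\log n$, I would apply Theorem~\ref{thm:characterupper} with $\tilde{k} = \lfloor k/2 \rfloor$. Lemma~\ref{lem:mixupper}(i) gives $t_{\mathsf{mix}}^{(\tilde{k},k)} = \BO{t_{\mathsf{mix}}} = \BO{n^{2/d}}$, while Theorem~\ref{thm:3+toristat} gives $\tcov^{(\tilde{k})}(\pi) = \Theta((n/k)\log n)$; since $(n/k)\log n \geq n^{2/d}$ in this regime, the max is $\BO{(n/k)\log n}$. The matching lower bound is immediate: $\tcov^{(k)} \geq \tcov^{(k)}(\pi) = \Omega((n/k)\log n)$ by Theorem~\ref{generallower}.

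For the second regime $n^{1-2/d}\log n \leq k \leq n$, set $a := n^{1-2/d}\log n$ and $b := k/a \geq 1$, and choose $\tilde{k} = \lceil a \log b \rceil \wedge \lfloor k/2 \rfloor$ (which balances the two terms in Theorem~\ref{thm:characterupper}). For the upper bound, Lemma~\ref{pmixtori} yields $t_{\mathsf{mix}}^{(\tilde{k},k)} = \BO{n^{2/d}/\log(k/\tilde{k})}$ and Theorem~\ref{thm:3+toristat} yields $\tcov^{(\tilde{k})}(\pi) = \BO{(n/\tilde{k})\log n}$; using $\log(k/\tilde{k}) = \log(b/\log b) = \Theta(\log b)$ for $b \geq 2$, both terms simplify to $\BO{n^{2/d}/\log b}$, and Theorem~\ref{thm:characterupper} finishes the upper bound. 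Boundary values of $b = \BO{1}$ reduce to the first regime.

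For the matching lower bound, I would apply Theorem~\ref{thm:characterlower}(ii) with the same $\tilde{k}$. The regularity hypothesis holds for tori, and $\tilde{k} \geq a \geq n^{1-2/d}\geq n^{1/3}$ satisfies the $\tilde{k} \geq n^\delta$ condition since $d \geq 3$. To bound $t_{\mathsf{large-hit}}^{(\tilde{k},k)}$ from below, fix any vertex $u$ and let $S$ be the complement of a metric ball of radius $c n^{1/d}$ around $u$, with $c>0$ a constant small enough that $|S| \geq n/2$ (possible since the $d$-dim.\ ball of radius $r$ has $\Theta(r^d)$ vertices). Then $\dist(u,S) \geq c n^{1/d}$, and Lemma~\ref{lem:largehit} gives $t_{\mathsf{large-hit}}^{(\tilde{k},k)} = \Omega(n^{2/d}/\log(k/\tilde{k})) = \Omega(n^{2/d}/\log b)$. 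The other term in Theorem~\ref{thm:characterlower}(ii) is $(n\log n)/\tilde{k} = \Theta(n^{2/d}/\log b)$, so the $\min$ is also $\Omega(n^{2/d}/\log b)$, as required.

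The main obstacle is the careful choice of $\tilde{k}$ in the second regime: one must balance a decreasing function of $\tilde{k}$ against an increasing one, through a nested logarithm $\log(k/\tilde{k}) = \log(b/\log b)$, and then verify that this nested logarithm is $\Theta(\log b)$ uniformly across the entire range $b \in [2, n^{2/d}/\log n]$ so that the upper and lower bounds agree. The matching of upper and lower bounds is clean because Lemma~\ref{pmixtori} and Lemma~\ref{lem:largehit} have the same $n^{2/d}/\log(k/\tilde{k})$ shape, reflecting that both come from the local CLT / Azuma-type displacement estimates on $d$-dim.\ tori.
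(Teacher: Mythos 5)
Your proposal is correct and follows essentially the same route as the paper: the same min--max characterisations (Theorems~\ref{thm:characterupper} and~\ref{thm:characterlower}), the same inputs (Theorem~\ref{thm:3+toristat}, Lemma~\ref{pmixtori}, Lemma~\ref{lem:largehit}), and the same balancing choice $\tilde{k} \approx n^{1-2/d}\log(n)\log\bigl(k/(n^{1-2/d}\log n)\bigr)$ in the second regime. Your treatment is if anything slightly more explicit than the paper's (e.g.\ the verification that $\log(b/\log b)=\Theta(\log b)$ and the explicit construction of $S$ as the complement of a ball), but there is no substantive difference in approach.
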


	\begin{proof}[Proof of Theorem \ref{3+toriworst}] As before, we can assume $k\geq 2$ by known results for the (single walk) cover time of the $d$-dim torus. 
		By Theorem~\ref{thm:3+toristat} and Lemma \ref{lem:cyclemixbdd}, respectively, we have 
		\begin{equation}\label{eq:mixcov}
		\tcov^{(\tilde{k})}(\pi) = \BT{ (n/\tilde{k}) \log n}, \quad \text{and} \quad t_{\mathsf{mix}}^{(\tilde{k},k)}  = \BO{ n^{2/d} / \log ( k / \tilde{k} )}\quad \text{if $1\leq \tilde{k}\leq k/2$}.
		\end{equation}

	 \noindent\textbf{Case (i)} [$2 \leq k \leq 2 n^{1-2/d} \log n$]\textbf{:}	For the upper bound, we can choose $\tilde{k} = \lfloor k/2 \rfloor$, then by Theorem \ref{thm:characterupper}, the expected stationary cover time by $k$ walks is $O(n/k \cdot \log n)$. To obtain a matching lower bound we can simply use $\tcov^{(k)}\geq t_{\mathsf{cov}}^{(k)}(\pi)$.

	 \noindent\textbf{Case (ii)} [$2n^{1-2/d} \log n < k \leq n$]\textbf{:} Beginning with the upper bound, set
		\[
		\tilde{k} = \left\lfloor n^{1-2/d}\cdot   \log \left(\frac{k}{n^{1-2/d} \log n} \right)\cdot \log  n \right\rfloor \leq \frac{k}{2} .
		\]
 
	Then inserting this value for $\tilde k$ into the bounds from \eqref{eq:mixcov} gives 
		\[
		\tcov^{(\tilde{k})}(\pi) = \BO{ \frac{n^{2/d}}{\log \frac{k}{n^{1-2/d} \log n} } },
		\quad\text{and}\quad
		t_{\mathsf{mix}}^{(\tilde{k},k)}  = 
		\BO{
			\frac{ n^{2/d}}{ \log( \frac{k}{n^{1-2/d} \log n} \cdot \frac{1}{\log (k /(n^{1-2/d} \log n))   })  }
		}.
		\]
		These bounds are both of the same order and so the upper bound follows from  Theorem \ref{thm:characterupper}.

		For the lower bound set $\tilde k = n^{1-2/d}\log n \leq k/2 $. Then since $n^{1/3}< \tilde{k} \leq k$, by Theorem~\ref{thm:characterlower},
		\begin{equation}\label{eq:lowerd3}
		\tcov^{(k)}\geq C\cdot \min( t_{\mathsf{large-hit}}^{(\tilde{k},k)}, (n/\tilde{k}) \log (n) )\geq  \min( t_{\mathsf{large-hit}}^{(\tilde{k},k)}, n^{2/d}), 
		\end{equation}for some constant $C>0$. For $t_{\mathsf{large-hit}}^{(\tilde{k},k)} $, it follows from Lemma~\ref{lem:largehit} where we fix $u$ to be any vertex and $S$ to be the complement of the ball of radius $n^{1/d}/10$ around $u$ that
		\[
		t_{\mathsf{large-hit}}^{(\tilde{k},k)} =  \Omega\left( \frac{n^{2/d}}{ \log( k/ \tilde k ) }  \right)= \Omega\left( \frac{n^{2/d}}{ \log \frac{ k}{ n^{1-2/d}\log n} }  \right), 
		\]
		hence by \eqref{eq:lowerd3} we obtain a matching lower bound.  \end{proof}

	\section{Conclusion \& Open Problems} 
	
	In this work, we derived several new bounds on multiple stationary and worst-case cover times. We also introduced a new quantity called \emph{partial mixing time}, which extends the definition of mixing time from single random walks to multiple random walks. By means of a $\min$-$\max$ characterisation, we proved that the partial mixing time connects the stationary and worst-case cover times, leading to tight lower and upper bounds for many graph classes.
	
	In terms of worst-case bounds, Theorem \ref{nonregbdd} implies that for any regular graph $G$ and any $k\geq 1$, $t_{\mathsf{cov}}^{(k)}(\pi) = \BO{\left(\frac{n}{k}\right)^2\log^2 n }. $ This bound is tight for the cycle when $k$ is polynomial in $n$ but not for smaller $k$. We suspect that for any $k\geq 1$ the cycle is (asymptotically) the worst-case for $t_{\mathsf{cov}}^{(k)}(\pi)$ amongst regular graphs, which suggests $
	t_{\mathsf{cov}}^{(k)}(\pi) = \BO{\left(\frac{n}{k}\right)^2\log^2 k }.
	$
	
	Some of our results have been only proven for the independent stationary case, but it seems plausible they extend to the case where the $k$ random walks start from the \emph{same} vertex. For example, extending the bound 
	$t_{\mathsf{cov}}^{(k)}(\pi) =\Omega( (n/k) \log n)$ to this case would be very interesting.

	Although our $\min$-$\max$ characterisations involving partial mixing time yields tight bounds for many natural graph classes, it would be interesting to establish a general approximation guarantee (or find graph classes that serve as counter-examples). For the former, we believe techniques such as Gaussian Processes and Majorising Measures used in the seminal work of Ding, Lee and Peres \cite{DLP12} could be very useful.

	\section*{Acknowledgements} 
	All three authors were supported by the ERC Starting Grant 679660 (DYNAMIC MARCH). Nicol\'as Rivera was supported by ANID FONDECYT grant number 3210805. John Sylvester was also supported by ESPRC grant number EP/T004878/1 while at the University of Glasgow.
	
We thank Jonathan Hermon for some interesting and useful discussions, and Przemys\l{}aw Gordinowicz for his feedback on an earlier version of this paper.

	\appendix
	
	\section{Appendix: Elementary Results}

	\begin{lemma}\label{lowconcfromupconc}Let $X$ be a non-negative integer random variable such that $\Ex{X}\geq b$ and there exists $c\geq 0$ such that $\Pr{X>\ell c}\leq \Pr{X>c}^\ell$ for all integers $\ell\geq 0$. Then for any $a<c $ \[\Pr{X>a}\geq \frac{b-a}{b+2c}.\]
	\end{lemma}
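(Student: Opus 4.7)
The plan is to use the tail-sum identity $\Ex{X} = \sum_{x=0}^{\infty} \Pr{X>x}$ to convert the assumption $\Ex{X} \geq b$ into a lower bound on $p := \Pr{X>a}$. The key preliminary observation is that since $a<c$ and $x \mapsto \Pr{X>x}$ is non-increasing, $\Pr{X>c} \leq p$, so the hypothesis yields the geometric tail decay $\Pr{X>\ell c} \leq p^{\ell}$ for every integer $\ell \geq 0$.

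I would then split the tail sum into three regimes and apply a different bound on each. On integers $x \in \{0,\ldots,a-1\}$ use the trivial bound $\Pr{X>x} \leq 1$, contributing at most $a$. On $x \in \{a,\ldots,c-1\}$ use $\Pr{X>x} \leq p$, contributing at most $(c-a)p$. On $x \geq c$, group consecutive integers into blocks $[\ell c, (\ell+1) c)$ for $\ell \geq 1$ (each of size $c$, taking $c$ integer for simplicity; the general case costs only a harmless additive constant per block) and use $\Pr{X>x} \leq p^\ell$ throughout block $\ell$, so that the tail contributes at most $\sum_{\ell \geq 1} c\,p^\ell = cp/(1-p)$. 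Assembling the three pieces yields the master inequality
\[
b \;\leq\; a + (c-a)\,p + \frac{cp}{1-p}.
\]

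To conclude, I would clear the denominator by multiplying through by $(1-p)$ and discard the non-positive term $-(c-a)p^2$ to get $(1-p)(b-a) \leq (c-a)p + cp = (2c-a)\,p$, which rearranges to $p \geq (b-a)/(b+2c-2a) \geq (b-a)/(b+2c)$, using $a \geq 0$ in the last step (if $b \leq a$ the claimed bound is non-positive and the conclusion is vacuous). There is no real conceptual obstacle here; the only delicate point is the placement of the split points, as breaking the sum exactly at $a$ and at $c$ is what produces the clean constant $2c$ in the denominator — a coarser split (e.g.\ invoking the geometric bound from $x=0$) would inflate the constant, and a finer one gains nothing.
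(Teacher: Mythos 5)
Your proof is correct and follows essentially the same route as the paper's: the same tail-sum decomposition of $\Ex{X}$ into the ranges $[0,a)$, $[a,c)$, and the geometric blocks $[\ell c,(\ell+1)c)$, yielding the identical master inequality $b\leq a+(c-a)p+cp/(1-p)$. Your final algebra is marginally sharper (retaining $(c-a)p$ gives $(b-a)/(b+2c-2a)$ before relaxing to the stated bound), but the argument is the same.
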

	\begin{proof}Let $p= \Pr{X> a}\geq \Pr{X> c}$ as $\Pr{X>x}$ is non-increasing in $x$. Now we have 
		\[b\leq a+ \sum_{i=a+1}^{c-1}\Pr{x>i} +c\sum_{\ell=1}^{\infty}\Pr{X>c}^\ell \leq a + p(c-a) + cp/(1-p).    \] This implies $(1-p)b\leq a + 2pc$, rearranging gives the result.
	\end{proof}

	\subsection{Returns in the Torus}

	The following result is well-known, however we state it for completeness.

 	\begin{lemma}\label{torireturns}For the d-dimensional torus $\mathds{T}_d$ on $n$ vertices and any $1\leq t\leq \trel$ we have  
		\[\sum_{i=0}^t P^i_{v,v} = \begin{cases} \Theta(\sqrt{t}) &\text{ if } d=1\\
		\Theta(1+ \log t) &\text{ if } d=2\\
		\Theta(1) &\text{ if } d \geq 3\\
		\end{cases}.\]
 	\end{lemma}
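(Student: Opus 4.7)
The plan is to derive the stated asymptotics by reducing the return probability on $\mathds{T}_d$ to that of the simple random walk on $\mathds{Z}^d$, and then integrating (i.e.\ summing) the resulting estimate. Concretely, I would first invoke a local central limit theorem for the lazy walk on $\mathds{Z}^d$ (this is equation~\eqref{LCLT}, referenced in Section~\ref{sec:CTLtori}), which yields $q_{0,0}^t = \BT{t^{-d/2}}$, where $q$ denotes the transition kernel on $\mathds{Z}^d$. The walk on $\mathds{T}_d$ is obtained from the walk on $\mathds{Z}^d$ by projecting modulo $n^{1/d}$ in each coordinate, so $P_{v,v}^t = \sum_{w \in n^{1/d}\mathds{Z}^d} q_{0,w}^t$. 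The key observation is that for $t \leq \trel = \BT{n^{2/d}}$, the ``wrap around'' contributions from $w \neq 0$ are negligible: by the Gaussian shape from LCLT, the probability that a walk on $\mathds{Z}^d$ has moved by distance $\Omega(n^{1/d})$ after $t \leq \trel$ steps is $\lo{1}$, and in fact a standard comparison shows $P_{v,v}^t = q_{0,0}^t \cdot (1 + \lo{1}) = \BT{t^{-d/2}}$.

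Having established $P_{v,v}^i = \BT{i^{-d/2}}$ (with the convention $P_{v,v}^0 = 1$) for $1 \leq i \leq \trel$, I would then just sum:
\begin{itemize}
\item For $d=1$: $\sum_{i=1}^t i^{-1/2} = \BT{\sqrt{t}}$.
\item For $d=2$: $\sum_{i=1}^t i^{-1} = \BT{\log t}$.
\item For $d \geq 3$: $\sum_{i=1}^t i^{-d/2} = \BT{1}$ since this series is convergent.
\end{itemize}
In each case, adding the initial term $P_{v,v}^0 = 1$ does not change the asymptotics (except of course that for $d \geq 3$ the constant is at least $1$, which is harmless). This gives the claimed $\BT{R_d(t)}$.

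The delicate step will be making the transfer from $\mathds{Z}^d$ to $\mathds{T}_d$ quantitative for \emph{all} $t$ up to $\trel$, not just for $t \ll \trel$. Near $t \asymp \trel$ the lattice estimate $t^{-d/2} \asymp 1/n$ is close to the stationary value $\pi(v) = 1/n$, so one must be careful that the wrap-around terms do not dominate. A clean way around this is to use an ``image sum'' bound $P_{v,v}^t \leq \sum_{w \in n^{1/d}\mathds{Z}^d} q_{0,w}^t$ for the upper bound, and to use $P_{v,v}^t \geq q_{0,0}^t$ (since returns in $\mathds{T}_d$ include returns on $\mathds{Z}^d$ as a special case via the projection) for the lower bound; combined with the Gaussian decay in~\eqref{LCLT}, both bounds are of order $t^{-d/2}$ for $t \leq \trel$, since in that range $t^{-d/2} \gtrsim 1/n$. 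This handles the boundary case and completes the argument.
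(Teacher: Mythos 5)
Your proposal is correct and takes essentially the same route as the paper: the paper also invokes the local CLT on $\mathds{Z}^d$ to get $Q_{v,v}^i=\Theta(i^{-d/2})$, lower-bounds $\sum_i P_{v,v}^i$ by $\sum_i Q_{v,v}^i$ (returns on the torus dominate returns on the lattice), upper-bounds $P_{v,v}^i$ by an image sum over the wrap-around copies, and then sums term by term to obtain $R_d(t)$. No substantive differences.
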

	\begin{proof}We begin with the lower bounds. Let $Q$ and $P$ be the transition matrices of the lazy walk on the $d$-dimensional integer lattice $\mathbb{Z}^d$ and the $d$-dimensional torus $\mathds{T}_d$, respectively. By \cite[Theorem 5.1 (15)]{HSC}, for each $d\geq 1$ there exist constants $C>0$ such that for any $t\geq 1$ and $v \in \mathbb{Z}^d$ we have  $ Q_{v,v}^t \geq  \left(C/t\right)^{d/2}$. The lower bounds for the torus then follow by summation since for any $t\geq 0$ and $v\in V(\mathds{T}_d) $ we have $ P_{v,v}^t \geq  Q_{v,v}^t$. 
	
	We now prove the three cases for the upper bounds separately.

 \noindent\textbf{Case (i)} [$d=1$, cycle]\textbf{:} For a lazy random walk in the cycle it holds for $t\geq 1$ that $P_{v,v}^t \leq \frac{1}{n}+ \frac{c}{\sqrt t}$, for some constant $c>0$ \cite[Theorem 4.9]{LOreturns}. Now, for any $C\geq 1$, and any $1\leq t\leq Cn^2$, it holds for some $c'>0$ that $$\sum_{i=0}^tP_{v,v} \leq c' \sqrt{t}+ C(t+1)/n^2 = O(\sqrt{t}).$$
 
 \noindent\textbf{Case (ii)} [$d=2$]\textbf{:} A random walk of length $t$ in two dimension can be generated as follows; first sample a random integer $x$ according to $B(t) \sim \bin{t}{1/2}$, where $x$ is the number of lazy random walk steps the walk takes in the first dimension (so $t-x$ is the number of lazy random walk steps the walk takes in the second dimension). Hence, if $Q$ denotes the law of a lazy random walk on a cycle with $\sqrt{n}$ vertices, we have $P ^t_{v,v} = \Ex{Q_{v,v}^{B(t)} Q_{v,v}^{t-B(t)}}. $
Observe that for real functions $f$ and $g$, non-increasing and non-decreasing respectively, random variables $B$ and $B'$, where $B'$ is an independent copy of $B$, we have 
\[2\left(\Ex{f(B)g(B)}-\Ex{f(B)}\Ex{g(B)}\right) = \Ex{(f(B)-f(B'))(g(B)-g(B')}\leq 0.\] 
Note that $s\to Q_{v,v}^s$ is non-increasing \cite[Exercise 12.5]{levin2009markov}, so $s\to Q_{v,v}^{t-s}$ is non-decreasing, hence   \[P ^t_{v,v} = \Ex{Q_{v,v}^{B(t)} Q_{v,v}^{t-B(t)}}\leq \Ex{Q_{v,v}^{B(t)}}\cdot \Ex{Q_{v,v}^{t-B(t)}} \leq \Ex{Q_{v,v}^{B(t)}}^2.\]

As $s\to Q_{v,v}^s$ is non-increasing, we have $Q_{v,v}^s \leq Q_{v,v}^{\lfloor t/4 \rfloor} $ for any $s\geq t/4$. Then, for $t\geq 1$ but $t = \mathcal{O}(n)$, it holds $Q_{v,v}^{\lfloor t/4 \rfloor} \leq c_1/\sqrt{t}$ for some constant $c_1>0$ by \cite[Theorem 4.9]{LOreturns}. Thus, by Hoeffding's bound, \[ \Ex{Q_{v,v}^{B(t)}}^2 \leq \frac{c_1^2}{t} +
 \Pr{ \bin{t}{1/2} \leq t/4 }
\leq \frac{ c_1^2}{t} + \exp(-t/2) \leq \frac{  c_2}{t},\] for some constant $c_2>0$. The statement of Case (ii) then follows by summation.  

\comment{ OLD VERSION OF CASE ii:  A random walk of length $t$ in two dimension can be generated as follows; first sample a random integer $x$ according to $\bin{t}{1/2}$, where $x$ is the number of steps the walk takes in the first dimension (so $t-x$ is the number of steps the walk takes in the second dimension). Hence with $Q$ denoting the law of a random walk on a cycle with $\sqrt{n}$ vertices,\begin{align*}
 P_{v,v}^t &= \sum_{x=0}^{t} \Pr{ \bin{t}{1/2} = x} \cdot Q_{v,v}^x \cdot Q_{v,v}^{t-x} \\
 &\leq \sum_{x=t/4}^{3t/4} \Pr{ \bin{t}{1/2} = x} \cdot Q_{v,v}^x \cdot Q_{v,v}^{t-x} 
 + 2 \sum_{x=0}^{t/4-1} \Pr{ \bin{t}{1/2} = x} .\end{align*}
each cycle has length $\sqrt{n}$, and since $t \leq n$, we know from the $d=1$ case (and monotonicity of $Q_{v,v}^x$ in $x$ provided $x \in [t/4,3t/4]$) that $Q_{v,v}^x \cdot Q_{v,v}^{t-x} \leq Q_{v,v}^{t/4} \cdot Q_{v,v}^{t/4} \leq ( c_1 / \sqrt{t/4})^2$ for some constant $c_1 > 0$, so
 \[ P_{v,v}^t \leq \frac{c_1^2}{t/4} +
 \Pr{ |\bin{t}{1/2} - t/2| \geq t/4 }
\leq \frac{4 \cdot c_1^2}{t} + 2 \cdot \exp(-\Omega(t) ) \leq \frac{5 \cdot c_1^2}{t},\]
  where the second inequality follows from the Hoeffding bound. The result for $d=2$ then follows from summation.}
 
 \noindent\textbf{Case (iii)} [$d\geq 3$]\textbf{:} By \cite[Proposition 10.13]{levin2009markov} there is a constant $C>0$ such that $\Exu{u}{\tau_v} \leq C \cdot n$ for all $u,v \in V$. Hence by Markov's inequality, 
\[
 \Pru{u}{ \tau_v \geq 2C \cdot n } \leq 1/2.
\]
Therefore, for $t = 2 C \cdot n$, and by averaging over the start vertex,
\[
 1/2 \leq \Pru{\pi}{ \tau_v \leq t } = \frac{\Exu{\pi}{ N_{v}^t} }{ \Exu{\pi}{ N_{v}^t \, \mid \, N_{v}^t \geq 1} }
 \leq \frac{ t \cdot 1/n}{1/2 \cdot \sum_{i=0}^{t/2} P_{v,v}^i,
}
 \]
and rearranging yields
\[
 \sum_{i=0}^{t} P_{v,v}^i \leq 2 \sum_{i=0}^{t/2} P_{v,v}^i \leq 16 C,
\]
where the first inequality holds by monotonicity of $P_{v,v}^i$ in $i \geq 0$. \end{proof}

	\subsection{Returns in the Binary Tree}\label{sec:treeret}

	To prove the results for the binary tree we must control the return probabilities of single random walks, we gather the results required for this task here. 
	
Recall that if $R(x,y)$ is the effective resistance between $x$ and $y$, (see \cite[Section 9.4]{levin2009markov}), then for any $x,y\in V$ by \cite[Proposition 9.5]{levin2009markov}:
	\begin{equation}\label{resistance}
	\Pru{x}{\tau_y< \tau_x^+} = \frac{1}{\deg(x)R(x,y)}. 
	\end{equation}

		\begin{lemma}\label{rootreturns}
		Let $r$ be the root of a binary tree of height $h\geq 4$, and $\mathcal{L}$ be the set of leaves. Then, for any $T\leq n$, $\sum_{t=0}^T P_{r,r}^t\leq 2 + 6 T/n $. Additionally $\Pru{\pi}{\tau_{r}\leq \tau_{\mathcal{L}}  } \leq (8\log n)/n$.
	\end{lemma}
	\begin{proof}
	    Identify $\mathcal{L}$ as a single vertex and observe that
	$R(r,\mathcal{L}) = \sum_{i=1}^h(1/2)^i = 1- 1/2^{h+1}  $, thus $\Pru{r}{\tau_{\mathcal{L}}< \tau_r^+}\geq 1/2$. Once the walk hits $\mathcal{L}$ equation \eqref{resistance} yields $\Pru{\mathcal{L}}{\tau_r< \tau_\mathcal{L}^+} = \left(2^h(1-1/2^{h+1})\right)^{-1}\leq 3/n$. Let $X_t$ be a random walk on the binary tree starting from vertex $r$, i.e. $X_0 = r$, and let $Z_t = \sum_{s=0}^t \ind{\{X_s=r\}}$ denotes the number of times $X_t$ hits the root up to time $t$. Define the stopping time $L_i$ as the $i$-th time the random walk hits some leaf, i.e.:  $L_1= \min\{t\geq 0: X_t \in \mathcal L\}$, and for $i\geq 2$, $L_i=\min\{t>L_{i-1}: X_t \in \mathcal L\}$. For $i\geq 1$ let $C_{i} = \sum_{t=L_i
	+1}^{L_{i+1}} \ind{\{X_t = r\}}$ denote the number of visits to the root between times $L_i$ and $L_{i+1}$, and also let $C_0 = \sum_{t=0}^{L_1} \ind{\{X_t=r\}}$. Since $L_i-L_{i-1}\geq 1$, then  have that\[
	   \sum_{t=0}^T P_{r,r}^t= \E(Z_T)\leq \E(C_0)+\sum_{i=1}^T \E(C_i).\]
	Now, $\Ex{C_0}\leq 2$ since $\Pru{r}{\tau_{\mathcal{L}}< \tau_r^+}\geq 1/2$ and $\Ex{C_i|C_i\geq 1}= \Ex{C_0}$ as in the interval $C_0$ the walk starts from the root. Also for $i\geq 1$, $\Pr{C_i\geq 1}=\Pru{\mathcal{L}}{\tau_r< \tau_\mathcal{L}^+}\leq 3/n$, and thus $$\Ex{C_i} \leq (3/n)\Ex{C_i|C_i\geq 1}= (3/n)\Ex{C_0} \leq 6/n,$$
	concluding that $\sum_{t=0}^T P_{r,r}^t\leq 2 + 6T/n$.
	
	For the second result, let $v_i$ be a vertex at distance $0\leq i\leq h$ from the leaves. Since the walk moves up the tree with probability $1/3$ and down the tree w.p. $2/3$ we have $\Pru{v_i}{\tau_r< \tau_\mathcal{L}} = \frac{2^i-1}{2^h-1} $, since this is the classical (biased) Gambler's ruin problem \cite[Section 17.3.1]{levin2009markov}. It follows that 
	\[\Pru{\pi}{\tau_r< \tau_\mathcal{L}} =\sum_{i=1}^h \Pru{v_i}{\tau_r< \tau_\mathcal{L}}\cdot 2^{h-i}\pi(v_i)\leq   \sum_{i=1}^h \frac{2^i}{2^h-1}\cdot 2^{h-i}\cdot \frac{3}{2(n-1)} = \frac{3h}{n-1}, \]where the last equality holds as $2^{h+1}-1=n$ (and so $2^{h}=(n-1)/2$). Since $h\geq 4$, the number of vertices is at least 15, and then  $h = \log_2(n+1) -1\leq 2\log n$. Finally, since $n-1\geq (6/8)n$, for $n\geq 8$, it holds that $\Pru{\pi}{\tau_r< \tau_\mathcal{L}} \leq \frac{3h}{(n-1)} \leq 8  \log n/n $. 	\end{proof}

	\begin{lemma}[{\cite[eq 8.21]{JHFrogs2018}}]\label{treereturnslowerbound}
		Let $u$ be any leaf in the binary tree. Then for any $t\geq 1$, $\sum_{i=0}^tP_{u,u}^t = \BT{ 1+\log (t)+t/n}$.
	\end{lemma}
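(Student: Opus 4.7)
The plan is to analyse returns to the leaf $u$ through excursions, splitting the time horizon into three natural regimes driven by the geometry of the binary tree. Throughout, let $h = \log_2 n$, and note that $u$ has degree $1$, so $\pi(u) = 1/(2(n-1)) = \Theta(1/n)$. Let $v$ denote the parent of $u$; the sibling of $u$ and the grandparent of $u$ (when these exist) are the two other neighbours of $v$.

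The strategy is to show separately the two matching bounds:
\begin{equation*}
\sum_{i=0}^{t} P_{u,u}^i = \Omega(1 + \log t + t/n)
\quad\text{and}\quad
\sum_{i=0}^{t} P_{u,u}^i = O(1 + \log t + t/n).
\end{equation*}
The $\Theta(1)$ term is immediate since $P_{u,u}^0 = 1$ and, by laziness, $P_{u,u}^i \geq (1/2)^i$ gives an absolute constant contribution from very small $i$. The $\Theta(t/n)$ term comes from the stationary mass: for $i$ beyond a polynomial in $n$, $P_{u,u}^i = (1+o(1))\pi(u)$, so the contribution from $i \in [\mathrm{poly}(n), t]$ is $\Theta(t/n)$ whenever $t \gtrsim n$, and $O(1)$ otherwise.

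The heart of the argument is the $\Theta(\log t)$ term, which I would obtain by analysing excursions from $u$. Consider the first coordinate of the walk measured as $D_i := \operatorname{dist}(u, X_i)$. While the walk does not leave the spine from $u$ to the root, $D_i$ evolves as a lazy random walk on $\{0,1,\ldots,h\}$ that, from any interior vertex, steps toward $u$ with probability $1/6$ and away with probability $2/6$ (the remaining $3/6$ being lazy or moving into a sibling subtree that only delays the return). Branching into sibling subtrees of depth $k$ only adds to hitting times of $u$ and does not change the order of magnitude of the excursion calculation, because the subtree rooted at any ancestor of $u$ at distance $k$ has $\Theta(2^k)$ vertices and the walk restricted to that subtree has stationary mass $\Theta(1)$ on $u$'s side versus the rest, by the resistance calculation $R(u, \text{ancestor at distance } k) = \Theta(k)$ from Lemma~\ref{rootreturns} (extended via the series law). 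In particular, from $v$, the probability the walk reaches distance $k$ from $u$ before returning to $u$ is $\Theta(2^{-k})$, and conditional on reaching distance $k$, the return time to $u$ is $\Theta(2^k)$ by the biased walk analysis. Hence excursions of length $\Theta(2^k)$ contribute $\Theta(1)$ returns each, and the total number of such excursions in time $t$ is $\Theta(t \cdot 2^{-k} / 2^k) \cdot \Theta(2^k)$ normalised against total time, yielding $\Theta(1)$ returns per ``scale'' $k$. Summing over all scales $k = 0, 1, \ldots, \lfloor\log_2 t\rfloor \wedge h$ gives $\Theta(\log t)$ returns up to time $t \wedge n$.

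The main obstacle is the coupling/domination argument that reduces the tree walk to a biased random walk on a path; sibling subtrees complicate the picture because the walk can wander off the spine and return later, but the key point is that each sibling subtree of depth $k$ acts as a bounded-time delay that does not affect scaling. An alternative and cleaner route, which I would pursue first, is to cite \cite[eq.\ 8.21]{JHFrogs2018} directly, as the paper already does; the above sketch is essentially an account of why that equation holds, and the recursion through the tree (conditioning on the first excursion depth) can be made into a rigorous induction on $h$ producing both the upper and lower bounds simultaneously, with the $\log t$ term arising from the geometric $\sum_k 2^{-k} \cdot 2^k = 1$ per scale and $\log t$ scales fitting inside time $t$.
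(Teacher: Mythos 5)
Your primary route (cite \cite[eq.~8.21]{JHFrogs2018}) is exactly what the paper does: the lemma is stated in the appendix with that citation and no proof is given. So as a formal matter your proposal matches the paper.

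However, the excursion sketch you offer as justification contains a concrete quantitative error that would sink any attempt to make it self-contained. You assert that from the parent $v$ the walk reaches distance $k$ from $u$ before returning to $u$ with probability $\Theta(2^{-k})$, and that conditional on this the return time is $\Theta(2^k)$ ``by the biased walk analysis.'' Neither is right. Because the graph is a tree, the effective resistance between $u$ and its ancestor $a_k$ at distance $k$ is exactly $k$ (a series of $k$ unit edges, with no parallel paths), so by \eqref{resistance} the escape probability is $\Pru{u}{\tau_{a_k}<\tau_u^+}=1/(d(u)\,k)=1/k$ --- polynomially, not exponentially, small. The gambler's-ruin intuition giving $2^{-k}$ fails because the distance-to-$u$ process is not a birth--death chain: excursions into sibling subtrees reflect the walk back toward the spine, and this is precisely what inflates the return probability from $\Theta(2^{-k})$ to $1-\Theta(1/k)$. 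Likewise $\Exu{a_k}{\tau_u}=\Theta(kn)$ (apply the standard edge-by-edge hitting-time formula for trees in the downward direction), not $\Theta(2^k)$. The correct accounting is: the Green's function identity gives that the expected number of visits to $u$ before first hitting $a_k$ is $d(u)R(u,a_k)=k$ exactly, while $\Exu{u}{\tau_{a_k}}=\sum_{j<k}\Theta(2^{j})=\Theta(2^{k})$ because each step up the spine from $a_j$ costs an expected $\Theta(2^{j})$ (the size of the subtree below $a_j$). Taking $k\approx\log_2 t$ then yields the $\log t$ term, with each ``scale'' contributing $\Theta(1)$ returns because $R(u,a_{k+1})-R(u,a_k)=1$ --- not because of your cancellation $2^{-k}\cdot 2^{k}=1$, which is a coincidence of the wrong two quantities. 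Turning this heuristic into matching upper and lower bounds on $\sum_{i\le t}P_{u,u}^i$ still requires work (one needs anticoncentration of $\tau_{a_k}$ for the lower bound and a laziness/monotonicity argument for the upper bound), which is presumably why the paper, like you, defers to \cite{JHFrogs2018}.
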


	\begin{lemma}\label{treeReturnsUpper}
		Let $\ell \in V(\mathcal{T}_n)$ be a leaf. Then for any vertex $u\in V(\mathcal{T}_n)$ and $t\geq 0$  we have $\sum_{i=0}^{t}P_{u,u}^i \leq 6\cdot  \sum_{i=0}^{t}P_{\ell,\ell}^i$.
	\end{lemma}
	\begin{proof}
		Clearly, if $u$ is a leaf itself, then there is nothing to prove. Hence assume that $u$ is an internal node. 
		Note that $\sum_{s=0}^{t} P_{u,u}^s$ is the expected number of visits to $u$ of a random walk of length $t$ starting from $u$. Divide the random walk of length $t$ into two epochs, where the second epochs starts as soon as a leaf in the subtree rooted at $u$ is visited. We claim that in the first epoch the expected number of visits to $u$ is constant. To show this we identify all the leaves of the subtree rooted at $u$ as a single vertex $\mathcal L$, then by Equation~\eqref{resistance} we have
		\begin{align*}
		\Pru{u}{\tau_{\mathcal L}<\tau_u^+} = \frac{1}{\deg(u)R(u,{\mathcal L})} \geq \frac{1}{3} 
		\end{align*}
		since $R(u,\mathcal L)=\sum_{i=1}^h(1/2)^i\leq 1$. Therefore, in expectation we need at most $3$ excursions to reach $\mathcal L$, therefore the expected number of visits to $u$ in the first epoch is at most $3$. Then, for any leaf $\ell\in \mathcal{L}$, the expected number of visits to $u$ in the second epoch satisfies
		\[
		\sum_{s=1}^{t} P_{\ell,u}^{s} 
		\leq 3 \sum_{s=1}^{t} P_{u,\ell}^{s} 
		\leq 3 \sum_{s=0}^{t} P_{\ell,\ell}^{s},
		\]
		where the first inequality holds by reversibility and the second inequality 
		holds since the expected number of visits to a vertex is maximised if a random walk starts from that vertex. Adding up the expected number of visits from the two epochs yields the claim.
	\end{proof}

\end{document}